\documentclass[11pt]{article}

\usepackage[utf8]{inputenc}
\usepackage[T1]{fontenc}
\usepackage{amsmath, amssymb, amsthm}
\usepackage{mathtools}
\usepackage[margin=1in]{geometry}
\usepackage{hyperref}
\usepackage{enumitem}
\usepackage{xcolor}
\usepackage{tikz}
\usepackage{tcolorbox}
\usetikzlibrary{arrows.meta, decorations.pathreplacing, decorations.pathmorphing, calc, positioning, shapes.geometric, patterns, matrix, fit, backgrounds}
\usepackage{bm}

\usepackage{titling}

\tcbuselibrary{breakable, skins}

\newtcolorbox{algbox}[1][]{
  enhanced, breakable,
  colback=blue!4, colframe=blue!50!black,
  fonttitle=\bfseries, title={Algebraic background: #1},
  boxsep=4pt, left=8pt, right=8pt, top=4pt, bottom=4pt,
}

\newtcolorbox{blackbox}[1][]{
  enhanced, breakable,
  colback=gray!6, colframe=gray!55!black,
  fonttitle=\bfseries, title={#1},
  boxsep=4pt, left=8pt, right=8pt, top=4pt, bottom=4pt,
}

\hypersetup{
  colorlinks=true,
  linkcolor=blue,
  citecolor=blue,
  urlcolor=blue
}
\definecolor{zerogreen}{RGB}{31,132,80}
\definecolor{badgray}{RGB}{155,155,155}
\definecolor{bandblue}{RGB}{37,99,165}

\newcommand{\refDPI}{P1}
\newcommand{\refIZ}{P2}

\newtheorem{theorem}{Theorem}[section]
\newtheorem{lemma}[theorem]{Lemma}
\newtheorem{proposition}[theorem]{Proposition}
\newtheorem{corollary}[theorem]{Corollary}

\theoremstyle{definition}
\newtheorem{definition}[theorem]{Definition}

\theoremstyle{remark}
\newtheorem{remark}[theorem]{Remark}
\newtheorem{example}[theorem]{Example}

\DeclareMathOperator{\rank}{rank}

\newcommand{\bbR}{\mathbb{R}}

\newcommand{\bbE}{\mathbb{E}}

\newcommand{\mi}{\mathrm{MI}}

\newcommand{\FMIname}{Frobenius Dependence Index}        
\newcommand{\FMIabbr}{FDI}                                  
\newcommand{\FMImeasure}{F_{\mathrm{FDI}}}                  
\newcommand{\FMImech}{\mathcal{M}_{\mathrm{FDI}}}           

\title{Finite-Sample Unbiasedly Estimable Information Monotones}

\author{Yuqing Kong}
\date{}

\begin{document}

\maketitle

\begin{abstract}
Which measures of statistical dependence can be estimated unbiasedly from a
fixed number of samples while satisfying the data processing inequality
(DPI)? For finite alphabets, finite-sample unbiased estimability is equivalent
to polynomial dependence on the underlying distribution, and the minimum
degree of a polynomial representation equals the exact sample complexity.

Let \(U\in\Delta_{n,m}\) be the joint-distribution matrix, with stochastic
post-processing applied to the \(n\)-state row variable. Our main
classification concerns fixed-alphabet DPI, under which the processed
alphabet size remains \(n\). We study functionals that vanish whenever the
two variables are independent.

We first prove a structural result that does not require polynomiality. For
every \(n,m\ge2\), any functional that satisfies left-sided DPI, vanishes
under independence, and extends to a \(C^1\) function on a neighborhood of
the probability simplex must also vanish whenever \(U\) has a zero row.
Thus even mild first-order regularity at distributions with zero
probabilities has strong consequences. If DPI is strengthened to allow the
processed alphabet size to change, this zero-row principle forces every such
cross-dimensional family to be identically zero.

For fixed-alphabet DPI, the same principle gives a complete collapse when
\(n>m\): every such \(C^1\)-extendable functional is identically zero. In the
remaining regimes, polynomiality yields a sharp algebraic classification.
When \(n=m\), every nonzero admissible polynomial is divisible on the simplex
by \((\det U)^2\), so its degree is at least \(2n\), with equality attained by
\((\det U)^2\). When \(n<m\), every admissible polynomial lies, modulo the
simplex relation, in the square of the ideal generated by the \(n\times n\)
maximal minors of \(U\); the minimum degree is again \(2n\), attained by the
Gram determinant \(\det(UU^\top)\).

These results also give a sharp obstruction to faithful detection of
independence: a \(C^1\)-extendable functional satisfying left-sided DPI can
vanish exactly at independence only when the processed variable is binary. We further show that
recognizing DPI for polynomial functionals is coNP-hard when the alphabet
size is part of the input, and derive corresponding impossibility and exact
task-complexity results for multi-task peer prediction.
\end{abstract}

\section{Introduction}\label{sec:intro}

Which measures of statistical dependence admit an unbiased estimator from a
fixed number of samples while satisfying the data processing inequality? This question brings together multiple requirements that are individually natural
but, as we show, strongly constrain one another.

A functional \(F\) is
\emph{\(d\)-sample unbiasedly estimable} if there exists a statistic
\(\widehat F\), based on \(d\) i.i.d.\ samples, such that
\[
    \mathbb{E}_{P}[\widehat F] = F(P)
\]
for every underlying distribution \(P\). We call \(F\)
\emph{finitely unbiasedly estimable} if it is \(d\)-sample unbiasedly
estimable for some finite \(d\).

On a finite domain, this statistical property has an exact algebraic
characterization: a functional is \(d\)-sample unbiasedly estimable if and
only if its restriction to the probability simplex can be represented by a
polynomial of degree at most \(d\)
(Proposition~\ref{prop:finite-sample-polynomial}). Thus,
\[
    d\text{-sample unbiased estimability}
    \quad\Longleftrightarrow\quad
    \text{polynomial degree at most }d.
\]
The minimal polynomial degree is therefore the sample-complexity parameter for the unbiased estimation problem.

We ask whether such a polynomial functional can also satisfy two natural
requirements for a measure of statistical dependence: 

\begin{itemize}
    \item it is zero whenever \(X\) and \(Y\) are independent
    (\emph{independence-zero}); and
    \item it cannot increase when \(X\) is replaced by a randomized
post-processing of \(X\) (\emph{left-sided data processing inequality (DPI)}). Write \(U\in\Delta_{n,m}\) for the \(n\times m\) joint-distribution
matrix of \((X,Y)\), where \(\Delta_{n,m}\) denotes the corresponding
probability simplex. We distinguish two versions of this
requirement. Under left-sided \emph{fixed-alphabet DPI}, for
\(F:\Delta_{n,m}\to\mathbb R\) we require
\[
  F(TU)\le F(U)
\]
for every \(n\times n\) column-stochastic matrix \(T\).
Under left-sided \emph{dimension-changing DPI}, one instead considers a family
\(\{F_{n,m}\}_{n,m\ge2}\) across alphabet sizes and requires
\[
  F_{k,m}(TU)\le F_{n,m}(U)
\]
for every \(k,n,m\ge2\) and every \(k\times n\) column-stochastic
matrix \(T\). Unless explicitly stated otherwise, throughout the paper
\emph{DPI} refers to the fixed-alphabet version. If the analogous
fixed-alphabet condition also holds on \(Y\), we speak of two-sided DPI.
\end{itemize}

The data processing inequality formalizes the principle that stochastic
post-processing cannot create information and is widely regarded as a
defining property of an information measure
\cite{CoverThomas,csiszar2004information}. For example, Shannon mutual
information,
\[
    I(X;Y)
    =
    \sum_{x,y}
    P(x,y)
    \log
    \frac{P(x,y)}{P(x)P(y)},
\]
satisfies both independence-zero and two-sided dimension-changing DPI. Its non-polynomial dependence on
the probabilities, however, prevents unbiased estimation from any fixed
finite number of samples \cite{Paninski2004}. More generally, dependence
measures obtained from \(f\)-divergences satisfy two-sided dimension-changing DPI
\cite{ali1966general}, but are typically non-polynomial functions of the
joint distribution.

Our central question is therefore:
\begin{center}
    \itshape
    Can finite-sample unbiased estimability coexist with
    independence-zero and DPI?
\end{center}

Independence-zero alone is easy to achieve: for example, polynomials
constructed from the \(2\times2\) minors of the joint-distribution matrix
vanish whenever the matrix has rank one. The question is how strongly DPI
constrains such polynomials. Our main focus is left-sided fixed-alphabet DPI. We also show that the dimension-changing version is substantially more restrictive: even under the condition weaker than polynomiality, it forces a complete collapse.

\paragraph{The determinant construction and its limitation.} In the square case
\(\lvert X\rvert=\lvert Y\rvert=n\), prior work
\cite{DBLP:conf/soda/Kong20} considered the polynomial functional
\[
    F(U) = (\det U)^2.
\]
This functional satisfies independence-zero and fixed-alphabet DPI and admits an unbiased
estimator from \(2n\) samples.

The determinant construction nevertheless has a degeneracy: it
vanishes on every singular joint-distribution matrix, including many
distributions for which \(X\) and \(Y\) are dependent. For instance, a \(3\times3\) joint distribution may have rank two without
being the product of its row and column marginals. Such a distribution has
strictly positive Shannon mutual information, while \((\det U)^2\) assigns it
value zero.

This observation motivates a stronger zero-set requirement. We call a
dependence functional \emph{independence-faithful} if
\[
    F(U)=0
    \quad\Longleftrightarrow\quad
    X\text{ and }Y\text{ are independent}.
\]
Thus, independence-faithfulness requires the functional to detect every
departure from independence, whereas independence-zero requires only the
forward implication. This property is important for independence testing.

\paragraph{Results.}
Let \(U\) be an \(n\times m\) joint-distribution matrix, \(n,m\ge2\), with
stochastic post-processing acting on the \(n\)-state row variable.
Unless explicitly stated otherwise, left-sided DPI below refers to
fixed-alphabet post-processing by \(n\times n\) column-stochastic channels.
Our first structural results do not require finite-sample estimability or
polynomiality.

We call a functional \(F:\Delta_{n,m}\to\mathbb R\)
\emph{\(C^1\)-extendable} if it extends to a \(C^1\) function on an
open neighborhood of \(\Delta_{n,m}\), where \(C^1\) means that all
first partial derivatives of the extension exist and are continuous.
Thus the first-order behavior of \(F\) extends continuously through the
boundary of the simplex, including points where some probabilities vanish.
Every polynomial functional, and hence every finitely unbiasedly estimable
functional, is \(C^1\)-extendable.

\emph{Zero-row vanishing principle.}
A key consequence is that, for every \(n,m\ge2\), left-sided DPI
(\refDPI{}) and independence-zero (\refIZ{}) force any \(C^1\)-extendable
functional to vanish whenever the processed variable has a zero-probability
state:
\[
    n,m\ge2,\qquad
    F\ \text{is \(C^1\)-extendable},\qquad
    \refDPI{}+\refIZ{},\qquad
    U\text{ has a zero row}
    \quad\Longrightarrow\quad
    F(U)=0.
\]
This holds without any polynomiality assumption
(Theorem~\ref{thm:zero-row-vanishing}).

\emph{Dimension-changing DPI.}
The zero-row principle also shows that dimension-changing DPI is much more
restrictive. If post-processing is allowed to change the processed alphabet
size, then expansibility---invariance under adjoining zero-probability
states---is automatic, and every \(C^1\)-extendable independence-zero
family satisfying left-sided dimension-changing DPI is identically zero
(Corollary~\ref{cor:expansibility-collapse}).

\emph{Fixed-alphabet DPI.} Returning to fixed-alphabet DPI, when \(n>m\) the zero-row principle
strengthens to a complete collapse. Every \(n\times m\) joint distribution
can be obtained by stochastic post-processing from one having a zero row.
Consequently,
\[
    n>m,\qquad
    F\ \text{is \(C^1\)-extendable},\qquad
    \refDPI{}+\refIZ{}
    \quad\Longrightarrow\quad
    F\equiv0\ \text{on }\Delta_{n,m}
\]
(Theorem~\ref{thm:normalized}).
Thus the obstruction is not specific to polynomials: the
\(C^1\)-extendable class also contains many non-polynomial functions,
including exponentials and trigonometric functions of the matrix entries.
Boundary regularity is essential. Shannon mutual information, for example,
satisfies DPI and independence-zero but is not \(C^1\)-extendable: its
first-order response becomes singular or discontinuous when the support
changes.

For finitely unbiasedly estimable functionals, finite-sample estimability is
equivalent to polynomiality, which yields a sharper algebraic classification
in the remaining regimes. The complete trichotomy is
\[
\begin{array}{c|c|c}
    \text{processed-alphabet regime}
    &
    \text{forced structure}
    &
    \text{sample complexity}
    \\ \hline
    n>m
    &
    F\equiv0
    &
    \text{no nonzero functional}
    \\[2mm]
    n=m
    &
    \overline F\in(\det U)^2
    &
    2n
    \\[2mm]
    n<m
    &
    \overline F\in\overline I_n^{\,2}
    &
    2n
\end{array}.
\]

Here \(\tau(U):=\sum_{i,\alpha}u_{i\alpha}\),
\(\overline R_{n,m}:=\mathbb R[u_{i\alpha}]/(\tau-1)\), and
\(\overline F\) denotes the residue class of \(F\) modulo the simplex
relation \(\tau-1\). For \(n\le m\), \(\overline I_n\subseteq
\overline R_{n,m}\) is the ideal generated by the residue classes of the
\(n\times n\) maximal minors of \(U\). The first row of the table in fact
holds for every \(C^1\)-extendable functional, without polynomiality.

In the square case, this means every admissible polynomial has the form
\(F(U)=(\det U)^2G(U)\) on \(\Delta_{n,n}\), for some polynomial function
\(G\). Hence every nonzero such functional has degree at least \(2n\), and
the bound is attained by \((\det U)^2\)
(Theorem~\ref{thm:nxn}).
When \(n<m\), membership in \(\overline I_n^{\,2}\) similarly implies
degree at least \(2n\); this bound is tight, as witnessed by the Gram
determinant \(\det(UU^\top)\)
(Theorem~\ref{thm:non-tall}).

Classically, determinant and maximal-minor structure arise from requiring
invariance under algebraic group actions---most notably the left action of
\(\mathrm{SL}_n\)~\cite{Weyl1939ClassicalGroups}. Here, by contrast,
determinantal-square structure is forced by one-sided stochastic
data-processing monotonicity together with independence-zero.
\paragraph{An impossibility triangle.} Independence-faithfulness means that the functional vanishes if and only if the two variables are independent. The preceding results reveal a sharp obstruction to combining data
processing inequality with faithful detection of independence. If \(n\ge3\), then no
functional can simultaneously satisfy
\[
\begin{gathered}
\text{\(C^1\)-extendability},\\
\text{left-sided DPI},
\qquad
\text{independence-faithfulness}.
\end{gathered}
\]
Indeed, independence-faithfulness implies independence-zero, while the
zero-row principle forces every such functional to vanish on distributions
with a zero row. When \(n\ge3\), some of these distributions are dependent,
so independence-faithfulness is impossible.

The obstruction is sharp. When \(n=2\), the Gram-determinant functional
\(F_{\mathrm{Gram}}(U):=\det(UU^\top)\) is polynomial,
independence-faithful, and satisfies left-sided DPI. It has degree four and
therefore admits an unbiased estimator from four samples; the degree lower
bound shows that four samples are optimal. Consequently, under two-sided
DPI, a finitely unbiasedly estimable independence-faithful functional exists
if and only if \(n=m=2\). In that binary--binary case,
\(F(U)=(\det U)^2\) is independence-faithful, satisfies two-sided DPI, and
has optimal sample complexity four.

Thus the faithfulness obstruction extends beyond finite-sample estimable
functionals to the broader class of dependence measures with regular
first-order behavior across support changes.
\paragraph{Removing the independence-zero normalization for two-sided DPI.}
For monotones with two-sided DPI, the independence-zero condition is
essentially only a normalization.\footnote{This observation relies on
two-sided DPI. Under only left-sided DPI, for example, the
column sums of \(U\) are preserved by every admissible channel, so a polynomial
monotone may depend nontrivially on the column-sum vector
\(\mathbf{1}^{\top}U\). Analogously, under only right data processing it may
depend on the row-sum vector \(U\mathbf{1}\).}
Indeed, any two rank-one distributions can be mapped into each other by
suitable stochastic channels. Hence for an arbitrary two-sided DPI monotone, all rank-one distributions have a common value \(c\), so \(\widetilde F=F-c\) is independence-zero. If, in addition, \(F\) is \(C^1\)-extendable and \(n\ne m\), apply the tall-case theorem to the larger alphabet (transposing when necessary) to obtain \(F\equiv c\). If \(F\) is polynomial and \(n=m\), the square-case theorem gives \(F=c+(\det U)^2G\) on the simplex.

\paragraph{Escaping the impossibility through restricted channels.}

The preceding impossibility and lower-bound results require DPI under every
stochastic channel. If the admissible post-processing operations are
restricted, independence-faithful constant-sample constructions become
possible for arbitrary alphabet sizes. We study the degree-four polynomial functional
\(
    \FMImeasure(U)
    :=
    \left\|
        U-r(U)c(U)^\top
    \right\|_F^2,
\)
the squared Frobenius distance between the joint distribution and the product
of its marginals, where \(r(U):=U\mathbf1_m\) and
\(c(U):=U^\top\mathbf1_n\) are the row and column marginals. We call this
functional the \emph{\FMIname{}} (\FMIabbr). It is independence-faithful and
satisfies DPI for the convex hull of the rank-one and doubly stochastic
channels
(Proposition~\ref{prop:frob-dpi}). Proposition~\ref{prop:fmi} constructs an
unbiased estimator based on four samples, yielding the four-task
peer-prediction mechanism in Corollary~\ref{cor:four-tasks} under this
restricted deviation model.

This positive result identifies a natural way around the exact impossibility:
rather than relaxing unbiasedness or independence-faithfulness, one may
restrict the class of admissible post-processing operations.

\paragraph{A computational barrier.}

The universal determinant-square factor does not by itself yield a complete
structural characterization of DPI-monotone polynomials. In the
square case, for every row-symmetric polynomial \(P\), there exists
\(M\in\mathbb{R}\) such that
\(
F(U)
=
(\det U)^2\bigl(P(U)+M\bigr)
\)
satisfies one-sided DPI and independence-zero (Proposition~\ref{prop:constant-compensation-full-dpi}). This construction shows that
the class of admissible functionals is substantially richer than the
determinant-square measure alone. At the same time, determining which
additional factors preserve DPI is computationally difficult.

When the alphabet size is part of the input, deciding whether a succinctly
represented polynomial satisfying independence-zero also satisfies DPI is
coNP-hard. In the square case, the hardness already holds for functionals of
the form
\(
F_A(U)
=
(\det U)^2 c(U)^\top A c(U),
\)
where \(c(U)\) is the column marginal and \(A\) is a rational symmetric
matrix (Theorem~\ref{thm:dpi-recognition-hard}). Thus, even after extracting
the universal determinant-square factor, recognizing which remaining factors
preserve DPI contains the problem of deciding copositivity. Together, the
large family of admissible constructions and the hardness of DPI recognition
indicate that a complete structural characterization beyond the universal
determinantal factor is likely to be challenging.

\subsection{Connection to multi-task peer prediction.}

The same structural question arises in multi-task peer prediction
\cite{dasgupta2013crowdsourced}, a mechanism-design problem motivated by
platforms that aggregate subjective judgments, such as movie ratings,
restaurant reviews, crowdsourced labels, and peer grading, without observing
the ground truth.

A principal assigns the same collection of tasks to two agents, Alice and
Bob. Across tasks, the agents' signal pairs are modeled as i.i.d.\ samples
from a joint distribution of two random variables \((X,Y)\), and the agents
are paid according to their reported signals. Two natural requirements
determine the relevant population functional. First, garbling an agent's
signal through a stochastic channel before reporting must not increase the
agent's expected payment. This is precisely DPI applied to expected payment
as a functional of the joint distribution and corresponds to dominant
truthfulness. Second, statistically independent reports must receive zero
expected payment, corresponding to independence-zero.

A payment rule based on \(d\) tasks is a \(d\)-sample unbiased estimator of
its expected-payment functional. By the polynomial characterization above,
the number of tasks corresponds to the polynomial degree of the underlying
information measure
\cite{DBLP:conf/soda/Kong20,DBLP:conf/innovations/Kong22}. Our
sample-complexity classification therefore translates directly into
task-complexity bounds.

In particular, if the agents have unequal signal-alphabet sizes, then under full
dominant truthfulness the expected payment of the agent with the
larger alphabet must be identically zero. Consequently, no finite-task
mechanism can have nontrivial expected payments on both sides. If both agents
have \(n\) possible signals, then \(2n\) tasks are necessary and sufficient for nontrivial expected payments. Under deviations restricted to the convex hull of rank-one and doubly
stochastic channels, four tasks suffice for arbitrary alphabet sizes.

\subsection{Organization}

Section~\ref{sec:setup} fixes notation and formalizes finite-sample
unbiased estimability, DPI, independence-zero, and
independence-faithfulness. Section~\ref{sec:zero-row-vanishing} establishes the zero-row vanishing principle and derives the resulting collapse under dimension-changing DPI.
Section~\ref{sec:nxm} then uses the zero-row principle to prove the
impossibility theorem in the tall regime \(n>m\). Section~\ref{sec:non-tall} treats the complementary regime \(n\le m\)
and proves the determinantal rigidity theorem: every admissible
polynomial lies, modulo the simplex relation, in the square of the
maximal-minor ideal. The square case \(n=m\) is highlighted in
Section~\ref{sec:nxn}, where this reduces to divisibility by
\((\det U)^2\); the resulting degree lower bound is sharp. Section~\ref{sec:square-beyond-divisibility} then studies what remains
beyond this structural classification in the square case: it gives a
large family of DPI monotones via constant compensation and proves that
recognizing DPI is coNP-hard when the alphabet size is part of the
input. Section~\ref{sec:faithfulness} derives the impossibility triangle for
independence-faithfulness. Sections~\ref{sec:peer-prediction} and
\ref{sec:restricted} give the peer-prediction consequences and the
degree-\(4\) positive result under restricted channels. The differential proof for the tall regime is given in full in the
main text. The appendix contains the complete algebraic proof for the
non-tall regime.

\subsection{Related work}\label{sec:related}

\paragraph{Information measures, the data processing inequality, and estimation.}
The data processing inequality (DPI) is one of the foundational
properties of an information measure, originating with Shannon's
mutual information~\cite{Shannon:2001:MTC:584091.584093} and developed
extensively in information theory~\cite{CoverThomas}. Shannon mutual
information and its generalization to the broad family of
\(f\)-mutual informations induced by
\(f\)-divergences~\cite{1571417124017192704} all satisfy DPI, but
each is a non-polynomial function of the joint distribution. As a
consequence, none of them can have an unbiased estimator from finitely
many i.i.d.\ samples.

Substantial effort has therefore gone into the construction of
low-bias estimators for mutual information and related functionals,
either via best polynomial
approximation~\cite{Paninski2003,Paninski2004,JiaoVenkatHanWeissman,WuYang2016}
or via reconstruction of the shape of the underlying
distribution~\cite{ValiantValiant2017}. This work pursues an orthogonal question: instead of approximating mutual information using polynomials, can we construct polynomial information measures that inherently satisfy the data processing inequality? This work provides negative and structural results for this question. 

Fixing the marginal of the unprocessed variable, a joint distribution can be viewed as a statistical experiment in which the processed variable is the signal; stochastic post-processing then corresponds to Blackwell garbling~\cite{Blackwell1953}. The comparison of information structures under Blackwell and related orders has been extensively studied in economics; see, e.g.,~\cite{QuahStrulovici2009,BrooksFrankelKamenica2024}. Recent work gives local differential characterizations of Blackwell-monotone information costs~\cite{cheng2026monotonicityinformationcosts}. Our result exposes a complementary boundary obstruction. Cross-alphabet Blackwell monotonicity (allowing garblings between signal alphabets of different sizes) implies expansibility, i.e., invariance under adding unused signals, while first-order boundary regularity forces every monotone normalized to vanish on uninformative experiments to vanish identically.

Chentsov's theorem and its extensions characterize the Fisher metric among
Riemannian information metrics satisfying monotonicity under statistical
processing~\cite{cencov1982,repec:spr:aistmt:v:69:y:2017:i:4:d:10.1007_s10463-016-0562-0}. Their setting is fundamentally
different from ours: the object is a local quadratic form on tangent spaces,
rather than a scalar functional of a joint distribution, and monotonicity is
defined through the contraction of these quadratic forms under statistical
maps rather than through our one-sided DPI. This additional Riemannian and
local-quadratic structure is crucial to the resulting uniqueness
characterization. In contrast, we consider the much broader class of scalar
polynomial functionals of joint distributions and show that one-sided DPI,
while not implying uniqueness, forces global determinantal
structure.

\paragraph{Peer prediction.}  The original peer-prediction method of Miller, Resnick, and Zeckhauser~\cite{MRZ05} rewards reports via proper scoring rules~\cite{winkler1969scoring,gneiting2007strictly} under a common-prior assumption. Prelec's Bayesian Truth Serum~\cite{prelec2004bayesian} relaxes the common-prior assumption by jointly eliciting reports and meta-predictions, with subsequent refinements addressing small populations, non-binary signals, and continuous signals~\cite{witkowski2012peer,witkowski2012robust,radanovic2013robust,radanovic2014incentives,zhang2014elicitability,prelec2017solution}. 

The multi-task framework was introduced by Dasgupta and
Ghosh~\cite{dasgupta2013crowdsourced} and has since become a central
paradigm in information elicitation without verification, with
substantial subsequent
development~\cite{kong2016putting,2016arXiv160303151S,Kong:2019:ITF:3309879.3296670,kong2018water,SchoenebeckY2019,schoenebeck2023two,DBLP:conf/innovations/KongS18,Kong:2018:EEW:3219166.3219172}.
The Correlated Agreement mechanism of Shnayder et
al.~\cite{2016arXiv160303151S} achieves informed truthfulness in the
multi-task setting, while the Mutual Information
Paradigm~\cite{Kong:2019:ITF:3309879.3296670} unifies a wide range of
mechanisms via information measures satisfying DPI.

A core technical question in this literature is the \emph{number of
tasks} required for various incentive properties. Under structural
assumptions on the joint distribution, two tasks already suffice for
informed truthfulness~\cite{dasgupta2013crowdsourced,2016arXiv160303151S}.
Without such assumptions, the Determinant based Mutual Information (DMI) mechanism \cite{DBLP:conf/soda/Kong20} achieves dominant
truthfulness on arbitrary joint distributions using \(2n\) tasks per
agent pair; here each task is a multi-choice question with \(n\)
options (\(n=2\) in the binary case).

\paragraph{Polynomial information measures and peer prediction.} The expected payment in the DMI mechanism is the squared determinant \((\det U)^2\) of the
joint-distribution matrix, a polynomial information measure. The dominant
truthfulness is built on the fact that this polynomial information measure satisfies the data processing inequality. When
\(|X|=|Y|=n\), this measure is a degree-\(2n\) polynomial and \(2n\) tasks suffice precisely. Subsequent work~\cite{DBLP:conf/innovations/Kong22}
makes this correspondence explicit: the number of tasks required by
such a mechanism is exactly the degree of the polynomial information
measure on which it is built.

This work answers a natural open question: \emph{what happens in the asymmetric case $|X| \neq |Y|$?} Here, agents can be assigned different multiple-choice questions (e.g., one faces a binary question while another uses a five-star rating), or a single agent might restrict their report to a subset of options (e.g., choosing only between ``yes'' and ``no'' while avoiding ``I don't know'').

If the agents have unequal signal-alphabet sizes, then under full
dominant truthfulness the expected payment of the agent with the
larger alphabet must be identically zero (Corollary~\ref{cor:asymmetric-lower-bound}). Consequently, no finite-task
mechanism can have nontrivial expected payments on both sides. In the square case, by
contrast, the squared determinant is minimal in the sense that every
nontrivial polynomial candidate must be divisible by it
(Theorem~\ref{thm:nxn}).

On the positive side, the analysis of the doubly-stochastic relaxation (Section~\ref{sec:restricted}) shows that under symmetric noise and independent noise, to achieve dominant truthfulness, four tasks suffice regardless of alphabet sizes. This provides a constant-task mechanism for a slightly weaker but still practically meaningful notion of truthfulness.

Limits of multi-task peer prediction have also been investigated from a geometric perspective by the work~\cite{zheng2021limits}, who characterize elicitable problems via power diagrams in the spirit of~\cite{geometricpp}. They study a complementary notion of elicitability: given a fixed report function and a designer who knows only that the underlying distribution lies in some set $M$, when does there exist a mechanism that is strictly Bayes--Nash truthful for every $\mu \in M$. This work considers the stronger DPI notion of truthfulness.

A recent work~\cite{vonallmen2026sample} on the sample complexity of peer prediction studies finite-sample unbiased estimators of mutual-information rewards and shows that, for binary reports, Determinant based Mutual Information is the unique nontrivial mutual information estimable from four or five samples, while no nontrivial mutual information is estimable from three or fewer samples. They further provide explicit estimators of DMI with closed-form variance formulas. The two papers are complementary: their work considers the constructive and variance-theoretic side, while ours focuses on the structural impossibility side. 

\paragraph{Determinantal structure and robust evaluation.}
Determinant-based constructions have also appeared in robust learning and data
evaluation. A log-determinant loss has been used to obtain robustness to label
noise~\cite{10.5555/3454287.3454846}, while more recent work uses Gram
determinants to score dataset reliability and establishes preservation of
several reliability orderings, including a Blackwell-type ordering
~\cite{ChenEtAl2026DataReliability}. Both constructions exploit the
multiplicative structure of determinants to obtain robustness or ordering
guarantees. Our results are converse in nature: rather than using a
determinantal functional to establish robustness or monotonicity, we show that
DPI together with independence-zero \emph{forces} second-order determinantal
structure. Moreover, these forced structures are tight: $(\det U)^2$ in the
square case and the Gram determinant $\det(UU^\top)$ in the rectangular case
attain the corresponding degree lower bound $2n$.

\paragraph{Determinants from invariance and monotonicity.}
A classical source of determinantal structure is invariance under special
linear transformations.  For the left action of $\mathrm{SL}_n$ on
$n\times m$ matrices, the First Fundamental Theorem of invariant theory
states that the polynomial invariant ring is generated by the $n\times n$
minors; in the square case, this reduces to the determinant
~\cite{Weyl1939ClassicalGroups,Richman1989VectorInvariants}.
Related ideas appear in quantum information, where polynomial invariants
under determinant-one local operations are used to construct and study
entanglement monotones~\cite{OsterlohSiewert2005,EltschkaEtAl2012}.
Our result obtains determinantal structure from a different principle:
rather than assuming invariance under a special linear group action, we
assume data-processing monotonicity and independence-zero.  These
conditions force a $(\det U)^2$ factor in the square case and membership in the
square of the maximal-minor ideal in the rectangular case.

\paragraph{Independence testing.}
Let
\(
    \phi:(\mathcal X\times\mathcal Y)^N\to[0,1]
\)
be a randomized test based on \(N\) i.i.d.\ observations, where
\(\phi((x_1,y_1),\ldots,(x_N,y_N))\) is the conditional probability of
rejecting the null hypothesis \(H_0:X\perp Y\) after observing the sample \cite{lehmann2022testing, 874063.875541}.
Its rejection probability, or power function, under a joint distribution
\(P\) is
\(
    R_\phi(P)
    :=
    \mathbb E_{P^{\otimes N}}
    \left[
        \phi((X_1,Y_1),\ldots,(X_N,Y_N))
    \right].
\)
For finite alphabets, \(R_\phi\) is a polynomial of degree at most \(N\) in
the entries of \(P\). Suppose, in addition, that this power function itself
is required to satisfy two-sided DPI: garbling one side will not increase its rejection probability. When \(|\mathcal X|\neq|\mathcal Y|\), the power function must be constant. Thus, this two-sided DPI cannot be fulfilled for any meaningful fixed sample-size test.

\section{Setup}\label{sec:setup}

We first establish the equivalence between finite-sample estimability and
polynomials.

\begin{definition}[Finite-sample unbiased estimability]
\label{def:finite-sample-estimability}
Let \(\mathcal Z\) be a finite set, and let \(F\) be a real-valued
functional on \(\Delta(\mathcal Z)\). We say that \(F\) is
\emph{\(d\)-sample unbiasedly estimable} if there exists a statistic
\[
\widehat F:\mathcal Z^d\to\mathbb R
\]
such that
\[
\mathbb E_P\bigl[\widehat F(Z_1,\ldots,Z_d)\bigr]=F(P)
\]
for every \(P\in\Delta(\mathcal Z)\), where
\(Z_1,\ldots,Z_d\) are i.i.d. samples from \(P\). We call \(F\)
\emph{finitely unbiasedly estimable} if it is \(d\)-sample unbiasedly
estimable for some finite \(d\).
\end{definition}

\begin{proposition}[Finite-sample estimability and polynomial degree]
\label{prop:finite-sample-polynomial}
A functional \(F:\Delta(\mathcal Z)\to\mathbb R\) is
\(d\)-sample unbiasedly estimable if and only if it can be represented
on \(\Delta(\mathcal Z)\) by a polynomial of degree at most \(d\) in
the probabilities \(\{P(z)\}_{z\in\mathcal Z}\).
\end{proposition}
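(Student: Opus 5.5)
The plan is to prove the two directions separately. In both, the identity \(\sum_z P(z)=1\) is what lets us handle polynomials that are not homogeneous.

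\emph{Estimable \(\Rightarrow\) polynomial.} Suppose \(\widehat F:\mathcal Z^d\to\mathbb R\) is an unbiased statistic. Then
\[
\mathbb E_P[\widehat F]=\sum_{(z_1,\ldots,z_d)\in\mathcal Z^d}\widehat F(z_1,\ldots,z_d)\prod_{j=1}^d P(z_j).
\]
The right-hand side is visibly a polynomial in \(\{P(z)\}\). It is homogeneous of degree exactly \(d\), so in particular its degree is at most \(d\). Since it equals \(F(P)\) on \(\Delta(\mathcal Z)\), this direction is immediate.

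\emph{Polynomial \(\Rightarrow\) estimable.} Let \(F\) agree on the simplex with a polynomial \(Q\) of degree at most \(d\).

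First, homogenize. Write \(Q=\sum_{k=0}^{d}Q_k\), where \(Q_k\) is the homogeneous part of degree \(k\), and set \(s=\sum_z P(z)\). Define
\[
\widetilde Q=\sum_{k=0}^d Q_k\,s^{\,d-k}.
\]
Then \(\widetilde Q\) is homogeneous of degree \(d\), and it agrees with \(Q\) on the simplex because \(s=1\) there.

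Next, it suffices to give an unbiased estimator for each degree-\(d\) monomial, since \(\widetilde Q\) is a linear combination of these and expectation is linear. Take a monomial \(\prod_{j=1}^d P(a_j)\) for a fixed tuple \((a_1,\ldots,a_d)\in\mathcal Z^d\). The statistic
\[
\mathbf 1[Z_1=a_1,\ldots,Z_d=a_d]
\]
has expectation exactly \(\prod_j P(a_j)\) by independence of the samples. Summing these indicators with the coefficients of \(\widetilde Q\) gives the required \(\widehat F\). Note that repeated letters in the monomial cause no problem, because we use distinct sample indices for each factor.

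I do not expect a genuine obstacle here. The only point needing care is the homogenization step. Without it, a monomial of degree \(k<d\) would seem to require only \(k\) samples, and the remaining samples would have to be ignored. Equivalently, one can simply let the estimator for a degree-\(k\) monomial use the first \(k\) samples and ignore the rest, which also works. Either way, the correct degree bookkeeping relies on \(s=1\) on \(\Delta(\mathcal Z)\).
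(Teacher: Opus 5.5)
Your proof is correct and follows the paper's argument: you expand the expectation as a polynomial of degree at most $d$, and you estimate each monomial by a product of indicators on distinct samples. The only difference is cosmetic. You first homogenize to degree exactly $d$ using $s=\sum_z P(z)$, whereas the paper estimates a degree-$k$ monomial from the first $k$ samples and ignores the rest, which is the variant you mention at the end; the two are equivalent because $s=1$ on $\Delta(\mathcal Z)$.
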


The proof is deferred to
Appendix~\ref{app:finite-sample-polynomial}. The algebraic classification
later in the paper therefore concerns polynomials. The impossibility argument in the tall case, however, uses only first-order smoothness. 

\paragraph{Index convention.}
Latin indices \(i,j,k,r\) range over the rows \(\{1,\dots,n\}\) of a
joint-distribution matrix \(U\), while Greek indices
\(\alpha,\beta\) range over its columns \(\{1,\dots,m\}\). A generic
entry is written \(u_{i\alpha}\). A channel matrix acts on the row
index, so it multiplies \(U\) on the left.

\begin{definition}[Probability simplex]\label{def:simplex}
For \(U=(u_{i\alpha})\in\mathbb R^{n\times m}\), write
\[
\tau(U):=\sum_{i=1}^n\sum_{\alpha=1}^m u_{i\alpha}.
\]
The probability simplex is
\[
\Delta_{n,m}
:=
\left\{
U\in\mathbb R^{n\times m}:
u_{i\alpha}\ge0\text{ for all }i,\alpha,
\ \tau(U)=1
\right\}.
\]
Thus a point of \(\Delta_{n,m}\) is exactly the joint-distribution
matrix of two finite random variables with alphabet sizes \(n\) and
\(m\).
\end{definition}

\paragraph{Polynomial notation.}
Let
\(
R_{n,m}
:=
\bbR\bigl[
u_{i\alpha}:
1\le i\le n,\;
1\le \alpha\le m
\bigr]
\)
be the polynomial ring in the entries of \(U\). Let \(\tau\) be shorthand of \(\tau(U)\). Define
\[
\overline{R}_{n,m}
:=
R_{n,m}/(\tau-1).
\]
For \(F\in R_{n,m}\), let
\[
\overline{F}
:=
F+(\tau-1)
\in\overline{R}_{n,m}
\]
denote the residue class of \(F\) modulo the simplex relation
\(\tau-1\).

\begin{definition}[\(C^1\)-extendable functionals]
\label{def:C1-functional}
A functional
\(
F:\Delta_{n,m}\to\mathbb R
\)
is called \emph{\(C^1\)-extendable} if there exist an open set
\(\Omega\subseteq\mathbb R^{n\times m}\) containing \(\Delta_{n,m}\)
and a \(C^1\) function
\(
\widetilde F:\Omega\to\mathbb R
\)
such that
\(
\widetilde F|_{\Delta_{n,m}}=F. 
\) Here \(C^1\) means that all first partial derivatives exist and are
continuous. Whenever such an extension is fixed, we write
\[
dF_U[V]
:=
d\widetilde F_U[V]
=
\left.\frac{d}{dt}\right|_{t=0}
\widetilde F(U+tV)
\]
for its ambient directional derivative.
\end{definition}

\begin{example}[Basic \(C^1\)-extendable functionals]
\label{ex:C1-functionals}
Every polynomial in the entries of \(U\), restricted to
\(\Delta_{n,m}\), is \(C^1\)-extendable. The class also contains many
non-polynomial functionals. For example, the restrictions to
\(\Delta_{n,m}\) of
\(
F(U)
=
\exp\!\left(\sum_{i,\alpha}c_{i\alpha}u_{i\alpha}\right)\), \(F(U)=\sin(u_{11}-u_{21}),
\)
and
\(
F(U)
=
\frac{1}
{1+\sum_{i,\alpha}u_{i\alpha}^2}
\)
are \(C^1\)-extendable.
\end{example}

\begin{remark}[Why boundary regularity matters]
\label{rem:mutual-information-not-C1}
\(C^1\)-extendability is stronger than smoothness on the strictly
positive simplex. Shannon mutual information is smooth when all entries
are positive, but is not \(C^1\)-extendable across the boundary of the
simplex. Remark~\ref{rem:shannon-boundary-derivative} below illustrates
explicitly how this failure appears in the proof of the \(3\times2\)
case.

One interpretation of \(C^1\)-extendability is regular first-order
behavior under changes of support. If a state of zero probability is
made infinitesimally possible, the marginal response of the functional
remains finite and varies continuously with the underlying
distribution. This is natural when zero probabilities are viewed as
ordinary boundary points arising from sparsity, approximation, or model
perturbations. It is not a universal requirement: entropy-like
functionals may instead treat the appearance of a previously impossible
event as intrinsically singular. Our results show that ruling out such
boundary singularities, together with DPI and independence-zero, already
imposes strong structural restrictions.
\end{remark}

\begin{definition}[Column-stochastic matrix]\label{def:column-stochastic}
A matrix \(T=(T_{ij})\in\mathbb R^{n\times n}\) is
\emph{column-stochastic} if
\[
T_{ij}\ge0
\quad\text{and}\quad
\sum_{i=1}^n T_{ij}=1
\qquad\text{for every column }j.
\]
If \(U\in\Delta_{n,m}\), then \(TU\in\Delta_{n,m}\): nonnegativity
is preserved, and
\[
\tau(TU)=\tau(U)=1.
\]
\end{definition}

Let \(F:\Delta_{n,m}\to\mathbb R\) be a \(C^1\)-extendable functional.
We impose the following two conditions.

\medskip
\noindent\textbf{(\refDPI)} \emph{Left-sided data processing inequality.}
For every column-stochastic \(T\) and every \(U\in\Delta_{n,m}\),
\[
F(TU)\le F(U).
\]

\medskip
\noindent\textbf{(\refIZ)} \emph{Independence-zero.}
For every \(U\in\Delta_{n,m}\),
\[
  X\perp Y
  \quad\Longrightarrow\quad
  F(U)=0.
\]
For a joint-distribution matrix \(U\), independence is equivalent to
\(\rank(U)=1\), or equivalently to
\(U=r(U)c(U)^\top\).

\begin{lemma}[left-sided DPI and independence-zero imply nonnegativity]
\label{lem:refPone-derived}
If \(F\) satisfies \refDPI{} and \refIZ{}, then
\[
F(U)\ge0
\qquad\text{for every }U\in\Delta_{n,m}.
\]
\end{lemma}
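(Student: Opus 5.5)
The plan is to push an arbitrary \(U\) forward, by a single column-stochastic channel, to an independent distribution. Then \refDPI{} bounds \(F(U)\) from below by the value of \(F\) at that independent distribution, and \refIZ{} makes this value zero.

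Fix \(U\in\Delta_{n,m}\) and let \(e_1\) be the first standard basis vector of \(\mathbb R^n\). Take the rank-one channel \(T:=e_1\mathbf 1_n^\top\), the \(n\times n\) matrix whose first row is all ones and whose other entries are zero. It is nonnegative and each column sums to one, so it is column-stochastic in the sense of Definition~\ref{def:column-stochastic}. A direct computation gives
\[
TU=e_1\bigl(\mathbf 1_n^\top U\bigr)=e_1\,c(U)^\top .
\]
This matrix is the joint distribution in which \(X\) is deterministically the first state and \(Y\) has its original marginal \(c(U)\). It lies in \(\Delta_{n,m}\) and has rank one (since \(c(U)\neq0\) because \(\tau(U)=1\)), so \(X\perp Y\) under \(TU\).

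By \refIZ{}, \(F(TU)=0\). By \refDPI{}, \(F(TU)\le F(U)\). Together these give \(F(U)\ge0\).

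There is no real obstacle. The only point requiring care is checking that \(TU\) is genuinely a product distribution: it equals \(r(TU)c(TU)^\top\) with \(r(TU)=e_1\) and \(c(TU)=c(U)\), so independence holds by the rank-one characterization stated after \refIZ{}. The argument uses no regularity of \(F\), so the \(C^1\)-extendability hypothesis is not needed here.
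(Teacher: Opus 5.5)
Your proof is correct and matches the paper's argument: the paper also maps \(U\) through a rank-one column-stochastic channel to a rank-one distribution, where \refIZ{} gives \(F=0\), and then applies \refDPI{}. Your choice \(T=e_1\mathbf 1_n^\top\) simply makes the paper's ``any rank-one column-stochastic matrix'' concrete, and you are right that no regularity of \(F\) is needed.
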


\begin{proof}
Choose any rank-one column-stochastic matrix \(T_1\). Then \(T_1U\in\Delta_{n,m}\) and \(\rank(T_1U)=1\), so \refIZ{} gives
\(F(T_1U)=0\). By \refDPI{},
\[
0=F(T_1U)\le F(U).
\]
\end{proof}

\begin{definition}[Independence-faithfulness]
\label{def:independence-faithfulness}
Let \(F:\Delta_{n,m}\to\mathbb R\). We call
\(F\) \emph{independence-faithful} if
\[
F(U)=0
\quad\Longleftrightarrow\quad
X\perp Y
\quad\Longleftrightarrow\quad
\operatorname{rank}(U)= 1.
\]
\end{definition}

We do not assume right-sided DPI. All impossibility results below use
only stochastic processing of the row variable. Although our proofs
invoke nonnegativity repeatedly, Lemma~\ref{lem:refPone-derived} shows
that it is a consequence rather than an additional assumption.

\section{A zero-row vanishing principle}
\label{sec:zero-row-vanishing}

The main analytic ingredient in our arguments is a zero-row principle. For \(n,m\ge2\), under left-sided data processing inequality and independence-zero, a \(C^1\)-extendable functional must vanish whenever one row of the
joint distribution is zero. We first illustrate the analysis on a
\(3\times2\) face and then prove the general statement.

\subsection{Warmup: a \texorpdfstring{$3\times2$}{3x2} zero-row face}
Let \(F:\Delta_{3,2}\to\mathbb R\) be \(C^1\)-extendable and satisfy
\refDPI{} and \refIZ{}. Consider
\[
U=
\begin{pmatrix}
a&b\\
c&d\\
0&0
\end{pmatrix}
\in\Delta_{3,2},
\qquad a,b,c,d>0.
\]
Write \(E_{i\alpha}\) for the \(3\times2\) matrix unit with a \(1\)
in position \((i,\alpha)\) and zeros elsewhere. We show directly that
\(F(U)=0\).

Fix \(0<\eta<1\) and split row \(1\) between rows \(1\) and \(3\):
\[
U_\eta=
\begin{pmatrix}
(1-\eta)a&(1-\eta)b\\
c&d\\
\eta a&\eta b
\end{pmatrix}.
\]
Rows \(1\) and \(3\) are parallel. Merging them recovers \(U\), while
a column-stochastic splitter reconstructs \(U_\eta\) from \(U\).
Applying \refDPI{} in both directions therefore gives
\(F(U_\eta)=F(U)\).

For \(\alpha=1,2\), let
\(V_\alpha:=E_{3\alpha}-E_{1\alpha}\) and
\(\phi_\alpha^\eta(t):=F(U_\eta+tV_\alpha)\). Explicitly,
\[
V_1=
\begin{pmatrix}
-1&0\\
0&0\\
1&0
\end{pmatrix},
\qquad
V_2=
\begin{pmatrix}
0&-1\\
0&0\\
0&1
\end{pmatrix}.
\]
For sufficiently small positive and negative \(t\),
\(U_\eta+tV_\alpha\) remains in the simplex, and merging rows \(1\)
and \(3\) always gives \(U\). Hence
\[
\phi_\alpha^\eta(t)
\ge F(U)
=F(U_\eta)
=\phi_\alpha^\eta(0).
\]
Thus \(t=0\) is a local minimum, so
\(dF_{U_\eta}[V_\alpha]=0\). Letting \(\eta\downarrow0\) and using
\(C^1\)-extendability yields
\[
dF_U\!\left[
\begin{pmatrix}
-1&0\\
0&0\\
1&0
\end{pmatrix}
\right]
=
dF_U\!\left[
\begin{pmatrix}
0&-1\\
0&0\\
0&1
\end{pmatrix}
\right]
=0.
\]

Repeating the same argument after splitting row \(2\) into row \(3\)
gives
\[
dF_U\!\left[
\begin{pmatrix}
0&0\\
-1&0\\
1&0
\end{pmatrix}
\right]
=
dF_U\!\left[
\begin{pmatrix}
0&0\\
0&-1\\
0&1
\end{pmatrix}
\right]
=0.
\]
Subtracting the corresponding identities and using linearity of
\(dF_U\), we obtain
\[
dF_U\!\left[
\begin{pmatrix}
1&0\\
-1&0\\
0&0
\end{pmatrix}
\right]
=
dF_U\!\left[
\begin{pmatrix}
0&1\\
0&-1\\
0&0
\end{pmatrix}
\right]
=0.
\]
Equivalently,
\[
dF_U[E_{1\alpha}-E_{2\alpha}]=0,
\qquad \alpha=1,2.
\]
Thus, within either column, infinitesimal mass can be transferred
between the first two rows without changing \(F\) to first order.

Now move row \(2\) continuously into row \(1\) along
\[
\Phi(s)=
\begin{pmatrix}
a+sc&b+sd\\
(1-s)c&(1-s)d\\
0&0
\end{pmatrix},
\qquad 0\le s\le1.
\]
For \(s<1\), all four entries in the first two rows remain positive,
so the preceding derivative identities apply at \(\Phi(s)\). Since
\[
\Phi'(s)
=
c(E_{11}-E_{21})
+
d(E_{12}-E_{22}),
\]
the chain rule gives
\[
\frac{d}{ds}F(\Phi(s))
=
dF_{\Phi(s)}[\Phi'(s)]
=0.
\]
Hence \(F(\Phi(s))\) is constant on \(0\le s<1\), and by continuity
also at \(s=1\). But
\[
\Phi(1)=
\begin{pmatrix}
a+c&b+d\\
0&0\\
0&0
\end{pmatrix}
\]
has rank at most one. Therefore \refIZ{} gives
\(F(\Phi(1))=0\), and consequently
\(F(U)=F(\Phi(0))=0\).

Boundary points of the same zero-row face follow by approximation.
Indeed, if some of \(a,b,c,d\) vanish, replace \(U\) by
\[
U_\varepsilon
:=
(1-\varepsilon)U
+
\frac{\varepsilon}{4}
\begin{pmatrix}
1&1\\
1&1\\
0&0
\end{pmatrix},
\qquad 0<\varepsilon<1.
\]
Each \(U_\varepsilon\) has a zero third row and strictly positive
entries elsewhere, hence \(F(U_\varepsilon)=0\). Letting
\(\varepsilon\downarrow0\) and using continuity gives \(F(U)=0\).

The argument uses no special feature of three rows or two columns
beyond notation. The same split--perturb--merge mechanism yields the
general zero-row vanishing principle below.

\begin{remark}[How Shannon mutual information escapes the argument]
\label{rem:shannon-boundary-derivative}
The \(C^1\) assumption is essential precisely when
\(\eta\downarrow0\). In the construction above, Shannon mutual
information satisfies \((\phi_1^\eta)'(0)=0\) for every \(\eta>0\),
whereas at the zero-row boundary
\[
\lim_{t\downarrow0}(\phi_1^0)'(t)
=
\log\!\left(\frac{a+b}{a}\right)>0.
\]
Thus its first derivative fails to extend continuously to the
zero-row boundary.
\end{remark}

\subsection{The general principle}

For \(1\le i\le n\) and \(1\le\alpha\le m\), let
\(E_{i\alpha}\in\bbR^{n\times m}\) denote the matrix unit with a \(1\)
in position \((i,\alpha)\) and zeros elsewhere.

For distinct rows \(i,j\), let \(M_{ij}\) denote the
\(n\times n\) column-stochastic matrix that merges row \(j\) into row
\(i\): \(M_{ij}U\) replaces \(u_{i\bullet}\) by
\(u_{i\bullet}+u_{j\bullet}\), replaces \(u_{j\bullet}\) by zero,
and leaves all other rows unchanged.

\begin{lemma}[Reversible merging of parallel rows]
\label{lem:parallel-merge-split}
If \(F\) satisfies \refDPI{}, then
\[
F(M_{ij}U)\le F(U).
\]
If \(u_{j\bullet}=\lambda u_{i\bullet}\) for some \(\lambda\ge0\),
then equality holds.
\end{lemma}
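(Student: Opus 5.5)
The first claim follows immediately from \refDPI{} once we check that \(M_{ij}\) is a legitimate channel. I will write \(M_{ij}\) explicitly as \(M_{ij}=I-e_je_j^\top+e_ie_j^\top\). Its columns are then \(e_k\) for \(k\ne j\) and \(e_i\) for column \(j\). Each column is a standard basis vector, so \(M_{ij}\) is a nonnegative \(n\times n\) column-stochastic matrix. Multiplying on the left gives \((M_{ij}U)_{i\bullet}=u_{i\bullet}+u_{j\bullet}\) and \((M_{ij}U)_{j\bullet}=0\), with all other rows unchanged, so it matches the description. Applying \refDPI{} with \(T=M_{ij}\) yields \(F(M_{ij}U)\le F(U)\).

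For the equality case I will construct a splitting channel \(S\) with \(S(M_{ij}U)=U\) whenever \(u_{j\bullet}=\lambda u_{i\bullet}\). Then \refDPI{} gives the reverse inequality \(F(U)=F(S M_{ij}U)\le F(M_{ij}U)\). The merged matrix has row \(i\) equal to \((1+\lambda)u_{i\bullet}\) and row \(j\) equal to zero. Set \(p:=1/(1+\lambda)\in(0,1]\) and let \(S\) be the identity except that column \(i\) is \(p\,e_i+(1-p)e_j\). Column \(j\) can be taken as \(e_j\), or anything stochastic, since it multiplies the zero row \(j\) of \(M_{ij}U\). Every column of \(S\) is then nonnegative with unit sum.

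Checking the product: row \(i\) of \(SM_{ij}U\) is \(p(1+\lambda)u_{i\bullet}=u_{i\bullet}\). Row \(j\) is \((1-p)(1+\lambda)u_{i\bullet}+0=\lambda u_{i\bullet}=u_{j\bullet}\). Rows \(k\ne i,j\) receive only their own contribution \(u_{k\bullet}\), because column \(k\) of \(S\) is \(e_k\) and column \(i\) puts no mass on row \(k\). So \(SM_{ij}U=U\), and combining the two inequalities gives \(F(M_{ij}U)=F(U)\).

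The only point needing care is the edge case \(\lambda=0\) (row \(j\) already zero). There \(p=1\), \(S=I\), and \(M_{ij}U=U\) trivially, so equality holds with no division issue. I do not expect any real obstacle; the substance is entirely in writing down the two channel matrices explicitly and verifying column-stochasticity.
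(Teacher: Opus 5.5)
Your proposal is correct and follows the same route as the paper: apply \refDPI{} with \(T=M_{ij}\) for the inequality, then use the splitting channel with fractions \(1/(1+\lambda)\) and \(\lambda/(1+\lambda)\) to exhibit \(U\) as a stochastic image of \(M_{ij}U\) and apply \refDPI{} in the reverse direction. Your explicit matrices for \(M_{ij}\) and \(S\), and the check that \(SM_{ij}U=U\), correctly fill in the details the paper leaves implicit.
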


\begin{proof}
Only the reverse inequality requires proof. If
\(u_{j\bullet}=\lambda u_{i\bullet}\), a column-stochastic channel
can split the merged row \(u_{i\bullet}+u_{j\bullet}\) back into the
fractions \(1/(1+\lambda)\) and \(\lambda/(1+\lambda)\), while fixing
all other rows. Thus \(U\) is a stochastic image of \(M_{ij}U\).
Applying \refDPI{} in both directions gives equality.
\end{proof}

\begin{lemma}[Coordinate transfers on a zero-row face]
\label{lem:coordinate-transfer-zero-row}
Let \(U\in\Delta_{n,m}\) have zero row \(r\), and suppose
\(u_{i\alpha}>0\) for every \(i\ne r\) and every \(\alpha\).
If \(F\) is \(C^1\)-extendable and satisfies \refDPI{}, then for
distinct \(i,j\ne r\) and every column \(\alpha\),
\[
dF_U[E_{i\alpha}-E_{j\alpha}]=0.
\]
\end{lemma}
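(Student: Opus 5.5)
The plan is to follow the warmup argument in general form. Fix distinct rows \(i,j\ne r\) and a column \(\alpha\). I first show \(dF_U[E_{r\alpha}-E_{i\alpha}]=0\) for every row \(i\ne r\) and every column \(\alpha\). Applying this to \(i\) and to \(j\) and subtracting, linearity of \(dF_U\) gives \(dF_U[E_{i\alpha}-E_{j\alpha}]=0\).

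To show \(dF_U[E_{r\alpha}-E_{i\alpha}]=0\), I split row \(i\) into row \(r\). For \(0<\eta<1\), let \(U_\eta\) equal \(U\) except that row \(i\) becomes \((1-\eta)u_{i\bullet}\) and row \(r\) becomes \(\eta u_{i\bullet}\). Rows \(i\) and \(r\) of \(U_\eta\) are parallel, and \(M_{ir}U_\eta=U\). Lemma~\ref{lem:parallel-merge-split} then gives \(F(U_\eta)=F(U)\).

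Next, set \(V=E_{r\alpha}-E_{i\alpha}\) and \(\phi(t)=F(U_\eta+tV)\). For \(|t|<\eta u_{i\alpha}\), all entries of \(U_\eta+tV\) stay nonnegative; this uses \(u_{i\alpha}>0\) and \(0<\eta<1\). The total mass is unchanged, so \(U_\eta+tV\in\Delta_{n,m}\). Merging row \(r\) into row \(i\) sends \(U_\eta+tV\) back to \(U\), because the perturbation only moves mass between rows \(i\) and \(r\) within column \(\alpha\). By the first part of Lemma~\ref{lem:parallel-merge-split} (that is, \refDPI{}),
\[
\phi(t)=F(U_\eta+tV)\ge F(M_{ir}(U_\eta+tV))=F(U)=F(U_\eta)=\phi(0).
\]
So \(t=0\) is an interior local minimum of the differentiable function \(\phi\) (differentiable via the \(C^1\) extension \(\widetilde F\)). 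Hence \(dF_{U_\eta}[V]=\phi'(0)=0\).

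Finally, let \(\eta\downarrow0\). Then \(U_\eta\to U\) inside \(\Omega\), and the first partial derivatives of \(\widetilde F\) are continuous, so \(dF_{U}[V]=\lim_{\eta\downarrow0} dF_{U_\eta}[V]=0\).

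The only delicate step is checking that the two-sided perturbation stays in the simplex for small \(t\) of both signs. This is exactly where the positivity hypothesis \(u_{i\alpha}>0\) and the choice \(\eta>0\) are needed. The rest is bookkeeping, and the limit uses only \(C^1\)-extendability; no polynomial structure enters.
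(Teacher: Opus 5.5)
Your proposal is correct and follows the paper's proof essentially step for step: split row \(i\) proportionally into the zero row \(r\), use \(M_{ir}\) and DPI to see that \(t=0\) is an interior local minimum of \(t\mapsto F(U_\eta+tV)\), let \(\eta\downarrow0\) using continuity of the first partials of the extension, and subtract the identities for \(i\) and \(j\). One small slip: the admissible range should be \(|t|<\min(\eta,1-\eta)\,u_{i\alpha}\), or you can simply restrict to \(\eta<1/2\), because \(|t|<\eta u_{i\alpha}\) alone does not keep the \((i,\alpha)\) entry \((1-\eta)u_{i\alpha}-t\) nonnegative when \(\eta>1/2\).
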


\begin{proof}
Fix \(i\ne r\) and \(0<\eta<1\). Split row \(i\) into rows \(i\)
and \(r\), replacing them by
\((1-\eta)u_{i\bullet}\) and \(\eta u_{i\bullet}\), and denote the
resulting matrix by \(U_\eta\). By
Lemma~\ref{lem:parallel-merge-split}, \(F(U_\eta)=F(U)\).

Let \(V:=E_{r\alpha}-E_{i\alpha}\). For sufficiently small positive
and negative \(t\), \(U_\eta+tV\) remains in the simplex, while
\(M_{ir}(U_\eta+tV)=U\). Therefore
\[
F(U_\eta+tV)\ge F(U)=F(U_\eta),
\]
so \(dF_{U_\eta}[V]=0\). Letting \(\eta\downarrow0\) gives
\(dF_U[E_{r\alpha}-E_{i\alpha}]=0\). Repeating the argument with
row \(j\) and subtracting the two identities proves the claim.
\end{proof}

\begin{theorem}[Zero-row vanishing principle]
\label{thm:zero-row-vanishing}
Let \(n,m\ge2\), and let
\(F:\Delta_{n,m}\to\mathbb R\) be \(C^1\)-extendable and satisfy
\refDPI{} and \refIZ{}. If \(U\in\Delta_{n,m}\) has a zero row, then
\[
F(U)=0.
\]
\end{theorem}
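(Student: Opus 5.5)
The plan is to generalize the $3\times2$ warmup: first treat zero-row matrices whose other entries are all strictly positive, then extend to the whole zero-row face by approximation.

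\emph{Interior of the face.} Fix $U$ with zero row $r$ and $u_{i\alpha}>0$ for every $i\ne r$ and every $\alpha$. Pick a row $i_0\ne r$; this is possible since $n\ge2$. Enumerate the remaining nonzero rows $j\ne r,i_0$, and merge them one at a time into row $i_0$ along the linear path
\[
\Phi(s)=U+s\sum_{\alpha}u_{j\alpha}\bigl(E_{i_0\alpha}-E_{j\alpha}\bigr),\qquad 0\le s\le1.
\]
Along this path row $r$ stays zero. For $s<1$, every entry outside row $r$ stays strictly positive, so Lemma~\ref{lem:coordinate-transfer-zero-row} applies at $\Phi(s)$. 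Hence $\frac{d}{ds}F(\Phi(s))=\sum_\alpha u_{j\alpha}\,dF_{\Phi(s)}[E_{i_0\alpha}-E_{j\alpha}]=0$ by the chain rule for the $C^1$ extension. Therefore $F(\Phi(s))$ is constant on $[0,1)$, and by continuity also at $s=1$.

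The endpoint has row $j$ zero, and the rows other than $r$ and $j$ are still strictly positive. The subsequent merges only need positivity on the rows that are currently nonzero. So I will state Lemma~\ref{lem:coordinate-transfer-zero-row} in the form actually used: with $r$ one zero row, positivity is needed only on rows $i,j$ and on any other row whose perturbation is involved. Alternatively, I can reapply the lemma with the enlarged set of zero rows, since its proof uses only the zero row $r$ and positivity of rows $i,j$.

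\emph{Main obstacle.} The key point is checking that the proof of Lemma~\ref{lem:coordinate-transfer-zero-row} uses positivity only of rows $i$ and $j$. That positivity is what guarantees $U_\eta+tV$ stays in the simplex for small $t$ of either sign. Extra zero rows are harmless, because the merge $M_{ir}$ still returns $U$. With this observation, iterating the merges ends at a matrix whose only nonzero row is $i_0$. That matrix has rank at most one, so by \refIZ{} its value is $0$, and therefore $F(U)=0$.

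\emph{Boundary of the face.} For a general $U$ with zero row $r$, set $U_\varepsilon=(1-\varepsilon)U+\varepsilon J_r$, where $J_r$ is the uniform distribution on the entries outside row $r$. Each $U_\varepsilon$ lies in the interior case just treated, so $F(U_\varepsilon)=0$. Letting $\varepsilon\downarrow0$ and using continuity of $F$ gives $F(U)=0$.
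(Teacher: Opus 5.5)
Your proof is correct and follows the paper's argument. You treat the relative interior of the zero-row face with a merging path whose derivative vanishes by Lemma~\ref{lem:coordinate-transfer-zero-row}, apply independence-zero at the rank-one endpoint, and reach the boundary of the face by the same convex-combination approximation.

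The only difference is how the rows are merged. The paper merges all rows $j\ne p,r$ into row $p$ simultaneously along one path, so every entry outside row $r$ stays positive for $s<1$ and the lemma applies exactly as stated. Your one-at-a-time merges instead rely on the observation that the lemma's proof uses only positivity of the two rows being exchanged. That observation is correct.
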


\begin{proof}
Fix a zero row \(r\). First suppose every entry outside row \(r\) is
positive. If only one row is nonzero, then \(\rank U\le1\) and the
claim follows from \refIZ{}. Otherwise choose \(p\ne r\) and, for
\(0\le s\le1\), define \(\Phi(s)\) by
\[
\begin{aligned}
\Phi_{p\bullet}(s)
  &=u_{p\bullet}
    +s\!\!\sum_{j\ne p,r}u_{j\bullet},\\
\Phi_{j\bullet}(s)
  &=(1-s)u_{j\bullet}\quad(j\ne p,r),
\qquad
\Phi_{r\bullet}(s)=0.
\end{aligned}
\]
For \(s<1\), all entries outside row \(r\) remain positive.
Lemma~\ref{lem:coordinate-transfer-zero-row} therefore gives
\[
\frac{d}{ds}F(\Phi(s))
=
\sum_{j\ne p,r}\sum_{\alpha=1}^m
u_{j\alpha}\,
dF_{\Phi(s)}[E_{p\alpha}-E_{j\alpha}]
=0.
\]
Hence \(F(\Phi(s))\) is constant. At \(s=1\), only row \(p\) can be
nonzero, so \(\rank\Phi(1)\le1\). By \refIZ{},
\(F(\Phi(1))=0\), and thus \(F(U)=0\).

For a general \(U\) with zero row \(r\), let \(B^{(r)}\) be the
distribution with row \(r\) zero and
\(B^{(r)}_{i\alpha}=1/(m(n-1))\) for \(i\ne r\). Then
\(U_\varepsilon:=(1-\varepsilon)U+\varepsilon B^{(r)}\) has the same
zero row and is strictly positive elsewhere. Hence
\(F(U_\varepsilon)=0\) for every \(\varepsilon>0\), and continuity
gives \(F(U)=0\).
\end{proof}

\subsection{Dimension-changing DPI and expansibility}
\label{sec:expansibility}

The zero-row principle also explains why our default notion of DPI
keeps the processed alphabet size fixed. Consider a family
\(\{F_{n,m}:\Delta_{n,m}\to\mathbb R\}_{n,m\ge2}\) defined across
alphabet sizes.

We say that the family satisfies left-sided \emph{dimension-changing DPI} if
\(F_{k,m}(TU)\le F_{n,m}(U)\) for every \(k,n,m\ge2\), every
\(k\times n\) column-stochastic matrix \(T\), and every
\(U\in\Delta_{n,m}\). Thus post-processing may change the alphabet size
of the processed variable.

We call the family \emph{left-expansible} if
\[
  F_{n+1,m}\!\begin{pmatrix}U\\0\end{pmatrix}
  =F_{n,m}(U)
  \qquad
  \text{for every }U\in\Delta_{n,m}.
\]

\begin{lemma}[Dimension-changing DPI implies expansibility]
\label{lem:dimension-changing-expansibility}
Every family satisfying left-sided dimension-changing DPI is left-expansible.
\end{lemma}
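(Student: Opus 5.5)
The plan is to obtain the equality by applying dimension-changing DPI twice, once in each direction, with two explicit channels between alphabets of sizes \(n\) and \(n+1\). Fix \(m\ge2\), \(n\ge2\), and \(U\in\Delta_{n,m}\). Write \(\widehat U:=\begin{pmatrix}U\\0\end{pmatrix}\in\Delta_{n+1,m}\) for \(U\) with an appended zero row.

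The first channel is the embedding. Let \(T_{\mathrm{emb}}\) be the \((n+1)\times n\) matrix whose top \(n\times n\) block is the identity and whose last row is zero. Each column contains a single \(1\), so \(T_{\mathrm{emb}}\) is column-stochastic, and \(T_{\mathrm{emb}}U=\widehat U\). Dimension-changing DPI with \(k=n+1\) then gives
\[
F_{n+1,m}(\widehat U)\le F_{n,m}(U).
\]

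The second channel is a projection that merges the extra state into an existing one. Let \(T_{\mathrm{proj}}\) be the \(n\times(n+1)\) matrix whose first \(n\) columns form the identity and whose last column is the basis vector \(e_1\). This matrix is column-stochastic. Because the last row of \(\widehat U\) is zero, the merged row contributes nothing, so \(T_{\mathrm{proj}}\widehat U=U\). Dimension-changing DPI with source size \(n+1\) and target size \(n\) then gives
\[
F_{n,m}(U)\le F_{n+1,m}(\widehat U).
\]
Combining the two inequalities yields the equality required for left-expansibility.

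No step here is a real obstacle. The only points to check are that both maps are genuinely column-stochastic and that the projection's action on the zero row is harmless. Both are immediate from the explicit entries given above.
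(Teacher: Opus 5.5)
Your proposal is correct and matches the paper's proof: it applies dimension-changing DPI with the canonical embedding channel from \(n\) to \(n+1\) states for one inequality, and with a channel merging the added zero state into an original state for the reverse inequality. Your explicit matrices \(T_{\mathrm{emb}}\) and \(T_{\mathrm{proj}}\) just make the paper's two channels concrete.
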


\begin{proof}
Let
\(U^+:=\bigl(\begin{smallmatrix}U\\0\end{smallmatrix}\bigr)\). The canonical embedding channel from
\(n\) states to \(n+1\) states gives
\(F_{n+1,m}(U^+)\le F_{n,m}(U)\).
Conversely, a column-stochastic channel that merges the added zero
state into any original state maps \(U^+\) back to \(U\), giving the
reverse inequality. Hence
\(F_{n+1,m}(U^+)=F_{n,m}(U)\).
\end{proof}

Combining expansibility with the zero-row principle gives a complete
collapse.

\begin{corollary}[Collapse under dimension-changing DPI]
\label{cor:expansibility-collapse}
Let
\(\{F_{n,m}:\Delta_{n,m}\to\mathbb R\}_{n,m\ge2}\)
satisfy left-sided dimension-changing DPI. If every \(F_{n,m}\) is \(C^1\)-extendable and satisfies
independence-zero, then
\[
  F_{n,m}\equiv0
  \qquad\text{for every }n,m\ge2.
\]
\end{corollary}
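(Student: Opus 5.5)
The plan is to combine Lemma~\ref{lem:dimension-changing-expansibility} with Theorem~\ref{thm:zero-row-vanishing}, applied at the larger alphabet size. First I check that each member \(F_{n+1,m}\) of the family satisfies the fixed-alphabet hypotheses of the zero-row principle. Taking \(k=n+1\) in dimension-changing DPI shows that \(F_{n+1,m}\) satisfies \refDPI{} with respect to \((n+1)\times(n+1)\) column-stochastic channels. By assumption it is also \(C^1\)-extendable and independence-zero, and \(n+1,m\ge2\). Hence Theorem~\ref{thm:zero-row-vanishing} applies, and \(F_{n+1,m}\) vanishes on every matrix in \(\Delta_{n+1,m}\) that has a zero row.

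Next, fix \(n,m\ge2\) and \(U\in\Delta_{n,m}\). Form \(U^+:=\bigl(\begin{smallmatrix}U\\0\end{smallmatrix}\bigr)\in\Delta_{n+1,m}\). This matrix has a zero last row, so \(F_{n+1,m}(U^+)=0\) by the previous step. Left-expansibility (Lemma~\ref{lem:dimension-changing-expansibility}) gives \(F_{n,m}(U)=F_{n+1,m}(U^+)\), and therefore \(F_{n,m}(U)=0\). Since \(n\), \(m\) and \(U\) were arbitrary, every member of the family vanishes identically.

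No step is a real obstacle. The only thing to be careful about is that expansibility needs the family to contain \(F_{n+1,m}\) and to satisfy DPI between sizes \(n\) and \(n+1\); both hold because the family is indexed by all sizes at least \(2\). If one prefers to avoid the expansibility lemma, there is a direct argument. The \(n\times(n+1)\) merge channel sends \(U^+\) to \(U\), so \(F_{n,m}(U)\le F_{n+1,m}(U^+)=0\). Nonnegativity in Lemma~\ref{lem:refPone-derived}, applied to \(F_{n,m}\), gives \(F_{n,m}(U)\ge0\). Together these yield equality to zero.
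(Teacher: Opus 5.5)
Your proof is correct and follows the paper's argument: pad \(U\) with a zero row, use expansibility to get \(F_{n,m}(U)=F_{n+1,m}(U^+)\), and apply the zero-row vanishing principle to \(F_{n+1,m}\). You also check explicitly that taking \(k=n+1\) in dimension-changing DPI gives the fixed-alphabet hypothesis of Theorem~\ref{thm:zero-row-vanishing}, a step the paper leaves implicit, and your alternative using only the merge channel plus Lemma~\ref{lem:refPone-derived} is an equally valid variant.
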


\begin{proof}
For \(U\in\Delta_{n,m}\), let \(U^+:=(U^\top,0)^\top\).
Lemma~\ref{lem:dimension-changing-expansibility} gives
\(F_{n,m}(U)=F_{n+1,m}(U^+)\). Since \(U^+\) has a zero row,
Theorem~\ref{thm:zero-row-vanishing} gives
\(F_{n+1,m}(U^+)=0\).
\end{proof}
\section{The tall case \texorpdfstring{$n>m$}{n>m}}
\label{sec:nxm}

The zero-row principle reduces the impossibility argument in the tall case to
a simple geometric construction: when \(n>m\), every distribution is
a stochastic image of one having a zero row. We first make this
explicit in the \(3\times2\) case.

\subsection{The \texorpdfstring{$3\times2$}{3x2} case}
\label{sec:3x2-simplex}

\begin{theorem}[The \(3\times2\) \(C^1\)-extendable simplex theorem]
\label{thm:3x2-simplex}
Let \(F:\Delta_{3,2}\to\mathbb R\) be \(C^1\)-extendable and satisfy
\refDPI{} and \refIZ{}. Then \(F(U)=0\) for every
\(U\in\Delta_{3,2}\).
\end{theorem}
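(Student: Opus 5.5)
The plan is to show that every \(U\in\Delta_{3,2}\) is a stochastic image \(U=TW\) of some \(W\in\Delta_{3,2}\) with a zero row. Theorem~\ref{thm:zero-row-vanishing} gives \(F(W)=0\). Then \refDPI{} gives \(F(U)=F(TW)\le F(W)=0\), and Lemma~\ref{lem:refPone-derived} gives \(F(U)\ge0\), so \(F(U)=0\).

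The construction of \(W\) is the main step. Write the rows of \(U\) as \(u_1,u_2,u_3\in\mathbb R_{\ge0}^2\), which are three vectors in the two-dimensional nonnegative orthant. We need two nonnegative vectors \(w_1,w_2\in\mathbb R^2_{\ge0}\) and a column-stochastic \(3\times3\) matrix \(T\) with \(u_i=T_{i1}w_1+T_{i2}w_2\) for each \(i\). Column-stochasticity of the first two columns forces \(w_1+w_2=u_1+u_2+u_3\); the third column of \(T\) acts on the zero row and can be arbitrary.

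The natural choice uses the two coordinate directions: take \(w_1=(s_1,0)\) and \(w_2=(0,s_2)\), where \(s_\alpha=\sum_i u_{i\alpha}\) are the column sums. Then set \(T_{i1}=u_{i1}/s_1\) and \(T_{i2}=u_{i2}/s_2\). If some \(s_\alpha=0\), the matrix \(U\) has rank at most one, and \refIZ{} applies directly. These columns are nonnegative and sum to one, and \(TW\) recovers row \(i\) as \((u_{i1},u_{i2})\). So \(W=\operatorname{diag}(s_1,s_2)\) padded by a zero third row works, and \(W\in\Delta_{3,2}\) since \(s_1+s_2=1\).

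In effect, any distribution factors through the "diagonal" distribution of \(Y\) itself via the channel \(P(X\mid Y)\). I expect no real obstacle beyond checking this; the analytic difficulty is already absorbed into Theorem~\ref{thm:zero-row-vanishing}. The only care needed is the degenerate case of zero column sums and confirming that \(W\) has a zero row, which uses \(n=3>m=2\). The same argument extends verbatim to all \(n>m\).
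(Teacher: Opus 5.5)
Your proposal is correct and follows the paper's proof essentially line for line. The paper flattens \(U\) to the diagonal column-mass matrix padded by a zero row and recovers \(U\) through the channel whose columns are \(u_{\bullet\alpha}/s_\alpha\), with the third column arbitrary. It then concludes \(0\le F(U)\le F(U^\flat)=0\) via the zero-row principle, DPI, and Lemma~\ref{lem:refPone-derived}, handles a zero column mass by independence-zero, and extends the argument to all \(n>m\) as Theorem~\ref{thm:normalized}.
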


\begin{proof}
Write
\[
U=
\begin{pmatrix}
a&b\\
c&d\\
e&f
\end{pmatrix},
\qquad
\alpha:=a+c+e,
\qquad
\beta:=b+d+f.
\]
If \(\alpha=0\) or \(\beta=0\), then \(U\) has rank at most one and
the claim follows from \refIZ{}. Assume \(\alpha,\beta>0\), and set
\[
U^\flat=
\begin{pmatrix}
\alpha&0\\
0&\beta\\
0&0
\end{pmatrix},
\qquad
T=
\begin{pmatrix}
a/\alpha&b/\beta&1\\
c/\alpha&d/\beta&0\\
e/\alpha&f/\beta&0
\end{pmatrix}.
\]
Then \(U^\flat\in\Delta_{3,2}\) has a zero row, so
Theorem~\ref{thm:zero-row-vanishing} gives \(F(U^\flat)=0\). The matrix \(T\)
is column-stochastic and \(TU^\flat=U\). Hence, by \refDPI{} and
Lemma~\ref{lem:refPone-derived},
\[
0\le F(U)=F(TU^\flat)\le F(U^\flat)=0.
\]
Thus \(F(U)=0\).
\end{proof}

\begin{corollary}[Polynomial form of the \(3\times2\) theorem]
\label{cor:3x2-polynomial-simplex}
Let \(P\in R_{3,2}\). If its restriction to \(\Delta_{3,2}\)
satisfies \refDPI{} and \refIZ{}, then
\[
P\in(\tau-1),
\qquad\text{equivalently}\qquad
\overline P=0\ \text{in }\overline R_{3,2}.
\]
\end{corollary}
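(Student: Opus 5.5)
The plan is to combine Theorem~\ref{thm:3x2-simplex} with a Zariski-density argument. The restriction of \(P\) to \(\Delta_{3,2}\) is a polynomial, hence \(C^1\)-extendable by Example~\ref{ex:C1-functionals}. It satisfies \refDPI{} and \refIZ{} by hypothesis. Theorem~\ref{thm:3x2-simplex} therefore gives \(P(U)=0\) for every \(U\in\Delta_{3,2}\). What remains is purely algebraic: a polynomial vanishing on the simplex \(\Delta_{3,2}\) lies in the ideal \((\tau-1)\).

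For the algebraic step I would eliminate one variable. Write \(P\) as a polynomial in \(u_{32}\) with coefficients in the other five variables. Divide by the monic (in \(u_{32}\)) linear polynomial \(\tau-1\), equivalently substitute \(u_{32}=1-\sum_{(i,\alpha)\ne(3,2)}u_{i\alpha}\). This gives
\[
P=(\tau-1)Q+R,
\]
where \(R\) is a polynomial in the remaining five variables only. On \(\Delta_{3,2}\) we have \(R=P=0\). The projection of \(\Delta_{3,2}\) onto those five coordinates is the set
\[
\Bigl\{u_{i\alpha}\ge0:\ \textstyle\sum_{(i,\alpha)\ne(3,2)}u_{i\alpha}\le1\Bigr\}.
\]
This set has nonempty interior in \(\mathbb R^5\). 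A polynomial vanishing on an open subset of \(\mathbb R^5\) is identically zero, since all its derivatives vanish at an interior point. Hence \(R=0\), so \(P\in(\tau-1)\), equivalently \(\overline P=0\) in \(\overline R_{3,2}\).

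No step is a real obstacle; the analytic content is entirely in Theorem~\ref{thm:3x2-simplex}. The only point needing care is that the division by \(\tau-1\) leaves a remainder free of \(u_{32}\). This holds because \(\tau-1\) has leading coefficient \(1\) in that variable, and it is what lets us pass from vanishing on the simplex, a set of positive measure in the affine hyperplane \(\tau=1\), to ideal membership.
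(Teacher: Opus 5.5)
Your proposal is correct and follows the paper's proof essentially verbatim: you invoke Theorem~\ref{thm:3x2-simplex} (via \(C^1\)-extendability of polynomials) to get vanishing on \(\Delta_{3,2}\), then divide by the monic-in-one-variable polynomial \(\tau-1\). The remainder depends only on the other five entries and vanishes on a nonempty open set, hence is zero.
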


\begin{proof}
Theorem~\ref{thm:3x2-simplex} shows that \(P\) vanishes on
\(\Delta_{3,2}\). Divide \(P\) by the monic linear polynomial
\(\tau-1\), viewed as a polynomial in one chosen entry, and write
\(P=(\tau-1)Q+R_0\), where \(R_0\) is independent of that entry.
On the relative interior of the simplex the remaining five entries
vary over a nonempty Euclidean-open set, so \(R_0\) vanishes on an
open set and hence is identically zero.
\end{proof}

\subsection{The general tall case}

The preceding construction extends directly to every \(n>m\): place
the \(m\) column masses on \(m\) distinct rows, leaving at least one
row unused.

\begin{theorem}[Tall-case impossibility for \(C^1\)-extendable functionals]
\label{thm:normalized}
Let \(n>m\ge2\), and let
\(F:\Delta_{n,m}\to\mathbb R\) be \(C^1\)-extendable. If \(F\)
satisfies \refDPI{} and \refIZ{}, then
\[
F(U)=0
\qquad\text{for every }U\in\Delta_{n,m}.
\]
\end{theorem}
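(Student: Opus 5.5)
The plan is to generalize the $3\times2$ argument of Theorem~\ref{thm:3x2-simplex} directly: write an arbitrary $U\in\Delta_{n,m}$ as a stochastic image $U=TU^\flat$ of a distribution $U^\flat$ with a zero row. Then combine Theorem~\ref{thm:zero-row-vanishing} with \refDPI{} and Lemma~\ref{lem:refPone-derived}.

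Fix $U\in\Delta_{n,m}$ and let $c_\alpha:=\sum_i u_{i\alpha}$ be its column sums. Let $S:=\{\alpha:c_\alpha>0\}$.

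\emph{Degenerate case.} If $|S|\le1$, then $U$ has at most one nonzero column. Hence $\rank U\le1$, and \refIZ{} gives $F(U)=0$ directly.

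\emph{Main case.} Otherwise define $U^\flat\in\mathbb R^{n\times m}$ by placing each column mass on its own row: $U^\flat_{\alpha\alpha}:=c_\alpha$ for $\alpha=1,\dots,m$, with all other entries zero. Since $n>m$, row $n$ (indeed every row $i>m$) is zero. Moreover $U^\flat\in\Delta_{n,m}$, because its total mass is $\sum_\alpha c_\alpha=1$. Next define the $n\times n$ matrix $T$ columnwise:
\begin{itemize}
\item for $\alpha\le m$ with $c_\alpha>0$, column $\alpha$ of $T$ is $u_{\bullet\alpha}/c_\alpha$;
\item for every remaining column (either $\alpha\le m$ with $c_\alpha=0$, or an index $j>m$), column $j$ of $T$ is the basis vector $e_1$.
\end{itemize}
Each column of $T$ is nonnegative and sums to one, so $T$ is column-stochastic.

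\emph{Checking $TU^\flat=U$.} Entry $(i,\alpha)$ of $TU^\flat$ equals $T_{i\alpha}c_\alpha$. This is $u_{i\alpha}$ when $c_\alpha>0$. When $c_\alpha=0$, it is $0=u_{i\alpha}$, since nonnegativity forces the whole column to vanish. Hence $TU^\flat=U$.

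\emph{Conclusion.} By Theorem~\ref{thm:zero-row-vanishing}, which applies because $n,m\ge2$ and $U^\flat$ has a zero row, $F(U^\flat)=0$. Then \refDPI{} and Lemma~\ref{lem:refPone-derived} give
\[
0\le F(U)=F(TU^\flat)\le F(U^\flat)=0,
\]
so $F(U)=0$.

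No step is a serious obstacle. All analytic content, including the use of $C^1$-extendability, is already contained in Theorem~\ref{thm:zero-row-vanishing}. The only point needing care is the bookkeeping for zero column sums, handled by the degenerate case and by the choice of filler columns in $T$. Strictly, the degenerate case is not even needed, since $U^\flat$ is still well defined and still has a zero row; I would keep it anyway for clarity.
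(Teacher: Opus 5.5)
Your proposal is correct and follows the paper's proof essentially verbatim. Like the paper, you place each column mass on its own diagonal row to get $U^\flat$, whose rows $m+1,\dots,n$ are zero; you recover $U=TU^\flat$ with a column-stochastic $T$; and you conclude $0\le F(U)\le F(U^\flat)=0$ using Lemma~\ref{lem:refPone-derived}, \refDPI{}, and Theorem~\ref{thm:zero-row-vanishing}. As you already note, the degenerate case is unnecessary: the paper simply lets the columns of $T$ for zero column masses, and the columns with index $j>m$, be arbitrary probability vectors.
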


\begin{proof}
For \(U\in\Delta_{n,m}\), let
\(s_\alpha:=\sum_{i=1}^n u_{i\alpha}\) be the mass of column
\(\alpha\), and define \(U^\flat\in\Delta_{n,m}\) by
\[
U^\flat_{i\alpha}
=
\begin{cases}
s_\alpha,&i=\alpha,\\
0,&i\ne\alpha.
\end{cases}
\]
Since \(n>m\), the rows \(m+1,\ldots,n\) of \(U^\flat\) are zero.
Theorem~\ref{thm:zero-row-vanishing} therefore gives \(F(U^\flat)=0\).

We now construct a column-stochastic \(T\) with \(TU^\flat=U\).
For each \(\alpha\le m\) with \(s_\alpha>0\), take the
\(\alpha\)-th column of \(T\) to be
\(u_{\bullet\alpha}/s_\alpha\). If \(s_\alpha=0\), then the
\(\alpha\)-th column of \(U\) is zero, so that column of \(T\) may be
chosen arbitrarily as a probability vector. Choose the remaining
\(n-m\) columns of \(T\) arbitrarily as probability vectors as well.
Then \(T\) is column-stochastic and \(TU^\flat=U\). Consequently,
\[
0\le F(U)=F(TU^\flat)\le F(U^\flat)=0,
\]
where the first inequality is Lemma~\ref{lem:refPone-derived}.
Thus \(F(U)=0\).
\end{proof}

\begin{corollary}[Polynomial form]
\label{cor:normalized-polynomial}
Let \(n>m\ge2\), and let \(P\in R_{n,m}\). If its restriction to
\(\Delta_{n,m}\) satisfies \refDPI{} and \refIZ{}, then
\[
P\in(\tau-1),
\qquad\text{equivalently}\qquad
\overline P=0\ \text{in }\overline R_{n,m}.
\]
\end{corollary}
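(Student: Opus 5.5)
The plan is to mirror the proof of Corollary~\ref{cor:3x2-polynomial-simplex}, with Theorem~\ref{thm:normalized} now supplying the vanishing. First, the restriction of \(P\) to \(\Delta_{n,m}\) is a polynomial function, hence \(C^1\)-extendable (Example~\ref{ex:C1-functionals}); the ambient polynomial itself serves as the extension on all of \(\mathbb R^{n\times m}\). By hypothesis it satisfies \refDPI{} and \refIZ{}, and \(n>m\ge2\). Theorem~\ref{thm:normalized} then gives \(P(U)=0\) for every \(U\in\Delta_{n,m}\).

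Second, I will convert vanishing on the simplex into ideal membership. Fix one entry, say \(u_{11}\). Since \(\tau-1\) is monic of degree one in \(u_{11}\), polynomial division in \(u_{11}\) gives
\[
P=(\tau-1)Q+R_0,
\]
where \(Q\in R_{n,m}\) and \(R_0\) does not involve \(u_{11}\). On \(\Delta_{n,m}\) the first term vanishes, so \(R_0\) vanishes there too. The projection of the relative interior of \(\Delta_{n,m}\) onto the remaining \(nm-1\) coordinates is the nonempty open set \(\{u_{i\alpha}>0 \text{ for } (i,\alpha)\ne(1,1),\ \sum_{(i,\alpha)\ne(1,1)}u_{i\alpha}<1\}\). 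On this set \(R_0\) is identically zero, since \(u_{11}=1-\sum(\text{others})>0\) is then determined. A real polynomial that vanishes on a nonempty open set is the zero polynomial, so \(R_0=0\). Hence \(P\in(\tau-1)\), which is the same as \(\overline P=0\) in \(\overline R_{n,m}\).

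There is no real obstacle here. The substantive content is Theorem~\ref{thm:normalized}, and the only point needing care is that the projection of the simplex to \(nm-1\) free coordinates has nonempty interior. That holds because \(\Delta_{n,m}\) has dimension \(nm-1\) and the projection is affine and injective on the hyperplane \(\tau=1\).
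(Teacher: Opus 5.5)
Your proposal is correct and follows the paper's proof: polynomials are \(C^1\)-extendable, so Theorem~\ref{thm:normalized} gives \(P|_{\Delta_{n,m}}=0\). The conclusion \(P\in(\tau-1)\) then comes from dividing by the monic linear polynomial \(\tau-1\) in one entry and noting that the remainder vanishes on a nonempty open set, exactly as in Corollary~\ref{cor:3x2-polynomial-simplex}.
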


\begin{proof}
The restriction of \(P\) is \(C^1\)-extendable, so
Theorem~\ref{thm:normalized} gives
\(P|_{\Delta_{n,m}}=0\). The same one-variable division argument as
in Corollary~\ref{cor:3x2-polynomial-simplex} yields
\(P\in(\tau-1)\).
\end{proof}

The flattening argument breaks down precisely at \(n=m\): the matrix
\(U^\flat\) then does not have a zero row. This is why the square case
admits nonzero examples and leads to determinant-square divisibility
rather than total vanishing.

\section{The non-tall case \texorpdfstring{$n\le m$}{n<=m}:
determinantal rigidity}
\label{sec:non-tall}

The tall regime \(n>m\) forces total vanishing. We now turn to
the complementary regime \(n\le m\), where nonzero polynomial
monotones can exist. The cases \(m=n\) and \(m>n\) have the same underlying
structure: every admissible polynomial vanishes to second order on the
locus where \(U\) loses full row rank. When \(m=n\), the maximal-minor ideal is the principal ideal generated
by \(\det U\); when \(m>n\), it is a genuine determinantal ideal.

For each \(S\subseteq[m]\) with \(|S|=n\), let
\[
  \Delta_S(U):=\det U_{\bullet S},
\]
and define the maximal-minor ideal
\[
  I_n
  :=
  \left\langle
    \Delta_S(U):S\subseteq[m],\ |S|=n
  \right\rangle
  \subseteq R_{n,m}.
\]
Thus \(I_n\) consists of polynomial combinations of the maximal minors,
while \(I_n^2\) is generated by all products of two maximal minors. Write
\[
  \overline I_n
  :=
  \frac{I_n+(\tau-1)}{(\tau-1)}
  \subseteq\overline R_{n,m},
\]
that is, the image of \(I_n\) after imposing the simplex equation
\(\tau=1\).

\begin{theorem}[Non-tall determinantal rigidity]
\label{thm:non-tall}
Let \(2\le n\le m\), and let \(F\in R_{n,m}\) be a polynomial whose
restriction to \(\Delta_{n,m}\) satisfies \refDPI{} and \refIZ{}.
Then
\[
  \overline F\in\overline I_n^{\,2}.
\]
Consequently, either \(F\equiv0\) on \(\Delta_{n,m}\), or
\(\deg F\ge2n\). The degree bound is attained by the Gram-determinant
polynomial
\[
  F_{\mathrm{Gram}}(U):=\det(UU^\top).
\]
\end{theorem}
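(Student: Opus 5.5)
The plan is to pass to a homogeneous polynomial and use commutative algebra on the generic determinantal variety
\[
D:=\{U\in\bbR^{n\times m}:\rank U<n\}.
\]
Let \(d=\deg F\) and write \(F=\sum_{k\le d}F_k\) with \(F_k\) homogeneous of degree \(k\). Set \(G:=\sum_k F_k\tau^{d-k}\). Then \(G\) is homogeneous of degree \(d\), \(G\equiv F\pmod{\tau-1}\), and \(G=F\) on \(\Delta_{n,m}\). It therefore suffices to prove \(G\in I_n^2\) in \(R_{n,m}\). The degree bound then follows by grading: \(I_n^2\) is generated in degree \(2n\), so a nonzero homogeneous element of it has degree \(\ge 2n\). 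If \(G=0\), then \(F\equiv0\) on the simplex. The argument has three steps: \(G\) vanishes on \(D\); \(G\) vanishes to second order there; and a symbolic-power identity converts this into \(G\in I_n^2\).

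\emph{Step 1: \(G\) vanishes on \(D\).} Take \(U'\in\Delta_{n,m}\) with a zero row. By Theorem~\ref{thm:zero-row-vanishing}, \(F(U')=0\). For any column-stochastic \(T\), \refDPI{} and Lemma~\ref{lem:refPone-derived} give \(0\le F(TU')\le F(U')=0\). So \(F\) vanishes on the image set \(\{TU'\}\), which consists of rank-\(\le n-1\) matrices. Taking \(T\) with strictly positive entries and \(U'\) with positive entries outside its zero row, I will check by a dimension count that this image contains a nonempty Euclidean-open subset of the smooth locus of \(D\cap\{\tau=1\}\). The count is that \(TU'\) runs over all matrices whose row space is spanned by \(n-1\) given positive rows, mixed with positive coefficients. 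By homogeneity, \(G\) then vanishes on an open subset of the real points of the smooth locus of the cone \(D\). The determinantal variety is irreducible, and its real points are Zariski dense because it is rational, parametrized by \(U=AB\) with \(A\) of size \(n\times(n-1)\). Hence \(G\) vanishes on \(D\), and since \(I_n\) is prime, \(G\in I_n\).

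\emph{Step 2: second-order vanishing.} By Lemma~\ref{lem:refPone-derived}, \(F\ge0\) on \(\Delta_{n,m}\), so \(G\ge0\) on the open cone over the relative interior of the simplex. At a smooth point \(U_0\) of \(D\) with positive entries, \(G\) attains a minimum value \(0\) in all directions. Hence \(\nabla G(U_0)=0\). Doing this on the Zariski-dense set of such points, every partial derivative \(\partial G/\partial u_{i\alpha}\) vanishes on \(D\), i.e.\ lies in \(I_n\). By the Zariski–Nagata characterization, \(G\) lies in the second symbolic power \(I_n^{(2)}\).

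\emph{Step 3: from symbolic to ordinary powers.} For maximal minors of a generic matrix, symbolic and ordinary powers coincide: \(I_n^{(k)}=I_n^k\) (de Concini–Eisenbud–Procesi, or Bruns–Vetter). Hence \(G\in I_n^2\), and reducing modulo \(\tau-1\) gives \(\overline F\in\overline I_n^{\,2}\).

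\emph{Step 4: tightness.} By Cauchy–Binet, \(\det(UU^\top)=\sum_{|S|=n}\Delta_S(U)^2\). This lies in \(I_n^2\), has degree \(2n\), and is positive on full-rank points of the simplex. Independence-zero holds because rank one with \(n\ge2\) forces \(\det(UU^\top)=0\). For DPI, \(\det(TUU^\top T^\top)=(\det T)^2\det(UU^\top)\), and \(|\det T|\le1\) for column-stochastic \(T\) by Hadamard's inequality, since each column has Euclidean norm \(\le\) its \(\ell_1\) norm \(=1\).

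\emph{Main obstacle.} I expect Step 1 to be the hardest, specifically checking that stochastic images of zero-row matrices fill an open piece of \(D\). If the dimension count is delicate, the fallback is to parametrize rank-\((n-1)\) positive matrices as \(T\,\mathrm{diag}\)-rescaled \((n-1)\)-row matrices directly.
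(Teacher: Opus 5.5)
Your proposal is correct, but its middle step differs from the paper's.

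\emph{What is shared.} Both arguments homogenize, and both get vanishing from the zero-row principle together with DPI and nonnegativity. Both also rest on the same deep input: your $I_n^{(2)}=I_n^2$ is exactly the paper's black box that $I_n^2$ is $I_n$-primary (Bruns--Vetter, Cor.~9.18).

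\emph{How the paper proceeds instead.} It never shows that its homogenization $H$ vanishes on all of $D$, and it never invokes Zariski--Nagata. It works on the chart $\delta\ne0$ with Schur coordinates $q^\top=d^\top-c^\top A^{-1}B$, where $I_n$ localizes to $(q_1,\dots,q_r)$. It then takes an open box of strictly positive matrices with $q=0$ whose last row is a positive combination of the others. There $H$ vanishes and has a local minimum, so the open-set identity kills the constant and linear coefficients of the $q$-expansion. This gives $\delta^N H\in I_n^2$, and $\delta^N$ is then removed by primaryness, or by unique factorization when $m=n$. That chart computation is essentially a hands-on proof of the Zariski--Nagata step at the generic point of $D$.

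\emph{What your route buys.} It is shorter and uniform in $m\ge n$, and the parametrization $\mu(A,B)=AB$ gives a clean density argument. You can even skip the dimension count. For strictly positive $A,B$, rescaling the columns of $A$ to sum to one exhibits $AB/\tau(AB)$ as a stochastic image of a zero-row matrix. Hence $G\circ\mu$ and each $(\partial G/\partial u_{i\alpha})\circ\mu$ vanish on an open set of $(A,B)$, and therefore identically.

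\emph{What it costs.} You need two extra standard inputs:
\begin{enumerate}
\item the characteristic-zero differential description of $\mathfrak p^{(2)}$;
\item the identification of the vanishing ideal of $D$ with $I_n$, via the Nullstellensatz and primality.
\end{enumerate}
The paper's square case, by contrast, needs only irreducibility of $\det U$ and unique factorization. Your square case can be made equally elementary. From $G=(\det U)\,G_1$ and $\partial G/\partial u_{i\alpha}\in(\det U)$, you get $G_1\,\partial(\det U)/\partial u_{i\alpha}\in(\det U)$. Since this cofactor is nonzero of degree $n-1$, primality of $\det U$ gives $\det U\mid G_1$.
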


\begin{proof}[High-level proof]
Let \(H\) be the \(\tau\)-homogenization of \(F\). Then \(H\ge0\) on
the nonnegative cone with positive total mass, and the zero-row
vanishing principle implies that \(H\) vanishes whenever one row is a
nonnegative linear combination of the others.

Choose an \((n-1)\times(n-1)\) minor \(\delta\), and work on the chart
\(\delta\ne0\). After permuting rows and columns, write
\[
  U=
  \begin{pmatrix}
    A&B\\
    c^\top&d^\top
  \end{pmatrix},
  \qquad
  \delta=\det A,
\]
and introduce the Schur coordinates
\(q^\top:=d^\top-c^\top A^{-1}B\). On this chart,
\(q=0\) is exactly the rank-deficient locus, and the maximal minors
generate the same ideal as the coordinates
\(q_1,\ldots,q_{m-n+1}\).

There is a nonempty positive region of the locus \(q=0\) on which the
last row is a positive linear combination of the others. Hence \(H=0\)
there. Since \(H\ge0\), \(q=0\) is also a local minimum in the normal
directions. Thus both the constant and linear terms in the
\(q\)-expansion of \(H\) vanish. The open-set identity therefore shows
that, on the chart \(\delta\ne0\), \(H\) is a polynomial combination of
products of two maximal minors. Clearing the powers of \(\delta\)
introduced by \(A^{-1}\), we obtain
\[
  \delta^N H\in I_n^2
\]
for some \(N\ge0\).

The cases \(m=n\) and \(m>n\) differ only in the final step of removing
the artificial factor \(\delta^N\). When \(m=n\),
\(I_n=(\det U)\), so this cancellation follows from elementary unique
factorization. When \(m>n\), we instead use one standard result about
generic maximal-minor ideals. Thus \(H\in I_n^2\) in either case.
Although the standard determinantal-ideal argument yields a unified proof for all \(m\geq n\), we treat the square case \(m=n\) separately in order to retain a more elementary proof.

Since \(H-F\in(\tau-1)\), we obtain
\(\overline F\in\overline I_n^{\,2}\).

Finally, every nonzero element of \(I_n^2\) has degree at least \(2n\),
while Cauchy--Binet gives
\[
  \det(UU^\top)
  =
  \sum_{\substack{S\subseteq[m]\\|S|=n}}
  \det(U_{\bullet S})^2,
\]
which attains degree \(2n\). The complete proof, including the two
different globalization steps, is given in
Appendix~\ref{app:non-tall}.
\end{proof}

\subsection{The square case}
\label{sec:nxn}

When \(m=n\), there is only one maximal minor, so
\(I_n=(\det U)\). Theorem~\ref{thm:non-tall} therefore specializes to
the following form.

\begin{theorem}[Square case]
\label{thm:nxn}
Let \(n\ge2\), and let \(F\in R_{n,n}\) be a polynomial whose
restriction to \(\Delta_{n,n}\) satisfies \refDPI{} and \refIZ{}.
Then
\[
  \overline F\in(\det U)^2
  \subseteq\overline R_{n,n}.
\]
Equivalently, on \(\Delta_{n,n}\),
\[
  F(U)=(\det U)^2G(U)
\]
for some polynomial function \(G\) on the simplex. Consequently,
either \(F\equiv0\) on \(\Delta_{n,n}\), or \(\deg F\ge2n\).
\end{theorem}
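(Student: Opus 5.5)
Theorem~\ref{thm:nxn} is the specialization \(m=n\) of Theorem~\ref{thm:non-tall}, but I would give a self-contained elementary argument, as the high-level proof suggests. The route is: homogenize, show the homogenization vanishes to second order along \(\det U=0\), and then use unique factorization to pull out \((\det U)^2\).

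\textbf{Step 1: homogenize.} Let \(F\) have degree \(d\). Set
\[
H(U):=\sum_k \tau^{d-k}F_k(U),
\]
where \(F_k\) is the degree-\(k\) part of \(F\). Then \(H\) is homogeneous of degree \(d\), and \(H-F\in(\tau-1)\). On the open cone of nonnegative matrices with \(\tau>0\) we have \(H(U)=\tau^dF(U/\tau)\). Consequently \(H\ge0\) there by Lemma~\ref{lem:refPone-derived}.

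\textbf{Step 2: \(H\) vanishes on a nonempty open subset of \(\{\det U=0\}\).} Take \(U\) with all entries positive and last row \(u_{n\bullet}=\sum_{i<n}\lambda_iu_{i\bullet}\) with \(\lambda_i>0\). Merging the last row into the others in proportions \(\lambda_i\) is a column-stochastic map, and it produces a matrix with a zero row. By DPI and Theorem~\ref{thm:zero-row-vanishing} that image has value \(0\).

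For the reverse direction, a column-stochastic channel can split the pieces \(\lambda_iu_{i\bullet}\) back out of each row \(i\) into row \(n\). So \(U\) is a stochastic image of a zero-row matrix, and hence \(0\le F(U)\le 0\).

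Such \(U\), scaled to arbitrary \(\tau>0\), form a nonempty relatively open subset \(\mathcal O\) of the smooth part of the irreducible hypersurface \(\{\det U=0\}\). This holds because the rows \(u_{1\bullet},\dots,u_{n-1\bullet}\) are generic positive and the coefficients \(\lambda\) are in an open set.

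\textbf{Step 3: first-order vanishing and divisibility.} Since \(\det U\) is irreducible and \(H\) vanishes on a Zariski-dense subset \(\mathcal O\) of its zero set, \(\det U\mid H\). Write \(H=(\det U)K\). Points of \(\mathcal O\) lie in the interior of the nonnegative cone, so \(H\ge0\) near them. Since \(H=0\) on \(\mathcal O\), every point of \(\mathcal O\) is a local minimum of \(H\), and therefore \(\nabla H=0\) there.

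Because \(\nabla\det U\neq0\) at smooth points (rank exactly \(n-1\)), we get \(\nabla H=K\nabla\det U\) on \(\mathcal O\). Hence \(K=0\) on \(\mathcal O\), and irreducibility again gives \(\det U\mid K\). Thus \(H=(\det U)^2G\) with \(G\) a polynomial.

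The main obstacle is this step: justifying that vanishing on the open piece \(\mathcal O\) forces divisibility. I would do this by noting that the zero locus of \(\gcd\)-free \(H\) on an irreducible hypersurface is either the whole hypersurface or of lower dimension. Equivalently, on the chart \(\delta\neq0\) of a cofactor, solve \(\det U=0\) for \(u_{nn}\) linearly, and use division by \(\det U\) as a polynomial in \(u_{nn}\) with coefficient \(\delta\). The remainder is a polynomial vanishing on an open set, hence zero. This yields \(\delta^N H\in(\det U)\), and since \(\delta\) and \(\det U\) are coprime, \(H\in(\det U)\).

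\textbf{Step 4: conclude.} Reducing modulo \(\tau-1\) gives \(\overline F=\overline H\in(\det U)^2\), so \(F=(\det U)^2G\) on \(\Delta_{n,n}\). If \(F\not\equiv0\) on the simplex, then \(H\neq0\), and \(H\) is homogeneous of degree \(d\) and divisible by \((\det U)^2\). Since \(\deg(\det U)^2=2n\), this forces \(d\ge 2n\).
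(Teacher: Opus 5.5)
Your proposal is correct and follows essentially the paper's route: homogenization, row-cone vanishing via the zero-row principle, first-order contact from the local minimum, division on the chart \(\delta\neq0\), and primality of \(\det U\). The differences are minor: you peel off the two factors of \(\det U\) one at a time using \(\nabla H=K\,\nabla\det U\) on \(\mathcal O\) instead of reading off the constant and linear coefficients of a Schur-coordinate expansion, and you take irreducibility of the generic determinant as known where the paper proves it; also, no column-stochastic merge sends \(U\) to the zero-row matrix with rows \((1+\lambda_i)u_{i\bullet}\) (a channel moves row \(n\) as a whole vector), but that direction is unnecessary, since splitting that zero-row matrix onto \(U\) plus nonnegativity already gives \(F(U)=0\).
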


The square case admits a particularly elementary final step in the
proof. After the common local argument gives
\(\delta^N H\in(\det U)^2\), irreducibility of the generic determinant
and unique factorization allow the factor \(\delta^N\) to be canceled
directly. We retain this elementary argument in
Appendix~\ref{app:non-tall}, rather than deducing the square case from
the heavier determinantal-ideal result used when \(m>n\).

The degree bound is sharp. Indeed,
\[
  F_{\mathrm{Gram}}(U)
  =
  \det(UU^\top)
  =
  (\det U)^2
\]
when \(m=n\). This polynomial has degree \(2n\), satisfies \refIZ{},
and satisfies \refDPI{} because for every column-stochastic \(T\),
Hadamard's inequality gives \(|\det T|\le1\), and hence
\[
  (\det(TU))^2
  =
  (\det T)^2(\det U)^2
  \le
  (\det U)^2.
\]
Thus \((\det U)^2\), known as determinant-based mutual
information~\cite{DBLP:conf/soda/Kong20}, attains the minimum possible
degree.

\section{Beyond divisibility in the square case}
\label{sec:square-beyond-divisibility}

Theorem~\ref{thm:nxn} gives a necessary algebraic form in the square
case, but divisibility by \((\det U)^2\) does not by itself characterize
DPI. We now examine the residual recognition problem. First, the
determinant-square factor supports a large family of DPI monotones.
Second, despite this broad sufficient construction, recognizing DPI
remains computationally hard when the alphabet size varies.

\subsection{Constant compensation after the determinant-square factor}
\label{ssec:constant-compensation-square}

A polynomial \(P\in R_{n,n}\) is \emph{row-symmetric on the simplex}
if \(P(QU)=P(U)\) for every \(n\times n\) permutation matrix \(Q\)
and every \(U\in\Delta_{n,n}\).

\begin{proposition}[Constant compensation]
\label{prop:constant-compensation-full-dpi}
Let \(P\in R_{n,n}\) be row-symmetric on \(\Delta_{n,n}\). Then there
exists \(M\in\mathbb R\) such that
\[
  F_{P,M}(U):=(\det U)^2\bigl(P(U)+M\bigr)
\]
satisfies
\[
  F_{P,M}(TU)\le F_{P,M}(U)
\]
for every \(n\times n\) column-stochastic \(T\) and every
\(U\in\Delta_{n,n}\). More explicitly, one may take any
\(M\ge\max\{0,M_*(P)\}\), where
\[
M_*(P)
:=
\sup_{\substack{
T\ \mathrm{column\text{-}stochastic},\
T\ \mathrm{not\ a\ permutation}\\
U\in\Delta_{n,n}}}
\frac{
(\det T)^2P(TU)-P(U)
}{
1-(\det T)^2
}.
\]
The quantity \(M_*(P)\) is finite.
\end{proposition}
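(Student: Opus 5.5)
The plan is to reduce the inequality to the displayed ratio condition, handling separately the degenerate cases where the ratio is undefined, and then to prove that $M_*(P)$ is finite by a compactness argument combined with a local estimate near permutation matrices. Fix a column-stochastic $T$ and $U\in\Delta_{n,n}$. Since $\det(TU)=\det T\det U$, the desired inequality $F_{P,M}(TU)\le F_{P,M}(U)$ reads
\[
  (\det U)^2\Bigl[(\det T)^2\bigl(P(TU)+M\bigr)-\bigl(P(U)+M\bigr)\Bigr]\le 0 .
\]
If $\det U=0$, both sides vanish and there is nothing to prove. If $T$ is a permutation matrix, then $(\det T)^2=1$ and $P(TU)=P(U)$ by row-symmetry, so equality holds. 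Otherwise, I will show that $(\det T)^2<1$. By Hadamard's inequality, $|\det T|\le\prod_j\|T_{\bullet j}\|_2$. Each column is a probability vector, so $\|T_{\bullet j}\|_2\le1$, with equality only if that column is a standard basis vector. If every column is a standard basis vector and $|\det T|=1$, the columns are distinct, so $T$ is a permutation. Hence for non-permutation $T$ we have $1-(\det T)^2>0$, and the bracket is nonpositive exactly when
\[
  M\ \ge\ \frac{(\det T)^2P(TU)-P(U)}{1-(\det T)^2}.
\]
So any $M\ge\max\{0,M_*(P)\}$ works, provided $M_*(P)<\infty$.

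The main obstacle is finiteness, because the denominator tends to $0$ as $T$ approaches a permutation. I will split the domain. First, fix a small $r>0$ and consider the set of column-stochastic $T$ at distance $\ge r$ from every permutation. This set is compact, and $1-(\det T)^2$ is continuous and positive on it, so it has a positive minimum there. The numerator is bounded by $2\max_{\Delta_{n,n}}|P|$, using that $|\det T|\le1$ and $TU\in\Delta_{n,n}$. Hence the ratio is bounded on this region.

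Second, suppose $T$ is within $r$ of a permutation $Q$. I will replace $T$ by $T'=Q^{-1}T$, which is again column-stochastic and is close to $I$. Row-symmetry gives $P(TU)=P(T'U)$, and also $(\det T)^2=(\det T')^2$, so it suffices to treat $T=I+E$ near $I$.

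For this local case, let $e_j$ be the off-diagonal mass of column $j$, so that $T_{jj}=1-e_j$, and set $\varepsilon:=\sum_j e_j$. The upper and lower estimates are as follows.
\begin{itemize}
  \item \emph{Lower bound on the denominator.} For $e_j\le 1/2$,
  \[
    \|T_{\bullet j}\|_2^2\le(1-e_j)^2+e_j^2\le 1-e_j .
  \]
  Hadamard's inequality then gives $(\det T)^2\le\prod_j(1-e_j)\le e^{-\varepsilon}\le 1-\varepsilon/2$ for small $\varepsilon$. Hence $1-(\det T)^2\ge\varepsilon/2$.
  \item \emph{Upper bound on the numerator.} Write the numerator as
  \[
    \bigl(P(TU)-P(U)\bigr)-\bigl(1-(\det T)^2\bigr)P(TU).
  \]
  Since $P$ is Lipschitz on the compact simplex, with constant $L$ say, and $\|EU\|\le C\varepsilon$ (because the entries of $E$ are bounded by $e_j$ in column $j$ and $U$ is bounded), the first term is at most $LC\varepsilon$.
\end{itemize}
Combining the two bounds, the ratio is at most $2LC+\max_{\Delta_{n,n}}|P|$, uniformly over the local region. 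Together with the compact region, this shows $M_*(P)<\infty$ and completes the plan.
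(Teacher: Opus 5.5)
Your proposal is correct and follows essentially the same route as the paper: the same two-region split (a compact set away from the permutation matrices versus small neighborhoods of them, reduced to the identity via $Q^{-1}$ and row-symmetry), with the denominator bounded below linearly in the off-diagonal mass $\varepsilon$ and the numerator controlled by a Lipschitz bound on $P$. The only minor difference is how you bound the denominator. You get $1-(\det T)^2\ge\varepsilon/2$ directly from Hadamard's inequality via $\|T_{\bullet j}\|_2^2\le 1-e_j$. The paper instead uses the Taylor expansion $\det(I+A)=1+\operatorname{tr}A+O(\|A\|_1^2)$, which also requires arranging $\det T\ge 0$; your bound is non-asymptotic and avoids that step.
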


\begin{proof}[Proof sketch]
Hadamard's inequality gives \(|\det T|\le1\), with equality exactly
at permutation matrices. Thus it suffices to bound the displayed
ratio. Near a permutation matrix \(Q\), row-symmetry gives
\(P(QU)=P(U)\), while
\[
  \|T-Q\|_1
  =
  O\!\left(1-(\det T)^2\right).
\]
Since \(P\) is Lipschitz on the compact simplex, the numerator is
therefore \(O(1-(\det T)^2)\). Away from permutation matrices, the
denominator is uniformly bounded away from zero and compactness gives
a uniform bound on the numerator. Hence \(M_*(P)<\infty\).
The complete two-region estimate is given in
Appendix~\ref{app:constant-compensation-full-dpi}.
\end{proof}

Thus the determinant-square factor is not merely a necessary
divisor: after adding a sufficiently large constant, it turns every
row-symmetric polynomial factor into a DPI monotone.

\subsection{Recognizing data processing is coNP-hard}
\label{ssec:dpi-recognition-hard}

Proposition~\ref{prop:constant-compensation-full-dpi} gives a broad
sufficient family, but it does not yield an efficient general
recognition criterion. In fact, deciding DPI is already hard within a
very restricted determinant-square family.

For \(U\in\Delta_{n,n}\), write
\[
  c(U):=U^\top\mathbf1_n\in\Delta_n
\]
for its column marginal, where \(\mathbf1_n\in\mathbb R^n\) is the
all-ones vector and
\[
  \Delta_n:=\{c\in\mathbb R_{\ge0}^n:\mathbf1_n^\top c=1\}
\]
is the probability simplex on \(n\) outcomes.

\begin{theorem}[DPI recognition is coNP-hard]
\label{thm:dpi-recognition-hard}
When \(n\ge 2\) is part of the input, deciding whether a rational polynomial
\(F\), represented by an arithmetic circuit, satisfies
\[
  F(TU)\le F(U)
\]
for every \(n\times n\) column-stochastic matrix \(T\) and every
\(U\in\Delta_{n,n}\) is coNP-hard. The hardness already holds for
polynomials of the form
\[
  F_A(U):=(\det U)^2\,c(U)^\top A c(U),
\]
where \(A\in\mathbb Q^{n\times n}\) is symmetric. Every polynomial in
this family is row-symmetric and independence-zero.
\end{theorem}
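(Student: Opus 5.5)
The plan is a polynomial-time reduction from matrix copositivity, which is coNP-complete: decide whether \(x^\top A x\ge0\) for all \(x\ge0\), equivalently for all \(x\in\Delta_n\). Given a rational symmetric \(A\), we output \(n\) and a polynomial-size arithmetic circuit for \(F_A\). The determinant has a polynomial-size division-free circuit, for example via Berkowitz's algorithm or Gaussian elimination with divisions eliminated. Row-symmetry is immediate: a row permutation \(Q\) changes \(\det U\) only by a sign and leaves \(c(QU)=U^\top Q^\top\mathbf1_n=c(U)\) unchanged. Independence-zero holds because \(\det U=0\) at rank one (using \(n\ge2\)).

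The key identity is that every column-stochastic \(T\) preserves the column marginal: \(c(TU)=U^\top T^\top\mathbf1_n=U^\top\mathbf1_n=c(U)\). Writing \(q(U):=c(U)^\top Ac(U)\), we therefore get
\[
F_A(TU)=(\det T)^2(\det U)^2q(U),
\]
and so
\[
F_A(U)-F_A(TU)=\bigl(1-(\det T)^2\bigr)(\det U)^2q(U).
\]
By Hadamard's inequality, \(1-(\det T)^2\ge0\).

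For the direction "\(A\) copositive \(\Rightarrow\) DPI": \(q(U)\ge0\) because \(c(U)\in\Delta_n\), so the difference above is nonnegative.

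For the direction "not copositive \(\Rightarrow\) DPI fails", pick \(x\in\Delta_n\) with \(x^\top Ax<0\). We need \(U\in\Delta_{n,n}\) with \(c(U)=x\) and \(\det U\ne0\). Take \(U=\operatorname{diag}(x)\) if \(x>0\). If \(x\) has zeros, perturb: by continuity, strict negativity persists for a nearby strictly positive \(x'\in\Delta_n\), and we use that instead. Then choose any non-permutation column-stochastic \(T\) with \(|\det T|<1\), for instance the rank-one \(T=\frac1n\mathbf1\mathbf1^\top\), for which \(\det T=0\). The difference is then \((\det U)^2q(U)<0\), violating DPI.

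Hence \(F_A\) satisfies DPI if and only if \(A\) is copositive. Since deciding copositivity of rational symmetric matrices is coNP-complete (Murty--Kabadi), DPI recognition is coNP-hard.

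The main obstacle is bookkeeping rather than mathematics. First, we need to cite the precise coNP-hardness result with rational inputs and \(n\) part of the input. Second, we must confirm that the circuit for \(F_A\) has polynomial size in the bit-length of \(A\). The positivity perturbation of \(x\) is only a minor technical point.
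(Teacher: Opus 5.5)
Your proposal is correct and follows the paper's proof essentially step for step. It reduces from copositivity via \(F_A\), uses the invariance \(c(TU)=c(U)\) to get \(F_A(TU)-F_A(U)=((\det T)^2-1)(\det U)^2c(U)^\top Ac(U)\), applies Hadamard's inequality for one direction, and uses \(U=\operatorname{diag}(x)\) with a strictly positive witness and a rank-one channel for the other; your mention of a division-free determinant circuit only makes explicit a detail the paper states briefly.
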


\begin{proof}
Recall that a symmetric matrix \(A\in\mathbb Q^{n\times n}\) is
\emph{copositive} if \(x^\top Ax\ge0\) for every
\(x\in\mathbb R_{\ge0}^n\). Since this quadratic form is homogeneous,
copositivity is equivalent to
\(c^\top Ac\ge0\) for every \(c\in\Delta_n\). Deciding copositivity of
a rational symmetric matrix is coNP-complete~\cite{MurtyKabadi1987}.

Given \(A\), define
\[
  F_A(U):=(\det U)^2\,c(U)^\top A c(U).
\]
An arithmetic circuit for \(F_A\) can be constructed in polynomial
time from \(A\): \(c(U)\) is linear in \(U\), the quadratic form has
polynomial-size circuits, and so does the determinant. Thus
\(A\mapsto F_A\) is a polynomial-time reduction.

We claim that \(F_A\) satisfies DPI if and only if \(A\) is
copositive. First observe that for every column-stochastic \(T\),
\(T^\top\mathbf1_n=\mathbf1_n\), and hence
\[
  c(TU)
  =
  U^\top T^\top\mathbf1_n
  =
  c(U).
\]
The same identity shows that \(F_A\) is row-symmetric, and
\(\rank U\le1\) implies \(F_A(U)=0\).

Using \(\det(TU)=\det T\,\det U\), we obtain
\[
  F_A(TU)-F_A(U)
  =
  \bigl((\det T)^2-1\bigr)
  (\det U)^2\,c(U)^\top A c(U).
\]
Hadamard's inequality gives \(|\det T|\le1\). Therefore, if \(A\) is
copositive, then \(c(U)^\top Ac(U)\ge0\) and
\(F_A(TU)\le F_A(U)\). Hence \(F_A\) satisfies DPI.

Conversely, suppose that \(A\) is not copositive. Then some
\(c\in\Delta_n\) satisfies \(c^\top Ac<0\). By continuity, \(c\) may
be chosen with every coordinate strictly positive. Set
\(U:=\operatorname{diag}(c_1,\ldots,c_n)\). Then
\(c(U)=c\), \(\det U>0\), and hence \(F_A(U)<0\). If \(T_0\) is any
rank-one column-stochastic matrix, then \(\det T_0=0\), so
\[
  F_A(T_0U)=0>F_A(U),
\]
contradicting DPI.

Thus \(F_A\) satisfies DPI exactly when \(A\) is copositive.
Copositivity testing is coNP-complete and \(A\mapsto F_A\) is
polynomial-time computable, so DPI recognition is coNP-hard.
\end{proof}
\section{An Impossibility Triangle}
\label{sec:faithfulness}

The preceding results yield when data
processing and exact detection of independence can hold simultaneously.
The obstruction is stronger than finite-sample impossibility: it already
holds for \(C^1\) functionals.

Recall that a functional \(F:\Delta_{n,m}\to\mathbb{R}\) is
\emph{independence-faithful} if
\[
    F(U)=0
    \quad\Longleftrightarrow\quad
    \rank(U)=1,
\]
or equivalently, if it vanishes exactly when the two variables are
independent.

\begin{theorem}[\(C^1\)-extendable one-sided faithfulness classification]
\label{thm:C1-one-sided-faithfulness}
Let \(n,m\ge2\). The following statements are equivalent:
\begin{enumerate}
    \item There exists a \(C^1\)-extendable functional
    \(F:\Delta_{n,m}\to\mathbb R\) that is independence-faithful and
    satisfies left-sided data processing on the \(n\)-state variable.
    \item \(n=2\).
\end{enumerate}

When \(n=2\), one such functional is
\[
    F_{\mathrm{Gram}}(U):=\det(UU^\top).
\]
\end{theorem}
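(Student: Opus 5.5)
The proof splits into the two implications.

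\emph{(1)\(\Rightarrow\)(2).} Suppose \(n\ge3\) and \(F\) is \(C^1\)-extendable, independence-faithful, and satisfies \refDPI{}. Faithfulness implies \refIZ{}, so Theorem~\ref{thm:zero-row-vanishing} applies and \(F\) vanishes on every \(U\) with a zero row. Since \(n\ge3\), we can exhibit a dependent distribution with a zero row, for instance
\[
U=\tfrac12\bigl(E_{11}+E_{22}\bigr),
\]
whose third row and remaining rows are zero. This matrix has rank two, so faithfulness demands \(F(U)\neq0\), while Theorem~\ref{thm:zero-row-vanishing} gives \(F(U)=0\). That contradiction forces \(n=2\), so this direction is immediate.

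\emph{(2)\(\Rightarrow\)(1).} Let \(n=2\) and \(m\ge2\) be arbitrary, and take \(F_{\mathrm{Gram}}(U)=\det(UU^\top)\). It is a polynomial, hence \(C^1\)-extendable. For DPI, note that \(T\) is \(2\times2\), so
\[
\det(TUU^\top T^\top)=(\det T)^2\det(UU^\top).
\]
Hadamard's inequality gives \(|\det T|\le1\); alternatively, for a \(2\times2\) column-stochastic matrix, \(\det T=T_{11}-T_{12}\in[-1,1]\). Together with \(\det(UU^\top)\ge0\), this yields \(F(TU)\le F(U)\).

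For faithfulness, \(UU^\top\) is a \(2\times2\) positive semidefinite Gram matrix. By Cauchy--Binet,
\[
\det(UU^\top)=\sum_{|S|=2}\det(U_{\bullet S})^2,
\]
which vanishes exactly when every \(2\times2\) minor vanishes, i.e.\ when \(\rank U\le1\). On the simplex \(U\neq0\), so this is exactly \(\rank U=1\), i.e.\ independence.

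The only point needing care is that the \(n\ge3\) direction genuinely requires a dependent distribution with a zero row. This is where \(n\ge3\) enters: with \(n=2\), a zero row forces rank at most one, so Theorem~\ref{thm:zero-row-vanishing} gives no contradiction there. The rest is routine once Theorem~\ref{thm:zero-row-vanishing} is in hand.
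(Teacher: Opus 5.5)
Your proposal is correct and follows essentially the same route as the paper: for \(n\ge3\) you use the zero-row vanishing principle on the dependent zero-row distribution \(\tfrac12(E_{11}+E_{22})\), and for \(n=2\) you use the Gram determinant, with Cauchy--Binet for faithfulness and \((\det T)^2\le1\) for DPI. Your explicit remarks that \(\det(UU^\top)\ge0\) and that \(\det T=T_{11}-T_{12}\) in the \(2\times2\) case are small, correct clarifications of steps the paper leaves implicit.
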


\begin{proof}
Suppose first that \(n\ge3\). Independence-faithfulness implies
independence-zero, so Theorem~\ref{thm:zero-row-vanishing} shows that \(F(U)=0\)
whenever \(U\) has a zero row. Consider the distribution defined by
\[
    u_{11}=u_{22}=\frac12,
    \qquad
    u_{i\alpha}=0
    \quad\text{otherwise}.
\]
It has a zero row and rank two. Hence \(F(U)=0\), although the
corresponding variables are dependent, contradicting
independence-faithfulness. Thus \(n=2\) is necessary.

Conversely, let \(n=2\). By the Cauchy--Binet identity,
\[
    F_{\mathrm{Gram}}(U)
    =
    \sum_{1\le\alpha<\beta\le m}
    \bigl(
        u_{1\alpha}u_{2\beta}
        -
        u_{1\beta}u_{2\alpha}
    \bigr)^2.
\]
Therefore \(F_{\mathrm{Gram}}(U)=0\) exactly when all \(2\times2\)
minors vanish, which is equivalent to \(\rank(U)=1\). Thus
\(F_{\mathrm{Gram}}\) is independence-faithful.

For every \(2\times2\) column-stochastic matrix \(T\),
\[
    F_{\mathrm{Gram}}(TU)
    =
    \det(T)^2F_{\mathrm{Gram}}(U)
    \le F_{\mathrm{Gram}}(U),
\]
because \(\lvert\det T\rvert\le1\). Hence \(F_{\mathrm{Gram}}\)
satisfies left-sided data processing. Being a polynomial, it extends
smoothly to all of \(\mathbb{R}^{2\times m}\).
\end{proof}

The theorem shows that the one-sided faithfulness obstruction is not
caused by polynomiality or finite-sample estimability. It follows already
from \(C^1\)-extendability, data processing, and independence-zero.

\begin{corollary}[Finite-sample one-sided faithfulness classification]
\label{thm:one-sided-faithfulness}
Let \(n,m\ge2\). The following statements are equivalent:
\begin{enumerate}
    \item There exists a finitely unbiasedly estimable functional
    \(F:\Delta_{n,m}\to\mathbb{R}\) that is independence-faithful and
    satisfies left-sided data processing on the \(n\)-state variable.
    \item \(n=2\).
\end{enumerate}

Moreover, when \(n=2\), the minimum exact unbiased sample complexity is
four, attained by \(F_{\mathrm{Gram}}(U)=\det(UU^\top)\).
\end{corollary}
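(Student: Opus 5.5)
The corollary follows by combining Theorem~\ref{thm:C1-one-sided-faithfulness} with Proposition~\ref{prop:finite-sample-polynomial} and the degree bound of Theorem~\ref{thm:non-tall}. The plan has three steps: necessity of \(n=2\), existence of a degree-four example, and the matching lower bound of four samples.

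\emph{Step 1: \(n=2\) is necessary.} Suppose \(F\) is finitely unbiasedly estimable. By Proposition~\ref{prop:finite-sample-polynomial}, \(F\) is represented on \(\Delta_{n,m}\) by a polynomial. It is therefore \(C^1\)-extendable, since it extends to a polynomial on all of \(\mathbb R^{n\times m}\). The implication \((1)\Rightarrow(2)\) of Theorem~\ref{thm:C1-one-sided-faithfulness} then forces \(n=2\).

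\emph{Step 2: existence for \(n=2\).} Theorem~\ref{thm:C1-one-sided-faithfulness} shows that \(F_{\mathrm{Gram}}(U)=\det(UU^\top)\) is independence-faithful and satisfies left-sided DPI. For \(n=2\) it is a polynomial of degree four in the entries of \(U\). Proposition~\ref{prop:finite-sample-polynomial} therefore supplies a four-sample unbiased estimator, which gives both \((2)\Rightarrow(1)\) and the upper bound of four.

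\emph{Step 3: no estimable example uses fewer than four samples.} Let \(F\) be any admissible functional, estimable from \(d\) samples. By Proposition~\ref{prop:finite-sample-polynomial}, it is represented by a polynomial \(P\) of degree at most \(d\).
\begin{itemize}
\item Independence-faithfulness implies independence-zero, so \(P\) satisfies \refDPI{} and \refIZ{}.
\item Since \(n=2\le m\), Theorem~\ref{thm:non-tall} applies: either \(P\equiv0\) on \(\Delta_{2,m}\), or \(\deg P\ge 2n=4\).
\item The first alternative is impossible, because faithfulness requires \(F>0\) at some dependent distribution (for instance the diagonal matrix with entries \(1/2\)).
\end{itemize}
Hence \(\deg P\ge4\). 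Whatever polynomial representative is chosen, its degree is at least four, so \(d\ge4\).

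\emph{Main obstacle.} The only delicate point is the representative issue in Step 3. Proposition~\ref{prop:finite-sample-polynomial} yields some polynomial of degree at most \(d\) agreeing with \(F\) on the simplex, and we need Theorem~\ref{thm:non-tall} to bound \emph{every} such representative. This holds because the theorem applies to any polynomial whose restriction satisfies the hypotheses, and its conclusion (vanishing on the simplex or degree \(\ge 2n\)) depends only on that restriction.
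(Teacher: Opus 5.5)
Your proposal is correct and follows the paper's proof essentially step for step. Proposition~\ref{prop:finite-sample-polynomial} gives polynomiality, hence $C^1$-extendability, so Theorem~\ref{thm:C1-one-sided-faithfulness} forces $n=2$; the degree-four Gram determinant gives the upper bound; and Theorem~\ref{thm:non-tall} gives the matching lower bound. Your extra remarks make explicit two points the paper leaves implicit, namely that $F$ is nonvanishing (by faithfulness at a rank-two point) and that the bound holds for any choice of polynomial representative, and both are handled correctly.
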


\begin{proof}
By Proposition~\ref{prop:finite-sample-polynomial}, every finitely
unbiasedly estimable functional is represented on the simplex by a
polynomial and therefore has a \(C^1\) extension. Necessity follows from
Theorem~\ref{thm:C1-one-sided-faithfulness}.

When \(n=2\), the Gram determinant is a polynomial of degree four, so it
admits an unbiased estimator from four samples. Theorem~\ref{thm:non-tall} shows that every nonzero admissible polynomial
on \(\Delta_{2,m}\) has degree at least four. Hence four samples are
optimal.
\end{proof}

\begin{corollary}[Two-sided impossibility triangle]
\label{cor:two-sided-impossibility-triangle}
Let \(n,m\ge2\). There exists a finitely unbiasedly estimable,
independence-faithful functional on \(\Delta_{n,m}\) satisfying
two-sided data processing if and only if
\[
    n=m=2.
\]
In that case, \(F(U)=(\det U)^2\) is such a functional and has optimal
sample complexity four.
\end{corollary}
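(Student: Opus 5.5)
We prove the two directions separately. Both rest on Corollary~\ref{thm:one-sided-faithfulness}, applied once to the row variable and once to the column variable (after transposing).

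\emph{Necessity.} Suppose \(F\) on \(\Delta_{n,m}\) is finitely unbiasedly estimable, independence-faithful, and satisfies two-sided data processing. Restricting attention to left-sided processing, Corollary~\ref{thm:one-sided-faithfulness} forces \(n=2\). Now consider \(F^\top(V):=F(V^\top)\) on \(\Delta_{m,n}\). It is still a polynomial on the simplex, hence finitely unbiasedly estimable by Proposition~\ref{prop:finite-sample-polynomial}. Independence-faithfulness transfers because \(\rank V^\top=\rank V\). Right-sided processing of \(U\), namely \(U\mapsto UT^\top\) with \(T\) column-stochastic on the \(m\)-state variable, becomes left-sided processing \(V\mapsto TV\) of \(F^\top\). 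Applying the corollary to \(F^\top\) then gives \(m=2\). The only care needed is the bookkeeping: the right-sided DPI, as defined by the analogous fixed-alphabet condition on \(Y\), must translate exactly into the left-sided condition for the transposed functional.

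\emph{Sufficiency.} Take \(n=m=2\) and \(F(U)=(\det U)^2\). This coincides with \(F_{\mathrm{Gram}}\), so it is independence-faithful by the proof of Theorem~\ref{thm:C1-one-sided-faithfulness}. For two-sided DPI, use \(\det(T_1UT_2^\top)=\det T_1\det U\det T_2\) together with Hadamard's bound \(|\det T_i|\le1\). This gives monotonicity under either side separately, and under both sides jointly. The functional has degree four, so by Proposition~\ref{prop:finite-sample-polynomial} it is four-sample unbiasedly estimable.

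\emph{Optimality.} Any admissible nonzero functional also satisfies the left-sided conditions alone. Theorem~\ref{thm:non-tall}, or Theorem~\ref{thm:nxn} with \(n=2\), then gives degree at least \(4\). An independence-faithful functional is nonzero on dependent distributions, so it is not identically zero on the simplex. Hence no representation of degree below four exists, and by Proposition~\ref{prop:finite-sample-polynomial} four samples are optimal. Every step here is routine; the transposition argument in the necessity direction is the only point requiring attention.
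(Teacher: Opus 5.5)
Your proposal is correct and follows the paper's proof essentially step for step. You apply the one-sided finite-sample classification to \(F\) and to \(U\mapsto F(U^\top)\) to force \(n=m=2\); for the converse you use \((\det U)^2=\det(UU^\top)\) with Hadamard's bound on both sides, and you take optimality from the degree-\(2n\) lower bound of Theorem~\ref{thm:non-tall}.
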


\begin{proof}
If \(F\) satisfies two-sided data processing, then it satisfies
left-sided data processing on the \(n\)-state variable. By
Corollary~\ref{thm:one-sided-faithfulness}, this requires \(n=2\).
Applying the same result to \(U\mapsto F(U^\top)\) requires \(m=2\).

Conversely, when \(n=m=2\),
\[
    \det(UU^\top)=(\det U)^2.
\]
This functional is independence-faithful, satisfies data processing on
both variables, and has degree four. It is therefore unbiasedly
estimable from four samples, and the preceding corollary shows that four
samples are optimal.
\end{proof}

\section{Connection to multi-task peer prediction}\label{sec:peer-prediction}

The structural results have direct consequences for mechanism design. A multi-task peer-prediction mechanism on $\ell$ i.i.d.\ tasks induces a polynomial of degree at most $\ell$ in the joint distribution, and the usual truthfulness and independence-zero requirements translate into data processing and independence-zero conditions for that polynomial. 

Formally, in a multi-task peer-prediction setting, a principal wants to elicit truthful answers from two agents, Alice and Bob, on a sequence of $\ell$ tasks for which no ground truth is available. Alice's signal alphabet is denoted by $\mathcal X$ and Bob's by $\mathcal Y$, with $|\mathcal X|=n$ and $|\mathcal Y|=m$. A \emph{peer-prediction mechanism} is a pair of payment functions $\mathcal{M}=(\mathcal{M}_A,\mathcal{M}_B)$, one for each agent, where
\[
\mathcal M_A:(\mathcal X\times \mathcal Y)^\ell\to \bbR
\]
takes the joint sequence of reports $\{(x_t,y_t)\}_{t=1}^\ell$ and outputs Alice's payment, and similarly for $\mathcal M_B$ and Bob.

The setting typically makes two assumptions. First, when Alice and Bob are truthful, their reports are i.i.d. samples of random variables $(X,Y)$. Second, Alice and Bob's strategies are modeled as postprocessing channels of $X$ and $Y$. That is, their strategic deviation is per-task, not per arbitrary sequence-level deviations. Formally, Alice's strategy is a column-stochastic matrix $T$ in which $T_{ij}$ is the probability that Alice reports $i$ given her truthful signal is $j$. Bob's side is analogous.

The mechanism is \emph{dominantly truthful} if, holding the other agent fixed, neither agent can increase their own expected payment by implementing any per-task stochastic post-processing channel $T$. In the bridge below we only use the following necessary consequence of dominant truthfulness: Alice cannot increase her expected payment by applying any per-task stochastic post-processing channel independently to her true signal. The mechanism is \emph{independence-zero} if both agents receive zero expected payment whenever their reports are statistically independent.

\paragraph{Alice-side nontriviality.}
Our lower bounds are stated for the Alice-side expected payment. We call a mechanism \emph{Alice-nontrivial} if the Alice-induced expected payment is not identically zero as a function of the joint distribution. When $n>m$, the Alice side is the processed side with the larger alphabet. 

\subsection{The induced polynomial information measure}

Fix a peer-prediction mechanism $\mathcal M=(\mathcal M_A,\mathcal M_B)$ operating on $\ell$ tasks. Suppose Alice and Bob's reports are i.i.d. samples of random variables $(X,Y)$ whose joint distribution matrix is
\[
U=(u_{i\alpha})\in\Delta_{n,m},\qquad u_{i\alpha}=\Pr[X=i,Y=\alpha].
\]
Alice's expected payment is
\begin{equation}\label{eq:expected-payment}
\bbE_U[\mathcal M_A]
=
\sum_{(x_1,y_1),\ldots,(x_\ell,y_\ell)\in\mathcal X\times\mathcal Y}
\mathcal M_A\bigl(\{(x_t,y_t)\}_{t=1}^\ell\bigr)
\prod_{t=1}^\ell u_{x_t y_t}.
\end{equation}

\begin{lemma}\label{lem:mech-is-polynomial}
The expected payment $\bbE_U[\mathcal M_A]$ is a polynomial of degree at most $\ell$ in the entries $u_{i\alpha}$ of the joint distribution.
\end{lemma}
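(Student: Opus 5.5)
The plan is to read the claim directly off the explicit formula \eqref{eq:expected-payment}. Fix the mechanism \(\mathcal M_A\) on \(\ell\) tasks. The tasks are i.i.d. draws from \(U\), so the probability that the report sequence equals a particular sequence \(\{(x_t,y_t)\}_{t=1}^\ell\) factors as \(\prod_{t=1}^\ell u_{x_ty_t}\). Summing against the payment \(\mathcal M_A\) then gives the displayed expression for \(\bbE_U[\mathcal M_A]\). This step only uses the i.i.d. assumption and linearity of expectation over the finite sample space \((\mathcal X\times\mathcal Y)^\ell\).

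Next I would examine that expression term by term. The sum is finite, with exactly \((nm)^\ell\) terms. Each coefficient \(\mathcal M_A(\{(x_t,y_t)\})\) is a real constant that does not depend on \(U\). Each product \(\prod_{t=1}^\ell u_{x_ty_t}\) is a monomial in the entries \(u_{i\alpha}\) of total degree exactly \(\ell\); repeated indices simply produce higher powers of the same variable. So \(\bbE_U[\mathcal M_A]\) is a finite real linear combination of degree-\(\ell\) monomials. It is therefore a polynomial of degree at most \(\ell\). In fact it is homogeneous of degree \(\ell\), unless it is identically zero, in which case the degree bound is vacuous.

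There is one point of care, and I do not expect a real obstacle here. The polynomial is defined as an element of \(R_{n,m}\), and it agrees with the expected payment at every point of \(\Delta_{n,m}\). Cancellation among terms could make its actual degree smaller than \(\ell\), but that only strengthens the statement. Finally, I would note that the same argument is exactly the ``only if'' direction of Proposition~\ref{prop:finite-sample-polynomial}, applied with \(\mathcal Z=\mathcal X\times\mathcal Y\), \(d=\ell\), and \(\widehat F=\mathcal M_A\). So the lemma could alternatively be obtained by citing that proposition.
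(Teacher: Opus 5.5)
Your proposal is correct and is essentially the paper's proof: each term of \eqref{eq:expected-payment} is a constant coefficient times a monomial of total degree exactly $\ell$, so the finite sum is a polynomial of degree at most $\ell$. Your extra remarks are also right. The polynomial is in fact either zero or homogeneous of degree $\ell$, so cancellation can only make it vanish, and the lemma is the forward direction of Proposition~\ref{prop:finite-sample-polynomial}.
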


\begin{proof}
Each term in~\eqref{eq:expected-payment} is a product $\prod_{t=1}^\ell u_{x_t y_t}$ of $\ell$ entries of $U$, hence a monomial of total degree exactly $\ell$. The expected payment is a finite linear combination of such monomials with coefficients
$\mathcal M_A(\{(x_t,y_t)\}_{t=1}^\ell)\in\bbR$, so it is a polynomial in the entries of $U$ of degree at most $\ell$.
\end{proof}

This polynomial is the \emph{Alice-induced measure of dependence}:
\[
\mi^{\mathcal M_A}(X;Y):=\bbE_U[\mathcal M_A].
\]
The Alice-side truthfulness and independence-zero conditions translate into the two algebraic conditions used throughout the paper.

\begin{lemma}[Mechanism-to-measure]\label{lem:mech-to-measure}
Let $\mathcal M=(\mathcal M_A,\mathcal M_B)$ be a multi-task peer-prediction mechanism operating on $\ell$ i.i.d.\ tasks, and let
\[
\mi^{\mathcal M_A}(U):=\bbE_U[\mathcal M_A]
\]
be the Alice-induced polynomial. Suppose that:
\begin{itemize}[leftmargin=*]
\item Alice cannot increase her expected payment by applying any per-task column-stochastic channel to her true signal, and
\item Alice's expected payment is zero whenever $X$ and $Y$ are independent.
\end{itemize}
Then $\mi^{\mathcal M_A}$ is a polynomial of degree at most $\ell$ on $\Delta_{n,m}$ satisfying conditions \refDPI{} and \refIZ{}.
\end{lemma}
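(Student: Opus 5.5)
The plan is to verify the three claims in order: the degree bound, \refDPI{}, and \refIZ{}. Each follows from the expectation formula~\eqref{eq:expected-payment} together with the modeling assumptions.

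\emph{Degree.} The degree bound is exactly Lemma~\ref{lem:mech-is-polynomial}: $\mi^{\mathcal M_A}$ is a finite linear combination of degree-$\ell$ monomials in the entries of $U$. Its restriction to $\Delta_{n,m}$ is therefore a polynomial functional of degree at most $\ell$.

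\emph{Data processing.} Fix a column-stochastic $T$ and $U\in\Delta_{n,m}$. Suppose Alice applies $T$ independently on each task to her true signal. On each task the pair (Alice's report, Bob's report) then has law
\[
\Pr[\hat X=i,Y=\alpha]=\sum_j T_{ij}u_{j\alpha}=(TU)_{i\alpha}.
\]
Because the tasks are i.i.d.\ and the channel is applied independently per task, the $\ell$ report pairs are i.i.d.\ with joint matrix $TU$. Hence Alice's expected payment under this deviation is given by~\eqref{eq:expected-payment} with $U$ replaced by $TU$, namely $\mi^{\mathcal M_A}(TU)$. The first hypothesis says this is at most the truthful payment $\mi^{\mathcal M_A}(U)$. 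This is \refDPI{}.

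\emph{Independence-zero.} If $X\perp Y$, i.e.\ $\rank U=1$, then the second hypothesis gives $\mi^{\mathcal M_A}(U)=0$. This is \refIZ{}.

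The only step requiring care is the identification of the deviated payment with $\mi^{\mathcal M_A}(TU)$. This needs the per-task, independent nature of the channel, so that the product measure $\prod_t u_{x_ty_t}$ becomes $\prod_t (TU)_{x_ty_t}$. It also needs $TU\in\Delta_{n,m}$, which holds by Definition~\ref{def:column-stochastic}. With that settled, the lemma is a direct translation, and its conclusion makes the earlier results (Theorem~\ref{thm:normalized}, Theorem~\ref{thm:non-tall}) applicable to mechanisms.
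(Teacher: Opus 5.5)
Your proposal is correct and follows the paper's proof essentially verbatim: the degree bound comes from Lemma~\ref{lem:mech-is-polynomial}, \refDPI{} comes from identifying the per-task deviated law as $TU$ and invoking Alice-side truthfulness, and \refIZ{} follows directly from the independence-zero hypothesis. Your extra check that the product measure becomes $\prod_t (TU)_{x_t y_t}$, with $TU\in\Delta_{n,m}$, spells out what the paper states in one line.
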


\begin{proof}
Let $T\in\bbR^{n\times n}$ be column-stochastic. If Alice post-processes her signal through $T$ independently on each task, the truthful joint distribution $U$ is replaced by $TU$. Alice-side truthfulness for such deviations gives
\[
\bbE_{TU}[\mathcal M_A]\le \bbE_U[\mathcal M_A],
\]
which is exactly \refDPI{} for $\mi^{\mathcal M_A}$. If $X$ and $Y$ are independent, then $U$ has rank one, and the Alice-side independence-zero condition gives
\[
\mi^{\mathcal M_A}(U)=0.
\]
This is exactly \refIZ{}. The degree bound is Lemma~\ref{lem:mech-is-polynomial}.
\end{proof}

\subsection{Task complexity lower bounds}

Combining the bridge with the main theorems yields sample-complexity lower bounds for the Alice-side payment. The point is that an Alice-nontrivial finite-task mechanism would produce a nonzero finite-degree polynomial satisfying the corresponding DPI and independence-zero conditions.

\begin{corollary}[Task lower bound, tall Alice alphabet]\label{cor:asymmetric-lower-bound}
Let $|X|=n$ and $|Y|=m$ with $n>m\ge 2$. No multi-task peer-prediction mechanism can be simultaneously Alice-nontrivial, Alice-side independence-zero, and Alice-side dominantly truthful against all per-task column-stochastic post-processing deviations on any finite number of tasks $\ell$.
\end{corollary}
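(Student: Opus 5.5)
The plan is to argue by contradiction, reducing the mechanism statement to the tall-case impossibility theorem through the mechanism-to-measure bridge. Fix $n>m\ge2$ and a finite number of tasks $\ell$. Suppose some mechanism $\mathcal M=(\mathcal M_A,\mathcal M_B)$ on $\ell$ tasks is Alice-side independence-zero and Alice-side dominantly truthful against all per-task column-stochastic deviations. Dominant truthfulness implies the weaker property used in Lemma~\ref{lem:mech-to-measure}: Alice cannot gain by applying a per-task channel $T$ independently to her true signal. That lemma then shows that $\mi^{\mathcal M_A}(U)=\bbE_U[\mathcal M_A]$ is a polynomial of degree at most $\ell$ whose restriction to $\Delta_{n,m}$ satisfies \refDPI{} and \refIZ{}.

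Next, since every polynomial is $C^1$-extendable (Example~\ref{ex:C1-functionals}), Theorem~\ref{thm:normalized} applies with $n>m\ge2$. It gives $\mi^{\mathcal M_A}(U)=0$ for every $U\in\Delta_{n,m}$. Equivalently, by Corollary~\ref{cor:normalized-polynomial}, $\mi^{\mathcal M_A}\in(\tau-1)$. Either way, the Alice-induced expected payment is identically zero as a function of the joint distribution on the simplex. This contradicts Alice-nontriviality. Since $\ell$ was arbitrary, no finite-task mechanism has all three properties.

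The only point needing care is definitional: Alice-nontriviality must mean nonvanishing as a function on $\Delta_{n,m}$, not merely as a formal polynomial. Otherwise a polynomial such as $\tau-1$ times something would count as nontrivial while being zero on every actual distribution. I would state explicitly that the vanishing from Theorem~\ref{thm:normalized} is on the whole simplex, which is exactly the domain on which expected payments are evaluated. I do not expect any other obstacle; the proof is essentially a two-line composition of Lemma~\ref{lem:mech-to-measure} and Theorem~\ref{thm:normalized}.
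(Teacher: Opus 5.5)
Your proposal is correct and matches the paper's proof: Lemma~\ref{lem:mech-to-measure} turns the mechanism into a polynomial satisfying \refDPI{} and \refIZ{}, and Theorem~\ref{thm:normalized} (which applies because polynomials are $C^1$-extendable) forces it to vanish on $\Delta_{n,m}$, contradicting Alice-nontriviality. Your reading of nontriviality as nonvanishing on the simplex agrees with the paper's definition of Alice-nontriviality.
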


\begin{proof}
Suppose such a mechanism existed. By Lemma~\ref{lem:mech-to-measure}, $\mi^{\mathcal M_A}$ is a polynomial of degree at most $\ell$ on $\Delta_{n,m}$ satisfying \refDPI{} and \refIZ{}. Since $n>m$, Theorem~\ref{thm:normalized} gives
\[
\mi^{\mathcal M_A}\equiv 0,
\]
contradicting Alice-nontriviality.
\end{proof}

\begin{corollary}[Task lower bound, symmetric case]\label{cor:symmetric-lower-bound}
Let $|X|=|Y|=n$. Any Alice-nontrivial multi-task peer-prediction mechanism that is Alice-side independence-zero and Alice-side dominantly truthful against all per-task column-stochastic post-processing deviations requires at least $\ell\ge 2n$ tasks.
\end{corollary}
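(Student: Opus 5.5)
The plan is to mirror the proof of Corollary~\ref{cor:asymmetric-lower-bound}, replacing the tall-case theorem with the square-case divisibility result. Suppose a mechanism with \(\ell\) tasks is Alice-nontrivial, Alice-side independence-zero, and Alice-side dominantly truthful against all per-task column-stochastic deviations. By Lemma~\ref{lem:mech-to-measure}, the Alice-induced measure \(\mi^{\mathcal M_A}\) is then a polynomial \(F\in R_{n,n}\) of degree at most \(\ell\) whose restriction to \(\Delta_{n,n}\) satisfies \refDPI{} and \refIZ{}.

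Next I would apply Theorem~\ref{thm:nxn}. It gives a dichotomy: either \(F\equiv 0\) on \(\Delta_{n,n}\), or \(\deg F\ge 2n\). Alice-nontriviality means precisely that the expected payment is not identically zero as a function of \(U\in\Delta_{n,n}\), so the first alternative is ruled out. Therefore \(2n\le\deg F\le\ell\).

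The one point needing care is the meaning of ``degree'' when \(F\) is only determined modulo \(\tau-1\) on the simplex. Theorem~\ref{thm:nxn} bounds the degree of every polynomial representative that is nonzero on \(\Delta_{n,n}\), because any such representative lies in \((\det U)^2\) modulo \(\tau-1\). Lemma~\ref{lem:mech-is-polynomial} supplies one particular representative, the expansion in~\eqref{eq:expected-payment}, whose degree is at most \(\ell\). So the bound applies directly to it and no further work is needed.

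For completeness, I would remark that the bound is tight: the DMI mechanism of~\cite{DBLP:conf/soda/Kong20} realizes \((\det U)^2\) with \(2n\) tasks, via Proposition~\ref{prop:finite-sample-polynomial}. This is not needed for the lower bound itself.
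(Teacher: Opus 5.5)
Your proposal is correct and follows the paper's proof exactly: Lemma~\ref{lem:mech-to-measure} turns the mechanism into a degree-\(\le\ell\) polynomial satisfying \refDPI{} and \refIZ{}, Theorem~\ref{thm:nxn} gives \(F\equiv0\) on \(\Delta_{n,n}\) or \(\deg F\ge2n\), and Alice-nontriviality rules out the first case, so \(\ell\ge\deg F\ge2n\). Your remark on representatives is also right, though the cleaner reason is that Theorem~\ref{thm:nxn} is stated for an arbitrary polynomial \(F\in R_{n,n}\) and bounds the degree of that \(F\) itself (via its \(\tau\)-homogenization), rather than following directly from membership in \((\det U)^2\) modulo \(\tau-1\).
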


\begin{proof}
By Lemma~\ref{lem:mech-to-measure}, the Alice-induced polynomial $\mi^{\mathcal M_A}$ has degree at most $\ell$ and satisfies \refDPI{} and \refIZ{} on $\Delta_{n,n}$. By Theorem~\ref{thm:nxn}, either
\[
\mi^{\mathcal M_A}\equiv 0
\qquad\text{or}\qquad
\deg \mi^{\mathcal M_A}\ge 2n.
\]
Alice-nontriviality rules out the first case, so
\[
\ell\ge \deg \mi^{\mathcal M_A}\ge 2n.
\]
\end{proof}

\begin{remark}
Corollary~\ref{cor:symmetric-lower-bound} is tight: the DMI mechanism realizes $\ell=2n$ exactly. Corollary~\ref{cor:asymmetric-lower-bound} is a one-sided tall-alphabet obstruction. If $m>n$, the corresponding Bob-side statement follows by applying the same argument after interchanging Alice and Bob. Thus, for mechanisms required to be nontrivial on both sides, unequal alphabet sizes obstruct full column-stochastic dominant truthfulness on the side with the larger alphabet.
\end{remark}

\section{Restricted DPI: polynomial monotones under a smaller channel class}
\label{sec:restricted}

The preceding results show that DPI under \emph{all} column-stochastic
post-processing channels is extremely restrictive.  If the processed
alphabet is larger, then no nonzero polynomial can satisfy full DPI
together with independence-zero; in the square case, every nonzero
such polynomial has degree at least $2n$.  This raises a natural
question: whether useful low-degree polynomial dependence measures
reappear when the admissible post-processing operations are restricted.

We show that they do.  Moreover, the admissible class considered below
is closed under randomization: it is the convex hull of the
doubly-stochastic channels and the rank-one column-stochastic channels.

We focus on the squared Frobenius distance from the joint distribution
to the product of its marginals, which we call the
\emph{\FMIname{}} (\FMIabbr):
\[
\FMImeasure(U)
:=
\|U-r(U)c(U)^\top\|_F^2
=
\sum_{i,\alpha}
\bigl(u_{i\alpha}-r_i(U)c_\alpha(U)\bigr)^2,
\]
where $r(U)=U\mathbf 1_m$ and $c(U)=U^{\top}\mathbf 1_n $ are the row and column marginals of $U$,
respectively.  Since the marginals are linear in $U$, $\FMImeasure$ is
a polynomial of degree four.

\subsection{A convex class of restricted channels}

Let $\mathsf{DS}_n$ denote the set of $n\times n$ doubly-stochastic
matrices, and let
\[
\mathsf{R1}_n
:=
\left\{
q\mathbf 1_n^\top :
q\in\Delta_n
\right\}
\]
denote the set of rank-one column-stochastic matrices.  We consider the
convex channel class
\[
\mathsf{C}_n
:=
\operatorname{conv}
\bigl(
\mathsf{DS}_n\cup\mathsf{R1}_n
\bigr).
\]

Equivalently,
\[
\mathsf{C}_n
=
\left\{
\lambda D+(1-\lambda)R :
0\le \lambda\le 1,\;
D\in\mathsf{DS}_n,\;
R\in\mathsf{R1}_n
\right\}.
\]
Indeed, both $\mathsf{DS}_n$ and $\mathsf{R1}_n$ are convex, so any
convex combination of elements from their union can be regrouped into
one doubly-stochastic component and one rank-one component.

Doubly-stochastic channels are exactly convex combinations of
permutation matrices by the Birkhoff--von Neumann theorem
\cite[p.~34]{MarshallOlkin1979}; they therefore include randomized
relabelings and the usual symmetric-noise channels.  A channel
$R=q\mathbf 1_n^\top\in\mathsf{R1}_n$ erases the input signal and
replaces it by an output drawn from the fixed distribution $q$.
Deterministic constant reports and independent random reports are
special cases.

The convex hull \(\mathsf C_n\) also allows an agent to randomize between
these two types of behavior. That is, on each task, the agent may independently
choose to apply a doubly-stochastic channel \(D\) with probability \(\lambda\),
or a signal-erasing channel \(R\) with probability \(1-\lambda\).

\subsection{\FMIabbr{} satisfies DPI on the restricted class}

\begin{proposition}
\label{prop:frob-dpi}
The functional $\FMImeasure$ satisfies the following properties.
\begin{enumerate}[label=(\roman*), leftmargin=*]
    \item
    \emph{Independence-faithfulness:}
    \[
    \FMImeasure(U)=0
    \quad\Longleftrightarrow\quad
    \rank(U)=1.
    \]
    In particular, $\FMImeasure$ satisfies independence-zero
    \refIZ{}.

    \item
    \emph{$\mathsf C$-DPI:}
    for every $T\in\mathsf C_n$ and every $U\in\Delta_{n,m}$,
    \[
    \FMImeasure(TU)\le \FMImeasure(U).
    \]
\end{enumerate}
However, for some alphabet sizes \(n,m\), \(\FMImeasure\) violates the full left-sided DPI.
\end{proposition}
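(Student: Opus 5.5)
The plan rests on one algebraic identity: the residual matrix \(W(U):=U-r(U)c(U)^\top\) transforms covariantly under every column-stochastic channel. Then \(\FMImeasure(U)=\|W(U)\|_F^2\), and both claims reduce to norm estimates for \(TW\).

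For part (i), if \(\FMImeasure(U)=0\) then \(U=r(U)c(U)^\top\). Since \(\tau(U)=1\), \(U\neq0\), so \(\rank U=1\). Conversely, suppose \(\rank U=1\). Write \(U=xy^\top\) with \(x,y\ge0\), which is possible for a nonnegative rank-one matrix. Then \(r(U)=x\,(y^\top\mathbf1_m)\), \(c(U)=y\,(x^\top\mathbf1_n)\), and \((x^\top\mathbf1_n)(y^\top\mathbf1_m)=\tau(U)=1\). Hence \(r(U)c(U)^\top=xy^\top=U\), so \(W(U)=0\). Independence-zero follows immediately.

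For part (ii), the first step is the identity \(W(TU)=T\,W(U)\) for every column-stochastic \(T\). As in the proof of Theorem~\ref{thm:dpi-recognition-hard}, \(c(TU)=U^\top T^\top\mathbf1_n=c(U)\). Also \(r(TU)=TU\mathbf1_m=Tr(U)\). Therefore
\[
TU-r(TU)c(TU)^\top=T\bigl(U-r(U)c(U)^\top\bigr).
\]
Next, \(W:=W(U)\) has zero column sums: \(\mathbf1_n^\top W=c(U)^\top-(\mathbf1_n^\top r(U))\,c(U)^\top=0\), because \(\mathbf1_n^\top r(U)=\tau(U)=1\).

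Now take \(T=\lambda D+(1-\lambda)q\mathbf1_n^\top\in\mathsf C_n\). The rank-one part annihilates \(W\), since \(q\mathbf1_n^\top W=0\), so \(W(TU)=\lambda DW\). It remains to show \(\|DW\|_F\le\|W\|_F\) for \(D\in\mathsf{DS}_n\). By Birkhoff--von Neumann, \(D=\sum_k\theta_kQ_k\) is a convex combination of permutation matrices. Each \(Q_k\) preserves the Frobenius norm, so the triangle inequality gives \(\|DW\|_F\le\sum_k\theta_k\|Q_kW\|_F=\|W\|_F\). Hence
\[
\FMImeasure(TU)=\lambda^2\|DW\|_F^2\le\|W\|_F^2=\FMImeasure(U).
\]
I expect the main point to be noticing this covariance and the zero-column-sum property of \(W\); after that, everything is a one-line norm estimate.

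For the failure of full DPI, I will take any \(n>m\ge2\). The functional \(\FMImeasure\) is a polynomial, so it is \(C^1\)-extendable. It satisfies \refIZ{} by part (i), and it is not identically zero: the \(n\times m\) matrix with \(u_{11}=u_{22}=\tfrac12\) and all other entries zero has rank two. If \(\FMImeasure\) satisfied \refDPI{}, Theorem~\ref{thm:normalized} would force \(\FMImeasure\equiv0\) on \(\Delta_{n,m}\), a contradiction.

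For a concrete witness with \(n=3\), \(m=2\), I would choose \(U\) so that the rows of \(W\) are \(w_1=w_2=-\tfrac12w_3\) with \(w_3\neq0\). Then I apply the merging channel \(M_{12}\), which sends row \(2\) into row \(1\). The new residual rows are \(-w_3,0,w_3\), with squared norm \(2\|w_3\|^2\), which exceeds the original \(\tfrac32\|w_3\|^2\).
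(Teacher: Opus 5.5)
Your proof is correct. Parts (i) and (ii) follow the paper's argument essentially verbatim: the covariance $W(TU)=TW(U)$, zero column sums so that the rank-one part annihilates $W$, and Birkhoff--von Neumann plus the triangle inequality. Your rank-one direction in (i) is a bit more careful than the paper's.

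The difference lies in the failure of full DPI. The paper exhibits a numerical $4\times3$ pair with $\FMImeasure(U)\approx9.55\times10^{-3}<\FMImeasure(TU)\approx9.84\times10^{-3}$, and mentions Theorem~\ref{thm:normalized} only as a closing remark. You make that theorem the proof. This is valid and non-circular, since the theorem rests on the zero-row principle and not on this proposition, and it covers every $n>m\ge2$. It does, however, invoke the $C^1$ machinery and yields no explicit witness.

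Your supplementary witness is actually more informative than the paper's, once you check that it is realizable. Any $W$ with zero row and column sums, scaled small and added to a strictly positive $rc^\top$, gives $U\in\Delta_{n,m}$ whose residual is that scaled $W$. For example, rows $(\tfrac1{16},\tfrac3{16})$, $(\tfrac1{16},\tfrac3{16})$, $(\tfrac38,\tfrac18)$ give $\FMImeasure(U)=\tfrac3{64}<\tfrac1{16}=\FMImeasure(M_{12}U)$.

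Taking $r_1=r_2$, as here, makes rows $1$ and $2$ identical. By Lemma~\ref{lem:parallel-merge-split}, every DPI functional must then satisfy $F(M_{12}U)=F(U)$, so $\FMImeasure$ fails already on a reversible merge.

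The construction also generalizes: pad with rows $w_i=0$ and use any nonzero zero-sum $w_3\in\mathbb R^m$. This works for all $n\ge3$ and $m\ge2$, including the non-tall regime, which neither Theorem~\ref{thm:normalized} nor the paper's tall example reaches. Combined with the paper's remark that full DPI holds for $n=2$, it pins down exactly when $\FMImeasure$ satisfies full left-sided DPI.
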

Though \(\FMImeasure\) violates the full left-sided DPI in general, when \(n=2\), \(\mathsf C_2\) is the full class of column-stochastic
channels, and
\(
  \FMImeasure(TU)=\det(T)^2\FMImeasure(U),
\)
so FDI satisfies full left-sided DPI. In particular, when \(n=m=2\),
\(
  \FMImeasure(U)=4(\det U)^2,
\)
so FDI coincides with DMI up to the constant factor \(4\).

\begin{proof}
\textbf{(i) Independence-faithfulness.}
Since $\FMImeasure(U)$ is a squared Frobenius norm,
\[
\FMImeasure(U)=0
\quad\Longleftrightarrow\quad
U=r(U)c(U)^\top.
\]
The latter identity is exactly the factorization of the joint
distribution under independence.  

\smallskip
\noindent
\textbf{A covariance-matrix identity.}
Define
\[
M(U):=U-r(U)c(U)^\top.
\]
For every column-stochastic matrix $T$,
\begin{equation}
\label{eq:fmi-covariance-equivariance}
M(TU)=TM(U).
\end{equation}
Indeed, column-stochasticity gives
\[
c(TU)=c(U),
\qquad
r(TU)=Tr(U),
\]
and therefore
\[
M(TU)
=
TU-Tr(U)c(U)^\top
=
T\bigl(U-r(U)c(U)^\top\bigr)
=
TM(U).
\]
Consequently,
\begin{equation}
\label{eq:fmi-after-channel}
\FMImeasure(TU)=\|TM(U)\|_F^2.
\end{equation}

We will also use the fact that every column of $M(U)$ has coordinate
sum zero:
\begin{equation}
\label{eq:fmi-zero-sum-columns}
\mathbf 1_n^\top M(U)
=
c(U)^\top
-
\bigl(\mathbf 1_n^\top r(U)\bigr)c(U)^\top
=
0.
\end{equation}

\smallskip
\noindent
\textbf{(ii) Restricted DPI.}
Let
\[
T=\lambda D+(1-\lambda)R
\in\mathsf C_n,
\]
where $0\le\lambda\le1$, $D\in\mathsf{DS}_n$, and
$R=q\mathbf 1_n^\top\in\mathsf{R1}_n$.

By \eqref{eq:fmi-zero-sum-columns},
\[
RM(U)
=
q\mathbf 1_n^\top M(U)
=
0.
\]
Hence
\begin{equation}
\label{eq:convex-channel-on-M}
TM(U)
=
\lambda DM(U).
\end{equation}

It remains to note that a doubly-stochastic matrix is an
$\ell_2$-contraction.  By the Birkhoff--von Neumann theorem,
\[
D=\sum_k\mu_k P_k,
\qquad
\mu_k\ge0,
\qquad
\sum_k\mu_k=1,
\]
where each $P_k$ is a permutation matrix.  Thus, for every
$v\in\bbR^n$,
\[
\|Dv\|_2
\le
\sum_k\mu_k\|P_kv\|_2
=
\|v\|_2.
\]
Applying this column by column to $M(U)$ gives
\[
\|DM(U)\|_F
\le
\|M(U)\|_F.
\]
Combining this with \eqref{eq:convex-channel-on-M},
\[
\FMImeasure(TU)
=
\|TM(U)\|_F^2
=
\lambda^2\|DM(U)\|_F^2
\le
\lambda^2\|M(U)\|_F^2
\le
\FMImeasure(U).
\]

\smallskip
\noindent
\textbf{Failure of full column-stochastic DPI.}
Take $n=4$, $m=3$,
\[
U=
\frac{1}{1001}
\begin{pmatrix}
12 & 123 & 69 \\
114 & 154 & 85 \\
79 & 11 & 42 \\
79 & 107 & 126
\end{pmatrix},
\qquad
T=
\begin{pmatrix}
1 & 1 & 0 & 0 \\
0 & 0 & 0 & 0 \\
0 & 0 & 0 & 0 \\
0 & 0 & 1 & 1
\end{pmatrix}.
\]
Then $U\in\Delta_{4,3}$ and $T$ is column-stochastic.  Direct
computation gives
\[
\FMImeasure(U)\approx 9.55\times 10^{-3},
\qquad
\FMImeasure(TU)\approx 9.84\times 10^{-3}.
\]
Thus $\FMImeasure$ does not satisfy full column-stochastic DPI in general.

In the tall case $n>m$, such a failure is unavoidable for every
nonzero polynomial satisfying independence-zero:
Theorem~\ref{thm:normalized} rules out any nonzero polynomial
satisfying both full \refDPI{} and \refIZ{}.
\end{proof}

\subsection{Application to peer prediction: four tasks suffice}

We now translate the preceding restricted DPI property into a
peer-prediction guarantee.  As in Section~\ref{sec:peer-prediction},
a per-task strategy means a \emph{memoryless} deviation: the agent
chooses one stochastic channel and applies that same channel
independently on every task.

Under the restricted model considered here, an agent may choose any
channel in $\mathsf C_n$ (or $\mathsf C_m$ on Bob's side).  In
particular, the agent may use a randomized relabeling, erase the
signal and report independently of it, or randomize between these
behaviors.

\begin{corollary}[Four tasks suffice under restricted dominant truthfulness]
\label{cor:four-tasks}
There exists a multi-task peer-prediction mechanism $\FMImech$
operating on exactly $\ell=4$ i.i.d.\ tasks with the following
properties:
\begin{itemize}[leftmargin=*]
    \item its truthful expected payment is nonnegative;
    \item its truthful expected payment is zero under independence;
    \item neither agent can increase their expected payment by applying
    any fixed memoryless channel in $\mathsf C_n$ on Alice's side or
    $\mathsf C_m$ on Bob's side, respectively.
\end{itemize}
The result holds for arbitrary alphabet sizes $|X|=n$ and $|Y|=m$.
\end{corollary}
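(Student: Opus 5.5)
The plan is to build the mechanism as an unbiased four-sample estimator of \(\FMImeasure\) and pay both agents this same statistic. By Proposition~\ref{prop:finite-sample-polynomial}, every polynomial of degree at most \(4\) in the entries of \(U\) is \(4\)-sample unbiasedly estimable. The construction is explicit, presumably the content of Proposition~\ref{prop:fmi}. First write \(\FMImeasure\) as a homogeneous quartic using \(\tau(U)=1\):
\[
\FMImeasure(U)=\sum_{i,\alpha}\bigl(u_{i\alpha}\tau(U)-r_i(U)c_\alpha(U)\bigr)^2 .
\]
Each monomial \(u_{i_1\alpha_1}u_{i_2\alpha_2}u_{i_3\alpha_3}u_{i_4\alpha_4}\) is the expectation of the indicator that four i.i.d.\ tasks \(t=1,\dots,4\) have reports \((x_t,y_t)=(i_t,\alpha_t)\). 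Marginal factors are handled the same way: \(r_i\) is the probability that \(x_t=i\), and \(c_\alpha\) is the probability that \(y_t=\alpha\).

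Concretely, expand the square into three parts, \(u_{i\alpha}u_{i\alpha}\tau\tau - 2u_{i\alpha}\tau r_ic_\alpha + r_ir_ic_\alpha c_\alpha\). Assign each factor to a distinct task. This gives the payment
\[
\FMImech(\{(x_t,y_t)\}_{t=1}^4)=\sum_{i,\alpha}\Bigl(\mathbf 1[x_1=i,y_1=\alpha]\mathbf 1[x_2=i,y_2=\alpha]-2\,\mathbf 1[x_1=i,y_1=\alpha]\mathbf 1[x_2=i]\mathbf 1[y_3=\alpha]+\mathbf 1[x_1=i]\mathbf 1[x_2=i]\mathbf 1[y_3=\alpha]\mathbf 1[y_4=\alpha]\Bigr),
\]
which simplifies to \(\mathbf 1[(x_1,y_1)=(x_2,y_2)]-2\,\mathbf 1[x_1=x_2,\,y_1=y_3]+\mathbf 1[x_1=x_2]\mathbf 1[y_3=y_4]\). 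By independence across tasks, the expectation of each term factors into the corresponding product of \(u\), \(r\), \(c\) entries. The \(\tau\) factors are simply the unused tasks, whose probability sums to \(1\). Hence \(\mathbb E_U[\FMImech]=\FMImeasure(U)\) exactly. We can symmetrize over the \(4!\) task orderings if desired, but this is not needed.

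The three properties then follow from Proposition~\ref{prop:frob-dpi}. Truthful expected payment equals \(\FMImeasure(U)\ge0\), being a squared norm, and it vanishes under independence by part~(i). For truthfulness, suppose Alice applies a memoryless channel \(T\in\mathsf C_n\) independently on each task. The reports are then i.i.d.\ from \(TU\), so her expected payment is \(\FMImeasure(TU)\le\FMImeasure(U)\) by part~(ii).

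Bob's side is symmetric, and this is the step that needs care. Bob's channel \(S\in\mathsf C_m\) maps \(U\) to \(US^\top\). We need \(\FMImeasure\) invariant under transposition, which holds because \(U-r c^\top\) transposes to \(U^\top-c r^\top\). Then Proposition~\ref{prop:frob-dpi}, applied to \(U^\top\in\Delta_{m,n}\), gives \(\FMImeasure(US^\top)\le\FMImeasure(U)\). Nothing in the argument depends on \(n,m\), so the result holds for arbitrary alphabet sizes.
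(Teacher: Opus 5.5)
Your proof follows the paper's. The payment you derive is exactly the estimator of Proposition~\ref{prop:fmi}, nonnegativity and independence-zero come from Proposition~\ref{prop:frob-dpi}(i), and the incentive step uses Proposition~\ref{prop:frob-dpi}(ii) with the transposition identity \(\FMImeasure(W)=\FMImeasure(W^\top)\), as the paper does.

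One point needs tightening. You only compare a deviation against the other agent being \emph{truthful}: \(\FMImeasure(TU)\le\FMImeasure(U)\) for Alice and \(\FMImeasure(US^\top)\le\FMImeasure(U)\) for Bob. That shows truth-telling is a best response to truth-telling. The corollary, as its title and the paper's proof indicate, asserts restricted \emph{dominant} truthfulness. The paper therefore fixes an arbitrary memoryless channel \(S\) of Bob, puts \(V=US^\top\in\Delta_{n,m}\), and shows \(\FMImeasure(TV)\le\FMImeasure(V)\). Symmetrically, it fixes Alice's \(T\), puts \(W=TU\), and shows \(\FMImeasure(WS^\top)\le\FMImeasure(W)\).

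Proposition~\ref{prop:frob-dpi}(ii) holds at every point of the simplex, and \(US^\top\) and \(TU\) stay in the simplex because \(S\) and \(T\) are column-stochastic. So your argument goes through verbatim with \(U\) replaced by \(V\) or \(W\); this is a one-line fix, not a missing idea.
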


\begin{proof}
Use the common four-task payment constructed in
Proposition~\ref{prop:fmi} below.  Under truthful reporting, its
expectation is
\[
\FMImeasure(U),
\]
which is nonnegative and vanishes whenever $X$ and $Y$ are independent.

Fix Bob's memoryless strategy first.  If Bob applies some
$S\in\mathsf C_m$, then before Alice deviates the joint distribution
of each reported task is
\[
V=US^\top.
\]
If Alice subsequently applies any $T\in\mathsf C_n$, the joint
distribution becomes $TV$.  Proposition~\ref{prop:frob-dpi} gives
\[
\FMImeasure(TV)\le\FMImeasure(V),
\]
so Alice cannot improve her expected payment by an allowed deviation.

The argument for Bob is symmetric.  Fix Alice's channel
$T\in\mathsf C_n$ and write
\[
W=TU.
\]
If Bob applies $S\in\mathsf C_m$, then the joint distribution becomes
$WS^\top$.  Since $\FMImeasure$ is invariant under transposition,
\[
\FMImeasure(WS^\top)
=
\FMImeasure(SW^\top)
\le
\FMImeasure(W^\top)
=
\FMImeasure(W),
\]
where the inequality is Proposition~\ref{prop:frob-dpi} applied to
$W^\top$.  Hence Bob likewise cannot improve by an allowed memoryless
deviation.
\end{proof}

\begin{remark}[Comparison with full dominant truthfulness]
Corollary~\ref{cor:four-tasks} should be contrasted with the
full-DPI lower bounds:
\begin{center}
{\small
\begin{tabular}{p{0.34\linewidth}|p{0.22\linewidth}|p{0.30\linewidth}}
Dominant truthfulness against
& Square case $n=m$
& Tall Alice case $n>m$
\\
\hline
All column-stochastic channels
& $\ell\ge 2n$ (tight)
& Alice's expected payment must be trivial
\\
Channels in $\mathsf C$
& $\ell=4$ suffices
& $\ell=4$ suffices
\end{tabular}}
\end{center}

Thus constant task complexity is recovered by weakening the
admissible deviation class.  Importantly, the restricted class remains
closed under randomization: an agent may mix arbitrarily between
doubly-stochastic relabeling/noise and signal-erasing behavior.
In general, the resulting guarantee is therefore strictly weaker than full
column-stochastic dominant truthfulness, but stronger than protecting
against the two channel families separately.
\end{remark}

\subsection{An explicit four-task \FMIabbr{} mechanism}
\label{subsec:explicitfmi}

We next give an explicit unbiased estimator of $\FMImeasure(U)$ using
four i.i.d.\ tasks.  Let the two agents' reports on task
$t\in\{1,2,3,4\}$ be $(x_t,y_t)$, and define
\[
\FMImech(x_{1:4},y_{1:4})
=
\mathbf 1\{x_1=x_2,\ y_1=y_2\}
-
2\mathbf 1\{x_1=x_2,\ y_1=y_3\}
+
\mathbf 1\{x_1=x_2,\ y_3=y_4\}.
\]
Both agents receive this same realized payment:
\[
\mathcal M_A=\mathcal M_B=\FMImech.
\]
The formula is asymmetric in the realized reports, but its expectation
is the symmetric functional $\FMImeasure(U)=\FMImeasure(U^\top)$.


\begin{proposition}
\label{prop:fmi}
If $(x_t,y_t)_{t=1}^4$ are i.i.d.\ samples from $(X,Y)\sim U$, then
\[
\bbE_U[\FMImech]=\FMImeasure(U).
\]
\end{proposition}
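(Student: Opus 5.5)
We compute \(\mathbb E_U[\FMImech]\) by treating each of the three indicator terms separately, using linearity of expectation and independence of the four tasks. The target is the expansion
\[
\FMImeasure(U)=\sum_{i,\alpha}u_{i\alpha}^2-2\sum_{i,\alpha}u_{i\alpha}r_i c_\alpha+\sum_{i,\alpha}r_i^2c_\alpha^2,
\]
with \(r_i=\sum_\beta u_{i\beta}\) and \(c_\alpha=\sum_j u_{j\alpha}\). We will match the three summands to the three indicator terms in order.

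\textbf{First term.} The event \(\{x_1=x_2,\,y_1=y_2\}\) says that tasks \(1\) and \(2\) produce the same pair \((i,\alpha)\). Since the tasks are independent, its probability is \(\sum_{i,\alpha}u_{i\alpha}^2\).

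\textbf{Second term.} For \(\{x_1=x_2,\,y_1=y_3\}\), condition on task \(1\) being \((i,\alpha)\), which has probability \(u_{i\alpha}\). Task \(2\) must then have \(x_2=i\), with marginal probability \(r_i\). Task \(3\) must have \(y_3=\alpha\), with probability \(c_\alpha\). These three tasks are independent, so the probability is \(\sum_{i,\alpha}u_{i\alpha}r_ic_\alpha\). With the coefficient \(-2\), this reproduces the cross term.

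\textbf{Third term.} For \(\{x_1=x_2,\,y_3=y_4\}\), the constraints on \(x\) involve only tasks \(1,2\), and those on \(y\) involve only tasks \(3,4\). The two pairs are independent, so the probability factors as \(\bigl(\sum_i r_i^2\bigr)\bigl(\sum_\alpha c_\alpha^2\bigr)=\sum_{i,\alpha}r_i^2c_\alpha^2\).

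Adding the three terms with coefficients \(1,-2,1\) gives exactly the expansion of \(\sum_{i,\alpha}(u_{i\alpha}-r_ic_\alpha)^2\), which is \(\FMImeasure(U)\).

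There is no real obstacle; the only point needing care is the independence bookkeeping. In the second term, the \(x\)-constraint and the \(y\)-constraint both refer to task \(1\) but to different further tasks (\(2\) and \(3\)). One should therefore condition on task \(1\) first, rather than multiplying the marginal probabilities naively.
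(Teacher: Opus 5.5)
Your proposal is correct and matches the paper's proof: you compute the expectations of the three indicators as \(\sum_{i,\alpha}u_{i\alpha}^2\), \(\sum_{i,\alpha}u_{i\alpha}r_ic_\alpha\) and \(\sum_{i,\alpha}r_i^2c_\alpha^2\) using independence across tasks. You then recognize the combination with coefficients \(1,-2,1\) as the expansion of \(\sum_{i,\alpha}(u_{i\alpha}-r_ic_\alpha)^2=\FMImeasure(U)\).
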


\begin{proof}
Set
\[
A:=\mathbf 1\{x_1=x_2,\ y_1=y_2\},
\qquad
B:=\mathbf 1\{x_1=x_2,\ y_1=y_3\},
\qquad
C:=\mathbf 1\{x_1=x_2,\ y_3=y_4\}.
\]
By independence across tasks,
\[
\bbE[A]
=
\Pr[(x_1,y_1)=(x_2,y_2)]
=
\sum_{i,\alpha}u_{i\alpha}^2.
\]
Similarly,
\[
\bbE[B]
=
\sum_{i,\alpha}
\Pr[x_1=i,y_1=\alpha]
\Pr[x_2=i]
\Pr[y_3=\alpha]
=
\sum_{i,\alpha}
u_{i\alpha}r_i(U)c_\alpha(U),
\]
and
\[
\bbE[C]
=
\sum_{i,\alpha}
\Pr[x_1=i]
\Pr[x_2=i]
\Pr[y_3=\alpha]
\Pr[y_4=\alpha]
=
\sum_{i,\alpha}
r_i(U)^2c_\alpha(U)^2.
\]
Therefore
\begin{align*}
\bbE_U[\FMImech]
&=
\sum_{i,\alpha}u_{i\alpha}^2
-
2\sum_{i,\alpha}
u_{i\alpha}r_i(U)c_\alpha(U)
+
\sum_{i,\alpha}
r_i(U)^2c_\alpha(U)^2
\\
&=
\sum_{i,\alpha}
\bigl(
u_{i\alpha}-r_i(U)c_\alpha(U)
\bigr)^2
\\
&=
\FMImeasure(U).
\end{align*}
\end{proof}

\begin{remark}[Task usage]
The realized estimator uses Alice's reports only on tasks $1$ and $2$
but Bob's reports on all four tasks.  This asymmetry is purely an
estimator-level artifact: its expectation is the symmetric functional
\[
\FMImeasure(U)=\FMImeasure(U^\top),
\]
so the same realized payment can be assigned to both agents.
\end{remark}

\section{Conclusion and future work}
\label{sec:conclusion}

We studied dependence measures that are both finitely unbiasedly estimable
and monotone under stochastic post-processing. On finite alphabets,
finite-sample unbiased estimability is equivalent to polynomial
representability, so the problem becomes one of classifying polynomial
information monotones.

For fixed-alphabet DPI, we obtain a sharp trichotomy. If \(n>m\),
independence-zero and one-sided DPI force every \(C^1\)-extendable
functional to vanish identically. If \(n=m\), every admissible polynomial
is divisible on the simplex by \((\det U)^2\); if \(n<m\), it lies,
modulo the simplex relation, in the square of the maximal-minor ideal.
In the latter two regimes, the minimum nonzero degree is \(2n\), attained
by \((\det U)^2\) and \(\det(UU^\top)\), respectively. If DPI is
strengthened to allow changes in the processed alphabet size, every
\(C^1\)-extendable independence-zero family is identically zero.

Beyond these structural results, DPI recognition is coNP-hard when the
alphabet size is part of the input, and the degree bounds translate
directly into task-complexity bounds for multi-task peer prediction.
The impossibility can be avoided by restricting the admissible channels:
for the class
\(\mathsf C_n=\operatorname{conv}(\mathsf{DS}_n\cup\mathsf{R1}_n)\),
the degree-four \FMIname{} is independence-faithful and satisfies DPI.
Natural directions for future work include enlarging such restricted
channel classes, identifying efficiently recognizable subclasses of
polynomial monotones, and developing approximate notions of DPI and
independence-faithfulness.

\section*{Acknowledgements}
The proofs were partially assisted by
GPT-5.5, operating within a proof-strategy framework provided by the author,
primarily for the general \(n\leq m\) cases. GPT-5.6 was subsequently used for simplification and the Lean~4 formalization. The author inspected
all proofs, theorem statements, and compiler outputs, and remains responsible
for the content of the paper.

\section*{Formal verification and code availability}

A Lean~4 formalization, based on mathlib and pinned to Lean~4.32.0 and
mathlib~4.32.0, accompanies this paper. The machine-checked artifact verifies
the main results in the \(n>m\), and \(n\leq m\) regimes
(Theorem~\ref{thm:normalized},
Corollary~\ref{cor:normalized-polynomial}, Theorem~\ref{thm:non-tall}, Theorem~\ref{thm:nxn}, and Proposition~\ref{prop:constant-compensation-full-dpi}), as well as the analysis of the Frobenius
Dependence Index (Proposition~\ref{prop:frob-dpi}). For the hardness result
(Theorem~\ref{thm:dpi-recognition-hard}), Lean verifies the semantic reduction
\(
    F_A\text{ satisfies DPI}
    \quad\Longleftrightarrow\quad
    A\text{ is copositive}.
\)
The arithmetic-circuit encoding, polynomial-time cost model, and the external
coNP-completeness theorem are not currently formalized. The mechanism-design
interpretations are likewise not separately encoded as Lean theorems. The technically most demanding part of the formalization is the classical
determinantal algebra required in the \(n<m\) case, including a proof that
the square of the generic maximal-minor ideal is primary. This development
is independent of the present application and may be reusable in other
formalizations involving determinantal ideals. The Lean~4 formalization is publicly available at
\url{https://github.com/yuqkong-1844/finite-sample-information-monotones}.

\bibliographystyle{plain}
\bibliography{ref}

\appendix

\section{Finite-sample unbiased estimability and polynomials}
\label{app:finite-sample-polynomial}

This is a classic and elementary result \cite{Hoeffding01091984}. We include the proof here for completeness. 

\begin{proof}[Proof of Proposition~\ref{prop:finite-sample-polynomial}]
Suppose first that \(F\) is \(d\)-sample unbiasedly estimable. Then
there exists a statistic
\[
    \widehat F:\mathcal{Z}^d\to\mathbb{R}
\]
such that, for every \(P\in\Delta(\mathcal{Z})\),
\[
\begin{aligned}
    F(P)
    &=
    \mathbb{E}_P
    \bigl[
        \widehat F(Z_1,\ldots,Z_d)
    \bigr] \\
    &=
    \sum_{(z_1,\ldots,z_d)\in\mathcal{Z}^d}
    \widehat F(z_1,\ldots,z_d)
    \prod_{t=1}^d P(z_t).
\end{aligned}
\]
Hence \(F\) is represented on the probability simplex by a polynomial
of degree at most \(d\).

Conversely, it suffices by linearity to construct an unbiased estimator
for each monomial
\[
    M(P)
    =
    \prod_{z\in\mathcal{Z}}P(z)^{a_z},
    \qquad
    \sum_{z\in\mathcal{Z}}a_z=k\leq d.
\]
Choose a sequence \(w_1,\ldots,w_k\) in which each
\(z\in\mathcal{Z}\) appears exactly \(a_z\) times, and define
\[
    \widehat M(Z_1,\ldots,Z_d)
    :=
    \prod_{t=1}^k
    \mathbf{1}\{Z_t=w_t\}.
\]
Then
\[
    \mathbb{E}_P[\widehat M]
    =
    \prod_{t=1}^kP(w_t)
    =
    M(P).
\]
Thus every polynomial of degree at most \(d\) admits an unbiased
estimator based on at most \(d\) samples. 
\end{proof}

\section{Proofs in the non-tall regime
\texorpdfstring{$n\le m$}{n<=m}}
\label{app:non-tall}

The differential zero-row principle and the tall case \(n>m\) are
proved in full in the main text. The appendix gives the complete
algebraic proof for the non-tall regime \(n\le m\). The cases \(m=n\) and \(m>n\) share the same local argument: homogenization,
row-cone vanishing, and nonnegativity force second-order contact with
the rank-deficient locus. They differ only in the final globalization
step. For \(m=n\), elementary unique factorization suffices; for
\(m>n\), we use one standard result about powers of generic
maximal-minor ideals. Although the standard result yields a unified proof for all \(m\geq n\), we treat the square case \(m=n\) separately in order to retain a more elementary proof.

\subsection{Common preliminaries}

We use one elementary polynomial fact throughout.

\begin{lemma}[Open-set identity]
\label{lem:open-set-identity}
Let \(q\in\bbR[z_1,\ldots,z_N]\).  If \(q\) vanishes on a nonempty
Euclidean-open subset of \(\bbR^N\), then \(q=0\).
\end{lemma}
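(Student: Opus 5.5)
The plan is induction on the number of variables \(N\), with the one-variable case coming from the finiteness of roots. Let \(W\subseteq\bbR^N\) be a nonempty open set on which \(q\) vanishes, and pick a box \(B=\prod_{k=1}^N(a_k,b_k)\subseteq W\) with \(a_k<b_k\). Throughout, I will use only that \(q\) vanishes on \(B\).

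For \(N=1\), a nonzero univariate polynomial of degree \(d\) has at most \(d\) real roots. The interval \((a_1,b_1)\) is infinite, so \(q\) must be the zero polynomial.

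For the inductive step, expand \(q\) in the last variable:
\[
q(z_1,\ldots,z_N)=\sum_{k=0}^{d}c_k(z_1,\ldots,z_{N-1})\,z_N^{k},
\]
with \(c_k\in\bbR[z_1,\ldots,z_{N-1}]\).
\begin{itemize}
\item Fix any \(z'=(z_1,\ldots,z_{N-1})\) in the box \(B'=\prod_{k<N}(a_k,b_k)\).
\item The univariate polynomial \(t\mapsto q(z',t)\) vanishes on \((a_N,b_N)\). By the case \(N=1\), all its coefficients vanish, so \(c_k(z')=0\) for every \(k\).
\item Since \(z'\) was arbitrary, each \(c_k\) vanishes on the nonempty open set \(B'\subseteq\bbR^{N-1}\). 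The induction hypothesis gives \(c_k=0\) for all \(k\), and hence \(q=0\).
\end{itemize}

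No step is a real obstacle. The only care needed is to pass to a product box, so that every slice used in the argument is a nonempty open interval or box. An alternative is to note that all partial derivatives of \(q\) vanish at an interior point of \(W\); this determines every Taylor coefficient, which equals a coefficient of \(q\). I will use the inductive argument above because it is more elementary.
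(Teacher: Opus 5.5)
Your proof is correct and follows essentially the same route as the paper. Both reduce to a product of open intervals and induct on \(N\) by expanding in \(z_N\). Both use the fact that a univariate polynomial vanishing on an interval is zero to kill every coefficient at each point of the smaller box, and then apply the induction hypothesis to the coefficients. Your write-up simply makes explicit the base case (via root counting) that the paper's sketch leaves implicit.
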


\begin{proof}
It is enough to treat a product of nonempty open intervals.  Induct on
\(N\).  Regard \(q\) as a polynomial in \(z_N\), with coefficients in
\(\bbR[z_1,\ldots,z_{N-1}]\).  After fixing the first \(N-1\) variables,
the resulting one-variable polynomial vanishes on an interval and is
therefore zero.  Each coefficient consequently vanishes on a nonempty
open set, and the induction hypothesis applies.
\end{proof}

We then record a consequence of the zero-row vanishing principle.

\begin{lemma}[Row-cone vanishing]
\label{lem:row-cone-vanishing}
Let \(n,m\ge2\), and let
\(F:\Delta_{n,m}\to\mathbb R\) be \(C^1\)-extendable and satisfy
\refDPI{} and \refIZ{}. If one row of
\(U\in\Delta_{n,m}\) is a nonnegative linear combination of the
others, then \(F(U)=0\).
\end{lemma}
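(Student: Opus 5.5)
The plan is to reduce to the zero-row vanishing principle (Theorem~\ref{thm:zero-row-vanishing}) by merging the dependent row into the rows that generate it, and then to undo the merge with a column-stochastic splitting channel. Suppose row \(r\) of \(U\) satisfies
\[
u_{r\bullet}=\sum_{i\ne r}\lambda_i u_{i\bullet},\qquad \lambda_i\ge0 .
\]

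\emph{Step 1: build the merged matrix \(W\).} For each \(i\ne r\), replace row \(i\) by \((1+\lambda_i)u_{i\bullet}\) and set row \(r\) to zero. Call the result \(W\). Its entries are nonnegative, and its total mass is
\[
\sum_{i\ne r}(1+\lambda_i)\sum_\alpha u_{i\alpha}=\tau(U)=1,
\]
so \(W\in\Delta_{n,m}\). Since \(W\) has a zero row, Theorem~\ref{thm:zero-row-vanishing} gives \(F(W)=0\).

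\emph{Step 2: split \(W\) back into \(U\).} I would construct a column-stochastic \(T\) with \(TU\)-type action \(TW=U\), column by column.
\begin{itemize}
\item For \(j\ne r\), column \(j\) of \(T\) sends input state \(j\) to output \(j\) with probability \(1/(1+\lambda_j)\) and to output \(r\) with probability \(\lambda_j/(1+\lambda_j)\).
\item Column \(r\) multiplies a zero row of \(W\), so it can be any probability vector, for instance \(e_r\).
\end{itemize}
Then row \(j\ne r\) of \(TW\) is \(u_{j\bullet}\). Row \(r\) of \(TW\) is
\[
\sum_{j\ne r}\frac{\lambda_j}{1+\lambda_j}(1+\lambda_j)u_{j\bullet}=u_{r\bullet}.
\]
This is just the multi-row version of Lemma~\ref{lem:parallel-merge-split}.

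\emph{Step 3: conclude.} Lemma~\ref{lem:refPone-derived} and \refDPI{} give
\[
0\le F(U)=F(TW)\le F(W)=0,
\]
so \(F(U)=0\). The only care needed is the construction of \(T\): it must be column-stochastic, including the zero-row column \(r\). No approximation argument is required, because Theorem~\ref{thm:zero-row-vanishing} already covers arbitrary matrices with a zero row, including those with boundary zeros elsewhere.
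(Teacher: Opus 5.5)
Your proposal is correct and matches the paper's proof: you merge the dependent row into its generators with weights \(1+\lambda_i\), apply the zero-row vanishing principle, and split back with a column-stochastic channel, so that \(0\le F(U)\le F(W)=0\). The only cosmetic difference is that the paper first uses row-permutation invariance to move the dependent row to the last position, whereas you work with a general row \(r\) directly.
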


\begin{proof}
DPI implies invariance under row permutations, because both a permutation and its inverse are admissible channels. We may therefore assume that the dependent row is the last row. 
\[
u_{n\bullet}
=
\sum_{i=1}^{n-1}\lambda_i u_{i\bullet},
\qquad \lambda_i\ge0.
\]
Let \(V\) have last row zero and
\(v_{i\bullet}=(1+\lambda_i)u_{i\bullet}\) for \(i<n\).
Its total mass equals that of \(U\), so \(V\in\Delta_{n,m}\).
The zero-row vanishing principle gives \(F(V)=0\).

Split each row \(v_{i\bullet}\) into the proportions
\(1/(1+\lambda_i)\) and \(\lambda_i/(1+\lambda_i)\), placing the
latter piece in row \(n\). This defines a column-stochastic matrix
\(T\) with \(TV=U\). Hence \refDPI{} gives \(F(U)\le F(V)=0\),
while Lemma~\ref{lem:refPone-derived} gives \(F(U)\ge0\).
\end{proof}

Now fix \(2\le n\le m\), and let \(P\in R_{n,m}\) be a polynomial
whose restriction to \(\Delta_{n,m}\) satisfies \refDPI{} and
\refIZ{}. If \(P=0\), the conclusion is immediate. Hence assume
\(P\ne0\), and let \(\kappa:=\deg P\). Let
\[
I_n
:=
\left\langle
\det(U_{\bullet S}):
S\subseteq[m],\ |S|=n
\right\rangle
\subseteq R_{n,m}
\]
be the ideal generated by the maximal minors of \(U\).

\subsubsection{\texorpdfstring{\(\tau\)}{tau}-homogenization}

Write \(P=\sum_{k=0}^\kappa P_k\), where \(P_k\) is homogeneous of degree
\(k\), and define
\[
P^{\mathrm h}
:=
\sum_{k=0}^\kappa\tau^{\kappa-k}P_k.
\]

\begin{lemma}[Basic homogenization identities]
\label{lem:tau-homogenization}
The polynomial \(P^{\mathrm h}\) is homogeneous of degree \(\kappa\) and
agrees with \(P\) on \(\{\tau=1\}\). In particular,
\(P^{\mathrm h}-P\in(\tau-1)\). Moreover, if
\(s:=\tau(U)>0\), then
\[
P^{\mathrm h}(U)=s^\kappa P(U/s).
\]
\end{lemma}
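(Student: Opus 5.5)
The plan is to verify the three claims directly from the definition of \(P^{\mathrm h}\) and the grading of \(R_{n,m}\), with no reference to DPI or independence-zero.

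\emph{Homogeneity.} Each summand \(\tau^{\kappa-k}P_k\) is a product of homogeneous polynomials of degrees \(\kappa-k\) and \(k\). Here \(\tau\) is homogeneous of degree one and \(0\le k\le\kappa\), so the exponent is nonnegative. Hence every summand is homogeneous of degree exactly \(\kappa\), and so is their sum. Since \(P\ne0\) and \(\kappa=\deg P\), we have \(P_\kappa\ne0\), but the statement needs only that \(P^{\mathrm h}\) is homogeneous of degree \(\kappa\) (possibly zero), so no further check is required.

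\emph{Agreement and membership in \((\tau-1)\).} On \(\{\tau=1\}\) every factor \(\tau^{\kappa-k}\) equals \(1\), so \(P^{\mathrm h}=\sum_kP_k=P\) there. For the ideal statement I would argue algebraically rather than by vanishing on a set. For each \(j\ge0\), the telescoping factorization
\[
\tau^j-1=(\tau-1)(\tau^{j-1}+\cdots+1)
\]
shows that \(\tau^j-1\in(\tau-1)\). Then
\[
P^{\mathrm h}-P=\sum_{k=0}^{\kappa}(\tau^{\kappa-k}-1)P_k\in(\tau-1).
\]
This avoids any appeal to Lemma~\ref{lem:open-set-identity}.

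\emph{Scaling identity.} Fix \(U\) with \(s=\tau(U)>0\). Homogeneity of \(P^{\mathrm h}\) gives \(P^{\mathrm h}(U)=s^\kappa P^{\mathrm h}(U/s)\). Since \(\tau(U/s)=1\), the agreement on \(\{\tau=1\}\) gives \(P^{\mathrm h}(U/s)=P(U/s)\). The only point needing care is to use homogeneity of \(P^{\mathrm h}\) itself rather than of \(P\), and I do not expect any step to be a genuine obstacle.
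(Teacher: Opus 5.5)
Your proof is correct and follows the paper's argument claim by claim. Your scaling step (homogeneity of \(P^{\mathrm h}\) plus agreement on \(\{\tau=1\}\)) is a repackaging of the paper's termwise computation \(s^\kappa P(U/s)=\sum_k s^{\kappa-k}P_k(U)\).

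The one small difference is the ideal-membership step. Your identity \(P^{\mathrm h}-P=\sum_k(\tau^{\kappa-k}-1)P_k\), combined with \(\tau^j-1\in(\tau-1)\), gives an explicit algebraic certificate. The paper instead divides by the monic linear polynomial \(\tau-1\), which implicitly requires checking that the remainder is zero: the remainder is free of one variable and vanishes on \(\{\tau=1\}\). Your version is therefore slightly more self-contained.
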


\begin{proof}
Homogeneity is immediate from the definition. On \(\{\tau=1\}\),
we have \(P^{\mathrm h}=\sum_kP_k=P\), and division by the monic
linear polynomial \(\tau-1\) gives
\(P^{\mathrm h}-P\in(\tau-1)\). Finally,
\(s^\kappa P(U/s)=\sum_k s^{\kappa-k}P_k(U)=P^{\mathrm h}(U)\).
\end{proof}

For the remainder of the proof, set \(H:=P^{\mathrm h}\).

\begin{lemma}[Positive-cone contact]
\label{lem:non-tall-cone-contact}
If \(U\ge0\) and \(\tau(U)>0\), then \(H(U)\ge0\). If, in addition,
one row of \(U\) is a nonnegative linear combination of the others,
then \(H(U)=0\).
\end{lemma}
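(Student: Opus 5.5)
The plan is to reduce both claims to statements about the normalized matrix \(U/s\), where \(s:=\tau(U)>0\), and then invoke the results already available on the simplex.

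\textbf{Nonnegativity.} Let \(U\ge0\) with \(s=\tau(U)>0\). Then \(U/s\) has nonnegative entries and total mass one, so \(U/s\in\Delta_{n,m}\). By Lemma~\ref{lem:tau-homogenization},
\[
H(U)=s^\kappa P(U/s).
\]
Since \(P\) restricted to the simplex satisfies \refDPI{} and \refIZ{}, Lemma~\ref{lem:refPone-derived} gives \(P(U/s)\ge0\). Because \(s^\kappa>0\), it follows that \(H(U)\ge0\).

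\textbf{Vanishing on the row cone.} Now suppose, in addition, that some row satisfies
\[
u_{j\bullet}=\sum_{i\ne j}\lambda_i u_{i\bullet},\qquad \lambda_i\ge0.
\]
Dividing through by \(s\) shows that the same relation, with the same coefficients \(\lambda_i\), holds among the rows of \(U/s\). So \(U/s\in\Delta_{n,m}\) has one row equal to a nonnegative linear combination of the others.

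Every polynomial restricted to \(\Delta_{n,m}\) is \(C^1\)-extendable (Example~\ref{ex:C1-functionals}). Hence Lemma~\ref{lem:row-cone-vanishing} applies to \(P|_{\Delta_{n,m}}\) and gives \(P(U/s)=0\). Therefore \(H(U)=s^\kappa\cdot 0=0\).

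\textbf{Expected difficulty.} There is no real obstacle here. The only points to check are that positive scaling preserves both nonnegativity and the nonnegative row relation, and that \(s>0\) so the scaling identity of Lemma~\ref{lem:tau-homogenization} is available. All the substantive work is contained in Lemma~\ref{lem:row-cone-vanishing} and the zero-row vanishing principle behind it.
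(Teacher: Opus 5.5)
Your proposal is correct and is the paper's own proof. It normalizes to \(U/s\in\Delta_{n,m}\), gets \(P(U/s)\ge0\) from Lemma~\ref{lem:refPone-derived} and \(P(U/s)=0\) in the row-cone case from Lemma~\ref{lem:row-cone-vanishing}, then transfers back via \(H(U)=s^\kappa P(U/s)\). You also state explicitly that polynomial restrictions are \(C^1\)-extendable, which the row-cone lemma requires; the paper leaves this implicit.
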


\begin{proof}
Put \(s=\tau(U)\). Then \(U/s\in\Delta_{n,m}\).
Lemma~\ref{lem:refPone-derived} gives \(P(U/s)\ge0\), while
Lemma~\ref{lem:row-cone-vanishing} gives \(P(U/s)=0\) in the
row-cone case. Apply Lemma~\ref{lem:tau-homogenization}.
\end{proof}

\subsection{A Schur chart for the rank-deficient locus}

Put \(r:=m-n+1\) and decompose
\[
U=
\begin{pmatrix}
A&B\\
c^\top&d^\top
\end{pmatrix},
\]
where \(A\) is \((n-1)\times(n-1)\), \(B\) is
\((n-1)\times r\), and \(d=(d_1,\ldots,d_r)\). Let
\(\delta:=\det A\).

We work on the chart \(\delta\ne0\), where \(A\) is invertible.
Algebraically, this means passing from \(R_{n,m}\) to its
\emph{localization at \(\delta\)},
\[
  (R_{n,m})_\delta
  :=
  \left\{
    \frac{f}{\delta^k}:
    f\in R_{n,m},\ k\ge0
  \right\}.
\]
Thus powers of \(\delta\) may be used as denominators. In particular,
since
\[
  A^{-1}
  =
  \frac{\operatorname{adj}(A)}{\delta},
\]
and the entries of \(\operatorname{adj}(A)\) are polynomials in the
entries of \(A\), every entry of \(A^{-1}\) belongs to
\((R_{n,m})_\delta\). This is the algebraic reason for restricting to
the chart \(\delta\ne0\).

We may therefore define the Schur coordinates
\[
  q^\top:=d^\top-c^\top A^{-1}B.
\]
Subtracting \(c^\top A^{-1}\) times the first \(n-1\) rows from the
last row transforms \(U\) into
\[
  \begin{pmatrix}
    A&B\\
    0&q^\top
  \end{pmatrix}.
\]
Since \(A\) is invertible, the first \(n-1\) rows are independent.
Hence
\[
  \rank U<n
  \quad\Longleftrightarrow\quad
  q=0.
\]
Thus \(q_1,\ldots,q_r\) measure, on this chart, the directions
transverse to the rank-deficient locus.

We now make this coordinate change algebraically precise. Let \(S\)
be the polynomial ring in the entries of \(A,B,c\). Then
\(R_{n,m}=S[d_1,\ldots,d_r]\), and after localization,
\[
  (R_{n,m})_\delta
  =
  S_\delta[d_1,\ldots,d_r].
\]
Because
\(q^\top=d^\top-c^\top A^{-1}B\) and conversely
\(d^\top=q^\top+c^\top A^{-1}B\), while
\(c^\top A^{-1}B\) has entries in \(S_\delta\), the variables
\(d_1,\ldots,d_r\) and \(q_1,\ldots,q_r\) determine each other over
\(S_\delta\). Therefore
\[
  (R_{n,m})_\delta
  \cong
  S_\delta[q_1,\ldots,q_r].
\]

Finally, consider the maximal-minor ideal \(I_n\). The maximal minor
formed from the first \(n-1\) pivot columns together with the
\(\beta\)-th remaining column is, up to sign, \(\delta q_\beta\).
Since \(\delta\) is invertible on this chart, each \(q_\beta\) belongs
to the ideal generated by the maximal minors in
\((R_{n,m})_\delta\). Conversely, after the row operation above,
expanding any maximal minor along the last row shows that it belongs
to \((q_1,\ldots,q_r)\). Hence
\[
  I_n(R_{n,m})_\delta=(q_1,\ldots,q_r),
  \qquad
  I_n^2(R_{n,m})_\delta=(q_1,\ldots,q_r)^2.
\]

Thus \(q_1,\ldots,q_r\) are local normal coordinates to the
rank-deficient locus. In particular, a polynomial belongs locally to
\(I_n^2\) exactly when its expansion in the \(q\)-variables has no
constant or linear terms.
\subsection{Double contact and local ideal-square membership}

We next produce an open set of positive row-cone points on the chart.
Let \(\mathbf 1=(1,\ldots,1)\in\mathbb R^m\), choose
\(\varepsilon>0\), and set
\[
r_i:=\mathbf1+\varepsilon e_i\quad(1\le i<n),
\qquad
r_n:=r_1+\cdots+r_{n-1}.
\]
The matrix with these rows is strictly positive. Its chosen
\((n-1)\times(n-1)\) pivot is \(J+\varepsilon I\), hence invertible;
in Schur coordinates it has \(q=0\), and the last row is the positive
sum of the first \(n-1\) rows.

Invertibility, positivity of the matrix entries, and positivity of the
coefficients expressing the last row in terms of the first \(n-1\) rows
are open conditions. Therefore there exists a nonempty open box \(O\) in
the \(z=(A,B,c)\) coordinates such that, for every \(z\in O\), the
matrix represented by \((z,0)\) is strictly positive and its last row is
a positive linear combination of the first \(n-1\) rows.

Under the identification
\((R_{n,m})_\delta\cong S_\delta[q_1,\ldots,q_r]\), expand \(H\) by
total degree in \(q\):
\[
H
=
C_0(z)
+\sum_{\beta=1}^r C_\beta(z)q_\beta
+\sum_{|\alpha|\ge2}C_\alpha(z)q^\alpha,
\qquad C_\alpha\in S_\delta.
\]
For \(z\in O\), Lemma~\ref{lem:non-tall-cone-contact} gives
\(H(z,0)=0\). Small changes of \(q\) preserve strict positivity and
positive total mass, so \(q=0\) is also a local minimum of
\(q\mapsto H(z,q)\). Hence
\[
H(z,0)=0,
\qquad
\frac{\partial H}{\partial q_\beta}(z,0)=0
\quad(1\le\beta\le r).
\]
Reading these conditions from the expansion gives
\(C_0(z)=C_\beta(z)=0\) on \(O\).

Each of \(C_0,C_1,\ldots,C_r\) lies in \(S_\delta\). Clearing the
denominator of each of these coefficients gives an ordinary polynomial
in \(z\), whose numerator vanishes on the nonempty open set \(O\). By
Lemma~\ref{lem:open-set-identity}, these numerators vanish identically.
Therefore \(C_0=C_1=\cdots=C_r=0\) in \(S_\delta\), and
\[
H\in(q_1,\ldots,q_r)^2
=
I_n^2(R_{n,m})_\delta.
\]

We have proved the common local statement:

\begin{proposition}[Local ideal-square membership]
\label{prop:local-maximal-minor-square}
There exists \(N\ge0\) such that
\[
\delta^N H\in I_n^2.
\]
\end{proposition}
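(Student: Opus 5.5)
The plan is to convert the localized membership $H\in I_n^2(R_{n,m})_\delta$, established just above, into a statement with bounded denominators. Membership in the localization means there are finitely many elements $g_{S,T}\in(R_{n,m})_\delta$ with
\[
  H=\sum_{S,T}g_{S,T}\,\Delta_S\Delta_T,
\]
where $S,T$ range over $n$-subsets of $[m]$. This holds because $I_n^2(R_{n,m})_\delta$ is generated over $(R_{n,m})_\delta$ by the products $\Delta_S\Delta_T$ of maximal minors. The identification
\[
  I_n^2(R_{n,m})_\delta=(q_1,\ldots,q_r)^2
\]
was proved above, so this generation statement needs no further argument.

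Each coefficient has the form $g_{S,T}=f_{S,T}/\delta^{k_{S,T}}$ with $f_{S,T}\in R_{n,m}$. There are only finitely many pairs $(S,T)$, so I will take $N:=\max_{S,T}k_{S,T}$ and multiply the identity through by $\delta^N$. This gives
\[
  \delta^N H=\sum_{S,T}\delta^{N-k_{S,T}}f_{S,T}\,\Delta_S\Delta_T
\]
as an equality in $(R_{n,m})_\delta$.

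Both sides of this equality are images of elements of $R_{n,m}$. The localization map $R_{n,m}\to(R_{n,m})_\delta$ is injective, because $R_{n,m}$ is an integral domain and $\delta\neq0$. Hence the equality already holds in $R_{n,m}$, and its right-hand side visibly lies in $I_n^2$. This proves $\delta^N H\in I_n^2$.

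The only step needing care is translating the Taylor-expansion description, namely that $H$ has no constant or linear terms in $q$, into the explicit generators $\Delta_S\Delta_T$. This is done by writing each $q_\beta$ as $\pm\Delta_{S_\beta}/\delta$, where $S_\beta$ consists of the pivot columns together with the $\beta$-th remaining column, as noted in the Schur-chart paragraph. Every monomial $q^\alpha$ with $|\alpha|\ge2$ then factors as $q_\beta q_\gamma$ times a further element of $S_\delta[q]$, so it lies in the ideal generated by the $\Delta_S\Delta_T$ with denominators that are powers of $\delta$. Apart from this bookkeeping, everything is standard clearing of denominators.
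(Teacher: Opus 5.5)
Your argument is correct and matches the paper's. The paper's proof of this proposition is the preceding Schur-chart and double-contact argument, which yields \(H\in(q_1,\ldots,q_r)^2=I_n^2(R_{n,m})_\delta\) and then states the proposition, leaving the clearing of denominators implicit; you take that localized membership as established and spell out the omitted final step (a common exponent \(N=\max k_{S,T}\) and injectivity of \(R_{n,m}\to(R_{n,m})_\delta\)), which is exactly what is needed.
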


The remainder of the proof removes the artificial pivot factor
\(\delta^N\). This is the only point at which the cases \(m=n\) and \(m>n\)
differ.

\subsection{The square case: elementary globalization}

Assume first that \(m=n\). Then there is only one maximal minor, so
\[
I_n=(D),
\qquad
D:=\det U.
\]
Proposition~\ref{prop:local-maximal-minor-square} gives
\(D^2\mid\delta^N H\). We now cancel the pivot factor using only
unique factorization.

\begin{lemma}[Irreducibility of the generic determinant]
\label{thm:generic-det-irreducible}
The determinant of the generic \(n\times n\) matrix over \(\bbR\) is
irreducible for every \(n\ge1\).
\end{lemma}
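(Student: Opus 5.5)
I would argue by induction on \(n\), using the Laplace expansion along the first row together with the fact that \(\bbR[u_{ij}]\) is a unique factorization domain. For \(n=1\), \(\det U=u_{11}\) is a variable, hence irreducible.

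For the inductive step, write
\[
D=\sum_{j=1}^n u_{1j}C_{1j},
\]
where \(C_{1j}\) is the cofactor, which does not involve the first-row variables. Thus \(D\) has degree exactly one in each entry \(u_{ij}\), and in particular in \(u_{11}\). Its coefficient there is \(C_{11}=\det U_{\hat1\hat1}\), which is irreducible by the inductive hypothesis when \(n\ge2\).

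Suppose \(D=fg\) with \(f,g\) nonconstant. Since \(D\) is homogeneous of degree \(n\), both \(f\) and \(g\) are homogeneous. For every variable \(u_{ij}\), the degrees satisfy \(\deg_{u_{ij}}f+\deg_{u_{ij}}g=1\), so each variable occurs in exactly one of the two factors. Let \(\mathcal V_f\) and \(\mathcal V_g\) be the resulting disjoint sets of variables, and say \(u_{11}\in\mathcal V_f\).

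The key step is to show that two entries sharing a row or a column must lie in the same factor. Take, say, \(u_{11}\) and \(u_{12}\). If they lay in different factors, then \(D=fg\) would contain the monomial product \(u_{11}u_{12}\times(\cdots)\) with nonzero coefficient. This holds because the coefficient of \(u_{11}\) in \(D\) is \((\partial f/\partial u_{11})\,g\), and the coefficient of \(u_{12}\) in that is \((\partial f/\partial u_{11})(\partial g/\partial u_{12})\), a product of nonzero polynomials in a domain, hence nonzero. However, no monomial of \(D\) contains two entries from the same row, since every term of the Leibniz expansion uses exactly one entry per row and column. This is a contradiction, and the same argument handles columns.

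Since any two entries are connected by a row–column path (\(u_{ij}\sim u_{i1}\sim u_{11}\)), all variables lie in \(\mathcal V_f\). Hence \(g\) is constant, and \(D\) is irreducible.

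The main obstacle is the justification that \(\partial f/\partial u_{11}\) and \(\partial g/\partial u_{12}\) are nonzero. This follows because each variable actually occurs in \(D\) and therefore in its designated factor. One should also check that no cancellation can remove the mixed monomial; this is immediate from the degree-one structure, since extracting a coefficient is an exact operation.

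This argument does not use the inductive hypothesis at all, so the induction can be dropped. The proof is then direct and works over any field, in particular over \(\bbR\).
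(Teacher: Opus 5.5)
Your proof is correct, and it takes a genuinely different route from the paper's.

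The paper inducts on $n$. It writes $D_n=aX+b$ with $X=u_{nn}$ and $a=D_{n-1}$, and uses the inductive hypothesis to get that $a$ is prime. It then shows $a\nmid b$ by extracting the coefficient of $u_{n,n-1}$ and specializing so that $a\mapsto t$ while that cofactor maps to $1$. This makes $D_n$ primitive in $A[X]$, and Gauss's lemma transfers the trivial irreducibility of a linear polynomial over $\operatorname{Frac}(A)$ back to $A[X]$.

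You instead use only the support of the Leibniz expansion:
\begin{enumerate}
\item Additivity of $\deg_{u_{ij}}$ in a domain puts each variable in exactly one factor.
\item Writing $f=f_1u_{11}+f_0$ and $g=g_1u_{12}+g_0$, the coefficient of $u_{11}u_{12}$ in $fg$ is $f_1g_1\neq0$. This is impossible, since no monomial of $D$ contains two entries from one row (or one column).
\item Connectivity of the row--column incidence then forces all variables into one factor.
\end{enumerate}

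Your route needs no induction, no UFD theory and no Gauss's lemma. Only the integral-domain property is used, and the argument works verbatim over any field. The inductive setup, the remark about $C_{11}$, and the homogeneity of $f,g$ are all unused and can be deleted.

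Your argument also proves more: any multilinear polynomial whose ``never co-occur'' relation on variables is connected is irreducible, so the same proof shows the generic permanent is irreducible. The paper's route rests instead on a general criterion: a polynomial linear in one variable with coprime coefficients is irreducible. That criterion does not depend on the special combinatorial support of the determinant, but it costs the induction and the specialization step.
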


\begin{proof}
We argue by induction on \(n\). The case \(n=1\) is immediate.

Distinguish \(X:=u_{nn}\), and let \(A\) be the polynomial ring over
\(\bbR\) in all the remaining entries of the generic matrix. Then
\(D_n:=\det U\) may be viewed as a polynomial in \(X\) with
coefficients in \(A\):
\[
  D_n=aX+b,
\]
where \(a=D_{n-1}\) is the determinant of the upper-left generic
\((n-1)\times(n-1)\) block.

Recall that a polynomial ring over a field is a \emph{unique
factorization domain} (UFD), meaning that every nonzero polynomial
factors uniquely into irreducible factors, up to nonzero constants and
reordering. In a UFD, every irreducible element is prime: if it divides
a product, then it divides one of the factors. By induction,
\(a=D_{n-1}\) is irreducible in the polynomial ring generated by the
upper-left block. It remains prime after adjoining the other independent
variables appearing in \(A\), since the quotient by \((a)\) is then a
polynomial ring over an integral domain. Hence \(a\) is prime in \(A\).

We claim that \(a\) and \(b\) have no common nonconstant factor. Since
\(a\) is irreducible, it suffices to show \(a\nmid b\). Let
\(z:=u_{n,n-1}\). Viewed as a polynomial in \(z\), the coefficient of
\(z\) in \(b\), up to sign, is the determinant \(c\) of the submatrix
using rows \(1,\ldots,n-1\) and columns \(1,\ldots,n-2,n\). If
\(a\mid b\), then, since \(a\) does not involve \(z\), we would have
\(a\mid c\).

Now specialize the first \(n-2\) relevant rows and columns to an
identity block, set \(u_{n-1,n-1}=t\), set \(u_{n-1,n}=1\), and set
the remaining relevant entries to zero. Under this specialization,
\(a\mapsto t\) and \(c\mapsto1\), so \(a\mid c\) would imply
\(t\mid1\) in \(\bbR[t]\), a contradiction. Hence \(a\nmid b\), and
therefore \(a\) and \(b\) have no common nonconstant factor.

A polynomial whose coefficients have no common nonconstant factor is
called \emph{primitive}. Thus \(D_n=aX+b\) is primitive in \(A[X]\).

Let \(L:=\operatorname{Frac}(A)\) be the \emph{fraction field} of
\(A\): its elements are ratios \(p/q\) with \(p,q\in A\) and
\(q\ne0\), just as rational numbers are ratios of integers. In \(L\),
the nonzero coefficient \(a\) is invertible, so
\(D_n=a(X+b/a)\). Hence \(D_n\) is irreducible in \(L[X]\), since
\(X+b/a\) is a nonconstant linear polynomial.

Finally, Gauss's lemma in its standard UFD form
(see, e.g., Lang~\cite{lang02}) states that a primitive polynomial in
\(A[X]\) is irreducible over \(A\) if and only if it is irreducible
over the fraction field \(L\). Therefore \(D_n\) is irreducible in
\(A[X]\), and hence in the full polynomial ring of the generic
\(n\times n\) matrix.
\end{proof}

Since \(R_{n,n}\) is a unique factorization domain,
Lemma~\ref{thm:generic-det-irreducible} implies that \(D\) is prime.
Moreover, \(D\nmid\delta\), because \(\deg D=n\) while
\(\deg\delta=n-1\). Hence \(D\nmid\delta^N\).

From \(D^2\mid\delta^N H\), primality of \(D\) first gives \(D\mid H\).
Write \(H=DH_1\). Then \(D\mid\delta^N H_1\), and again
\(D\nmid\delta^N\), so \(D\mid H_1\). Thus
\[
H\in(D^2)=(\det U)^2.
\]
Since \(H-P\in(\tau-1)\),
\[
P\in(\det U)^2+(\tau-1),
\qquad\text{equivalently}\qquad
\overline P\in(\det U)^2
\subseteq\overline R_{n,n}.
\]

\subsection{The case
\texorpdfstring{\(m>n\)}{m>n}: determinantal globalization}

Now assume \(m>n\). The local argument again gives
\(\delta^N H\in I_n^2\), but \(I_n\) is no longer principal. We use
one standard fact about generic determinantal ideals.

\begin{proposition}[Pivot saturation for maximal-minor squares]
\label{prop:maximal-minor-square-saturation}
Let \(2\le n<m\), and let \(\delta\) be any
\((n-1)\times(n-1)\) minor. If
\[
\delta^N f\in I_n^2
\]
for some \(N\ge0\), then \(f\in I_n^2\).
\end{proposition}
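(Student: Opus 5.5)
The plan is to deduce the statement from the classical fact that $I_n^2$ is a \emph{primary} ideal whose radical is $I_n$, and then to check that the pivot minor $\delta$ does not lie in that radical.

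\textbf{Step 1: primality of $I_n$.} For generic $n\times m$ matrices with $n\le m$, the maximal-minor ideal $I_n$ is prime of height $m-n+1$. This is the Eagon--Northcott / Hochster--Eagon theorem on generic determinantal rings.

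\textbf{Step 2: the symbolic power equals the ordinary power.} The key input is $I_n^{(2)}=I_n^2$, i.e.\ $I_n^2$ is $I_n$-primary. This holds because powers of maximal-minor ideals of a generic matrix have no embedded primes. The relevant results are due to DeConcini--Eisenbud--Procesi and Bruns--Vetter. In the DEP description, symbolic powers of $I_t$ are generated by products of minors $\delta_1\cdots\delta_k$ with $\sum(\text{size}(\delta_i)-t+1)\ge 2$. For $t=n$ (maximal size), every minor of size $<n$ contributes $\le0$ and a maximal minor contributes exactly $1$. So any product of weight at least $2$ lies in $I_n^2$, which gives $I_n^{(2)}=I_n^2$.

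This step is the main obstacle. If I wanted to avoid citing it, I would try a direct route: show that the associated graded ring $\bigoplus I_n^k/I_n^{k+1}$ is a domain, or at least that $I_n^k/I_n^{k+1}$ is torsion-free over $R/I_n$. The natural tool is the straightening law, or Eagon--Northcott, using that $I_n$ is generated by a $d$-sequence, or that the Rees algebra of $I_n$ (the Plücker / Grassmannian coordinate ring) is a domain with a standard monomial basis. I expect this to be the hardest part. The paper's remark that the Lean development proves the square of the generic maximal-minor ideal is primary suggests this is the intended route.

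\textbf{Step 3: $\delta\notin I_n$.} Since $\delta$ is homogeneous of degree $n-1<n$, while $I_n$ is generated in degree $n$, we get $\delta\notin I_n=\sqrt{I_n^2}$.

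\textbf{Step 4: conclusion.} Because $I_n^2$ is primary with radical $I_n$, the relation $\delta\cdot(\delta^{N-1}f)\in I_n^2$ together with $\delta\notin\sqrt{I_n^2}$ forces $\delta^{N-1}f\in I_n^2$. Iterating $N$ times (or applying primariness once to $\delta^N\notin I_n$) gives $f\in I_n^2$.

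Equivalently, localizing at the prime $I_n$: $\delta$ becomes a unit there, so $f\in I_n^2R_{I_n}\cap R=I_n^{(2)}=I_n^2$.
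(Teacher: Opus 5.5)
Your proposal is correct and follows essentially the same route as the paper. The paper cites Bruns--Vetter (Corollary~9.18) for primality of \(I_n\) and the \(I_n\)-primary property of its powers, observes \(\delta\notin I_n\) by degree, and concludes via \(\delta^N\notin I_n\). Your extra justification of \(I_n^{(2)}=I_n^2\) through the DeConcini--Eisenbud--Procesi weight count is a sound elaboration of that same cited fact.
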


\begin{proof}
This is the one black-box commutative-algebra input needed when \(m>n\).
Recall that a proper ideal \(Q\subset R_{n,m}\) is \emph{primary} if
\[
ab\in Q,\quad a\notin Q
\quad\Longrightarrow\quad
b^k\in Q
\quad\text{for some }k\ge1.
\]
Its radical is
\[
\sqrt Q
:=
\{x\in R_{n,m}:x^k\in Q\text{ for some }k\ge1\}.
\]
If \(\mathfrak p=\sqrt Q\), we say that \(Q\) is \(\mathfrak p\)-primary. Equivalently,
for a \(\mathfrak p\)-primary ideal,
\[
ab\in Q,\quad a\notin\mathfrak p
\quad\Longrightarrow\quad
b\in Q.
\]

For the generic maximal-minor ideal \(I_n\subset R_{n,m}\),
Bruns--Vetter~\cite[Corollary~9.18]{BrunsVetter} shows that \(I_n\)
is prime and that all of its powers are \(I_n\)-primary. In
particular, \(I_n^2\) is \(I_n\)-primary.

Now \(\delta\notin I_n\): the ideal \(I_n\) is homogeneous and
generated by degree-\(n\) polynomials, whereas
\(\deg\delta=n-1\). Since \(I_n\) is prime,
\(\delta^N\notin I_n\). Thus
\(\delta^Nf\in I_n^2\) and the \(I_n\)-primary property imply
\(f\in I_n^2\).
\end{proof}

Applying Proposition~\ref{prop:maximal-minor-square-saturation} to
Proposition~\ref{prop:local-maximal-minor-square} gives
\(H\in I_n^2\). Since \(H-P\in(\tau-1)\),
\[
P\in I_n^2+(\tau-1).
\]
Equivalently, if
\[
\overline I_n
:=
\frac{I_n+(\tau-1)}{(\tau-1)}
\subseteq\overline R_{n,m},
\]
then
\[
\overline P\in\overline I_n^{\,2}.
\]

\subsection{Degree optimality}

The two branches above give the same algebraic conclusion:
for every \(2\le n\le m\),
\[
H\in I_n^2.
\]
When \(m=n\), this simply reads
\(H\in(\det U)^2\).

\begin{lemma}[Degree lower bound in the non-tall regime]
\label{lem:non-tall-degree}
\label{lem:nxn-degree}
\label{lem:n-less-m-degree}
If \(P\) is not identically zero on \(\Delta_{n,m}\), then
\[
\deg P\ge2n.
\]
\end{lemma}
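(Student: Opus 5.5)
The plan is to read the degree bound off the global membership \(H\in I_n^2\) established just above, and then pass from \(H\) back to \(P\) through the homogenization.

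First, suppose \(P\) is not identically zero on \(\Delta_{n,m}\). Then \(H\ne0\) as a polynomial. Indeed, \(H\) agrees with \(P\) on \(\{\tau=1\}\supseteq\Delta_{n,m}\) by Lemma~\ref{lem:tau-homogenization}, so \(H\) is nonzero at some point of the simplex. Recall that \(H\) is homogeneous of degree \(\kappa=\deg P\).

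Next, I show that every nonzero element of \(I_n^2\) has degree at least \(2n\). The ideal \(I_n^2\) is generated by the products \(\Delta_S\Delta_{S'}\), and each such product is homogeneous of degree \(2n\). Because the generators are homogeneous, \(I_n^2\) is a homogeneous ideal, so each homogeneous component of an element of \(I_n^2\) again lies in \(I_n^2\). Any element \(\sum_{S,S'}g_{S,S'}\Delta_S\Delta_{S'}\) therefore has all of its homogeneous components of degree at least \(2n\). Concretely, the component of degree \(k<2n\) equals \(\sum (g_{S,S'})_{k-2n}\Delta_S\Delta_{S'}\), which is zero because \(g_{S,S'}\) has no components of negative degree.

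Since \(H\) is nonzero and homogeneous of degree \(\kappa\), it follows that \(\kappa\ge2n\), that is, \(\deg P\ge2n\).

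The one step needing care is the first: \(\deg P\) must mean the actual degree of the representative \(P\), and homogenization must not increase it. This holds because \(\deg H=\kappa=\deg P\), with \(\kappa\) taken as the true top degree. Tightness is then separate and is given by Cauchy--Binet for \(\det(UU^\top)\), as already noted in the text.
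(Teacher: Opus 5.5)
Your proof is correct and follows the paper's argument. The paper likewise notes that \(H\ne0\) because \(H\) agrees with \(P\) on the simplex, that \(H\) is homogeneous of degree \(\kappa=\deg P\), and that every nonzero element of \(I_n^2\), whose generators are homogeneous of degree \(2n\), has degree at least \(2n\); your componentwise check just spells out that last step.
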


\begin{proof}
Then \(H\ne0\) and \(\deg H=\deg P\). Since \(I_n\) is generated by
homogeneous polynomials of degree \(n\), every nonzero element of
\(I_n^2\) has degree at least \(2n\). Thus
\(\deg P=\deg H\ge2n\).
\end{proof}

The lower bound is attained uniformly throughout the non-tall regime.

\begin{proposition}[Gram-determinant monotone]
\label{prop:gram-determinant}
Assume \(n\le m\), and define
\[
F_{\mathrm{Gram}}(U):=\det(UU^\top).
\]
Then \(F_{\mathrm{Gram}}\) is a nonzero polynomial of degree \(2n\)
satisfying \refDPI{} and \refIZ{}, and
\(F_{\mathrm{Gram}}\in I_n^2\).
\end{proposition}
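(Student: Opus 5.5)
The plan is to verify the four claims separately, leaning on Cauchy--Binet throughout. First, Cauchy--Binet gives
\[
  F_{\mathrm{Gram}}(U)=\det(UU^\top)=\sum_{S\subseteq[m],\,|S|=n}\Delta_S(U)^2 .
\]
This displays \(F_{\mathrm{Gram}}\) directly as a sum of products of two maximal minors, so \(F_{\mathrm{Gram}}\in I_n^2\). Each \(\Delta_S\) is homogeneous of degree \(n\), so \(F_{\mathrm{Gram}}\) is homogeneous of degree \(2n\).

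To show it is nonzero, and nonzero on the simplex in particular, evaluate at
\[
  U=\tfrac1n\,(I_n\mid 0)\in\Delta_{n,m}.
\]
Here \(n\le m\) is used: for \(S=[n]\) the minor equals \(n^{-n}\neq 0\), and every other minor vanishes. Hence \(F_{\mathrm{Gram}}(U)=n^{-2n}>0\). Since \(F_{\mathrm{Gram}}\) is a sum of squares, it is nonnegative everywhere, and its degree is exactly \(2n\).

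For \refIZ{}: if \(\rank U=1<n\), then every \(n\times n\) minor vanishes, because \(n\ge2\). Hence \(F_{\mathrm{Gram}}(U)=0\). Equivalently, \(UU^\top\) has rank at most one while its size is \(n\ge2\), so its determinant is zero.

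For \refDPI{}: for an \(n\times n\) column-stochastic \(T\), multiplicativity of the determinant gives
\[
  \det\bigl(TU(TU)^\top\bigr)=\det(T)\,\det(UU^\top)\,\det(T^\top)=(\det T)^2F_{\mathrm{Gram}}(U).
\]
Alternatively, \(\Delta_S(TU)=\det T\,\Delta_S(U)\) columnwise, followed by Cauchy--Binet. It remains to show \(|\det T|\le1\). Hadamard's inequality bounds \(|\det T|\) by the product of the Euclidean norms of the columns of \(T\). Each column is a probability vector, so its \(\ell_2\) norm is at most its \(\ell_1\) norm, which equals \(1\). Combining this with \(F_{\mathrm{Gram}}(U)\ge0\) gives \(F_{\mathrm{Gram}}(TU)\le F_{\mathrm{Gram}}(U)\).

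There is no real obstacle here. The only points needing care are that nonnegativity of \(F_{\mathrm{Gram}}\), which comes from the sum-of-squares form, is what makes the factor \((\det T)^2\le1\) decrease the value, and that \(n\le m\) is needed for nonvanishing. For \(n>m\) every maximal minor is absent and \(F_{\mathrm{Gram}}\equiv0\), consistent with Theorem~\ref{thm:normalized}.
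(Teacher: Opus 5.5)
Your proof is correct and follows the paper's argument essentially step for step: Cauchy--Binet gives ideal membership and degree $2n$, rank deficiency gives \refIZ{}, $F_{\mathrm{Gram}}(TU)=(\det T)^2F_{\mathrm{Gram}}(U)$ together with Hadamard's inequality gives \refDPI{}, and a diagonal witness in $\Delta_{n,m}$ shows the polynomial is nonzero. You also state explicitly that nonnegativity of the sum of squares is what makes the factor $(\det T)^2\le 1$ decrease the value, which the paper leaves implicit.
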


\begin{proof}
By Cauchy--Binet,
\[
\det(UU^\top)
=
\sum_{\substack{S\subseteq[m]\\|S|=n}}
\det(U_{\bullet S})^2.
\]
Hence \(F_{\mathrm{Gram}}\) has degree \(2n\) and belongs to
\(I_n^2\). It vanishes whenever \(\rank U<n\), in particular when
\(\rank U\le1\).

For every column-stochastic \(T\),
\[
F_{\mathrm{Gram}}(TU)
=
\det(T)^2F_{\mathrm{Gram}}(U).
\]
Each column \(t_j\) of \(T\) is a probability vector, so
\(\|t_j\|_2\le\|t_j\|_1=1\). Hadamard's inequality therefore gives
\(|\det T|\le1\), proving \refDPI{}. Finally, a matrix supported on
\(n\) distinct diagonal positions with positive entries summing to
one has positive Gram determinant, so the polynomial is nonzero.
\end{proof}

\begin{proof}[Completion of the proof of Theorem~\ref{thm:non-tall}]
The two globalization arguments above show that \(H\in I_n^2\) for every
\(2\le n\le m\): the case \(m=n\) uses elementary unique factorization,
while the case \(m>n\) uses
Proposition~\ref{prop:maximal-minor-square-saturation}. Since
\(H-P\in(\tau-1)\), it follows that
\[
  \overline P\in\overline I_n^{\,2}.
\]
Lemma~\ref{lem:non-tall-degree} gives the degree lower bound, and
Proposition~\ref{prop:gram-determinant} shows that it is attained.
\end{proof}

\section{Constant compensation after the determinant-square factor}
\label{app:constant-compensation-full-dpi}

We prove that, for every row-symmetric polynomial \(P\), there exists a constant
\(M<\infty\) such that
\[
F_{P,M}(U):=(\det U)^2(P(U)+M)
\]
satisfies
\[
F_{P,M}(TU)\le F_{P,M}(U)
\]
for every column-stochastic matrix \(T\) and every \(U\in\Delta_{n,n}\).

\begin{proof}
For every column-stochastic \(T\), Hadamard's inequality gives
\[
|\det T|
\le \prod_{j=1}^n\|T_{\bullet j}\|_2
\le 1,
\]
since each column is a probability vector. Equality holds exactly when
the columns are orthonormal probability vectors, hence exactly when
\(T\) is a permutation matrix. Thus \(0\le(\det T)^2\le1\), with
\((\det T)^2=1\) precisely for permutation matrices.

If \(T=Q\) is a permutation matrix, row-symmetry gives
\(P(QU)=P(U)\), and hence \(F_{P,M}(QU)=F_{P,M}(U)\). It therefore
suffices to consider \((\det T)^2<1\). Define
\[
R(T,U):=
\frac{(\det T)^2P(TU)-P(U)}
     {1-(\det T)^2}.
\]
We show that \(R\) is uniformly bounded above. Here
\(\|A\|_1:=\sum_{i,j}|A_{ij}|\) denotes the entrywise \(\ell^1\)-norm.

First consider \(T\) near a permutation matrix \(Q\). We claim that
there is a uniform constant \(C<\infty\) such that
\[
\|T-Q\|_1\le C\bigl(1-(\det T)^2\bigr)
\]
whenever \(T\) is column-stochastic and sufficiently close to \(Q\).
It suffices to prove this near \(I\), since multiplication by
\(Q^{-1}\) reduces the general case to \(I\).

For \(T\) close to \(I\), put
\(\varepsilon:=\sum_j\sum_{i\ne j}T_{ij}\). Column-stochasticity gives
\(T_{jj}-1=-\sum_{i\ne j}T_{ij}\), and therefore
\(\|T-I\|_1=2\varepsilon\). Writing \(T=I+A\), the Taylor expansion
\(\det(I+A)=1+\operatorname{tr}(A)+O(\|A\|_1^2)\) and
\(\operatorname{tr}(A)=-\varepsilon\) yield
\(\det T=1-\varepsilon+O(\varepsilon^2)\). After shrinking the
neighborhood, \(1-\det T\ge\varepsilon/2\) and \(\det T\ge0\).
Consequently,
\[
1-(\det T)^2
=(1-\det T)(1+\det T)
\ge \varepsilon/2,
\]
so \(\|T-I\|_1=2\varepsilon\le4(1-(\det T)^2)\). Since there are only
finitely many permutation matrices, the same estimate holds near all
of them with one constant \(C\).

Because \(P\) is polynomial and \(\Delta_{n,n}\) is compact, there
are constants \(L,B<\infty\) such that
\[
|P(V)-P(W)|\le L\|V-W\|_1,
\qquad
|P(V)|\le B
\]
for all \(V,W\in\Delta_{n,n}\). If \(T\) lies near a permutation
matrix \(Q\), then \(P(QU)=P(U)\), so
\[
\begin{aligned}
(\det T)^2P(TU)-P(U)
&=(\det T)^2\bigl(P(TU)-P(QU)\bigr)
  +\bigl((\det T)^2-1\bigr)P(U)\\
&\le L\|(T-Q)U\|_1+B\bigl(1-(\det T)^2\bigr)\\
&\le (LC+B)\bigl(1-(\det T)^2\bigr).
\end{aligned}
\]
Here \(\|(T-Q)U\|_1\le\|T-Q\|_1\), since \(U\) is nonnegative with
total mass one. Hence \(R(T,U)\le LC+B\) uniformly near the
permutation matrices.

It remains to consider \(T\) away from them. Let \(\mathcal S\) be
the compact set of column-stochastic \(n\times n\) matrices, and let
\(\mathcal S_0\) be the complement in \(\mathcal S\) of the
neighborhoods just chosen. Since \((\det T)^2=1\) only at permutation
matrices, compactness gives \(\eta>0\) such that
\(1-(\det T)^2\ge\eta\) on \(\mathcal S_0\). Therefore
\[
R(T,U)
\le
\frac{(\det T)^2|P(TU)|+|P(U)|}
     {1-(\det T)^2}
\le \frac{2B}{\eta}.
\]
Thus
\[
M_*(P):=
\sup_{\substack{T\in\mathcal S,\ U\in\Delta_{n,n}\\(\det T)^2<1}}
R(T,U)
<\infty.
\]
Choose \(M\ge\max\{0,M_*(P)\}\). Then, whenever
\((\det T)^2<1\),
\[
(\det T)^2P(TU)-P(U)
\le M\bigl(1-(\det T)^2\bigr),
\]
or equivalently
\[
(\det T)^2\bigl(P(TU)+M\bigr)\le P(U)+M.
\]
For permutation matrices the same relation holds with equality.
Multiplying by \((\det U)^2\ge0\) and using
\(\det(TU)=\det T\,\det U\) gives
\[
F_{P,M}(TU)\le F_{P,M}(U),
\]
as required.
\end{proof}

\end{document}